\documentclass[journal,10pt,doublecolumn]{IEEEtran}

\usepackage[utf8]{inputenc}
\usepackage[T1]{fontenc}

\usepackage{url}
\usepackage{cite}

\usepackage[cmex10]{amsmath}
\usepackage{amssymb}
\usepackage{amsfonts}
\usepackage{amsthm}
\usepackage{mathtools}

\usepackage{graphicx}
\usepackage{booktabs}
\usepackage{array}
\usepackage[table]{xcolor}

\usepackage[
    colorlinks=true,
    linkcolor=black,
    anchorcolor=black,
    citecolor=black,
    filecolor=black,
    menucolor=black,
    runcolor=black,
    urlcolor=black
]{hyperref}

\newtheorem{theorem}{Theorem}
\newtheorem{lemma}{Lemma}
\newtheorem{proposition}{Proposition}
\newtheorem{corollary}{Corollary}

\theoremstyle{definition}

\newtheorem{definition}{Definition}

\newcommand{\F}{\mathbb F}
\newcommand{\Fq}{\mathbb F_q}

\newcommand{\Acal}{\mathcal A}
\newcommand{\Bcal}{\mathcal B}
\newcommand{\Ucal}{\mathcal U}
\newcommand{\Vcal}{\mathcal V}
\newcommand{\Ncal}{\mathcal N}

\newcommand{\Span}{\operatorname{span}}
\newcommand{\rank}{\operatorname{rank}}
\newcommand{\one}{\mathbf 1}

\begin{document}

\title{Function Tables for Secure \\ Distributed Matrix Multiplication}

\author{
Rafael G. L. D'Oliveira,
Giulia Gaggero,
Arturo Jaramillo Gil, \\
Hiram H. L\'opez,
Cecilia Martínez-Reyes,
and Divyesh Vaghasiya%
\thanks{We thank CIMAT for its support via the Grant CICIMPI-2026-07: Workshop on Applications of Commutative Algebra and Coding Theory.}
\thanks{R. G. L. D'Oliveira is with the School of Mathematical and Statistical Sciences, Clemson University, Clemson, SC, USA. Email: rdolive@clemson.edu.}%
\thanks{G. Gaggero is with McMaster University. Email: gaggerog@mcmaster.ca.}%
\thanks{A. Jaramillo Gil is with the  Department of Mathematics, Centro de Investigaci\'on en Matem\'aticas, M\'exico. Email: jagil@cimat.mx. He was supported by Grant CBF2023-2024-2088.}%
\thanks{H. H. L\'opez is with the Department of Mathematics, Virginia Tech, VA, USA. Email: hhlopez@vt.edu. He was partially supported by the NSF grants DMS-2401558 and DMS-2502705, and the Commonwealth Cyber Initiative.}%
\thanks{C. Martínez-Reyes is with Universidad Autónoma de Zacatecas. Email: maria.reyes1@ues.edu.sv. She was supported by SECIHTI Grants 4003512 and CF-2023-G-33.}%
\thanks{D. Vaghasiya is with the University of South Florida, Tampa, FL, USA. Email: divyeshvaghasiya00@gmail.com.}%
}

\maketitle

\begin{abstract}
We introduce function tables, an entrywise representation of the coefficient functions that appear in the worker responses of a secure distributed matrix multiplication (SDMM) scheme. We work under the outer-product partition, with $K$ row blocks, $L$ column blocks, and privacy against any $T$ colluding workers, in the general model of linear encoding and linear decoding. In this representation, privacy is a rank condition on the data and mask coefficients, and decodability is linear independence of the desired entries modulo the nuisance space. Degree tables, cyclic-addition tables, and algebraic-geometry constructions are the special cases obtained by restricting the coefficient functions to a structured family; we impose no such restriction, so our converses bind every linear scheme. For $T=1$, we determine the exact optimum over every finite field $\Fq$: it is $KL+K+L$ when $q\geq3$, and $KL+K+L+1$ over $\F_2$, where the identity $z^2=z$ forces one more worker. For arbitrary $T$, we prove $N\geq KL+K+L$ and $N\geq\max\{K,L\}+T$ with no MDS hypothesis on the masks; the first is stronger than the previously known bound $KL+\max\{K,L\}+2T-1$ whenever $\min\{K,L\}\geq2T$. We then reduce field feasibility exactly to MDS existence: a scheme exists over $\Fq$ if and only if an $[\max\{K,L\}+T,T]$ linear MDS code does, and whenever it does, a Cartesian construction attains $N=(K+T)(L+T)$ over that same field. For $T=2$ this makes $q\geq\max\{K,L\}+1$ necessary and sufficient, and we give a projective-line construction with $N=KL+K+L+2$ whenever $KL+K+L$ divides $q-1$; for $K,L\geq2$ it matches the best known worker count while requiring only an element of order $KL+K+L$.
\end{abstract}

\begin{IEEEkeywords}
Secure distributed matrix multiplication, function tables, finite fields, information-theoretic security.
\end{IEEEkeywords}

\section{Introduction}
\label{sec:introduction}

Secure distributed matrix multiplication (SDMM) allows a user with two matrices $A\in\Fq^{m\times n}$ and $B\in\Fq^{n\times p}$ over a finite field $\Fq$ to compute $AB$ with the help of $N$ workers while keeping the input matrices private. The user encodes the two matrices, sends one encoded pair to each worker, and combines the returned products to recover the desired result. The workers are honest but curious: they follow the protocol, but any set of at most $T$ workers may collude and try to learn information about the inputs. We study how many workers are required and over which fields such schemes can be constructed for given partitioning and privacy parameters.

We consider the outer-product partition. Assume, for simplicity, that $K$ divides $m$ and $L$ divides $p$. Partition
\[
A=
\begin{bmatrix}
A_1\\
A_2\\
\vdots\\
A_K
\end{bmatrix},
\qquad
B=
\begin{bmatrix}
B_1 & B_2 & \cdots & B_L
\end{bmatrix},
\]
where $A_k\in\Fq^{m/K\times n}$ and $B_\ell\in\Fq^{n\times p/L}$. Then
\[
AB=
\begin{bmatrix}
A_1B_1 & A_1B_2 & \cdots & A_1B_L\\
A_2B_1 & A_2B_2 & \cdots & A_2B_L\\
\vdots & \vdots & \ddots & \vdots\\
A_KB_1 & A_KB_2 & \cdots & A_KB_L
\end{bmatrix}.
\]
For every positive integer $r$, write $[r]=\{1,\ldots,r\}$. Computing $AB$ is therefore equivalent to computing the $KL$ block products $A_kB_\ell$, for $k\in[K]$ and $\ell\in[L]$.

We work in the general class of linear SDMM schemes formalized by Makkonen and Hollanti~\cite{10415397}. To provide privacy, the user introduces random masks $R_1,\ldots,R_T$ of the same size as the blocks $A_k$ and random masks $S_1,\ldots,S_T$ of the same size as the blocks $B_\ell$. All these masks are mutually independent, uniformly distributed over their respective matrix spaces, and independent of both $A$ and $B$. The user sends to worker $i$ two encoded matrices of the form
\begin{equation}
X_i
=
\sum_{k=1}^K a_k(i)A_k
+
\sum_{t=1}^T u_t(i)R_t,
\label{eq:X_encoding}
\end{equation}
and
\begin{equation}
Y_i
=
\sum_{\ell=1}^L b_\ell(i)B_\ell
+
\sum_{t=1}^T v_t(i)S_t,
\label{eq:Y_encoding}
\end{equation}
where $a_k(i),b_\ell(i),u_t(i),v_t(i)\in\Fq$. Worker $i$ computes $Z_i=X_iY_i$ and returns $Z_i$ to the user.

\begin{figure*}[t]
\centering

\def\FunctionTableScale{0.90}

\newcommand{\ScaledFunctionTable}[1]{%
  \makebox[\linewidth][c]{%
    \scalebox{\FunctionTableScale}{$#1$}%
  }%
}

\begingroup
\setlength{\arraycolsep}{2pt}
\renewcommand{\arraystretch}{1.15}

\begin{minipage}[t]{0.32\textwidth}
\centering
\textbf{(a) General function table}

\medskip

\ScaledFunctionTable{
\begin{array}{c|cccccc}
 & b_1 & \cdots & b_L & v_1 & \cdots & v_T\\
\hline
a_1
& \cellcolor{red!12}a_1b_1
& \cellcolor{red!12}\cdots
& \cellcolor{red!12}a_1b_L
& a_1v_1
& \cdots
& a_1v_T\\
\vdots
& \cellcolor{red!12}\vdots
& \cellcolor{red!12}\ddots
& \cellcolor{red!12}\vdots
& \vdots
& \ddots
& \vdots\\
a_K
& \cellcolor{red!12}a_Kb_1
& \cellcolor{red!12}\cdots
& \cellcolor{red!12}a_Kb_L
& a_Kv_1
& \cdots
& a_Kv_T\\
u_1
& u_1b_1
& \cdots
& u_1b_L
& u_1v_1
& \cdots
& u_1v_T\\
\vdots
& \vdots
& \ddots
& \vdots
& \vdots
& \ddots
& \vdots\\
u_T
& u_Tb_1
& \cdots
& u_Tb_L
& u_Tv_1
& \cdots
& u_Tv_T
\end{array}
}
\end{minipage}
\hfill
\begin{minipage}[t]{0.32\textwidth}
\centering
\textbf{(b) $T=1$ and $q\geq3$}

\medskip

\ScaledFunctionTable{
\begin{array}{c|ccccc}
 & z & y_1 & \cdots & y_{L-1} & \one\\
\hline
z
& \cellcolor{red!12}z^2
& \cellcolor{red!12}zy_1
& \cellcolor{red!12}\cdots
& \cellcolor{red!12}zy_{L-1}
& z\\
x_1
& \cellcolor{red!12}x_1z
& \cellcolor{red!12}x_1y_1
& \cellcolor{red!12}\cdots
& \cellcolor{red!12}x_1y_{L-1}
& x_1\\
\vdots
& \cellcolor{red!12}\vdots
& \cellcolor{red!12}\vdots
& \cellcolor{red!12}\ddots
& \cellcolor{red!12}\vdots
& \vdots\\
x_{K-1}
& \cellcolor{red!12}x_{K-1}z
& \cellcolor{red!12}x_{K-1}y_1
& \cellcolor{red!12}\cdots
& \cellcolor{red!12}x_{K-1}y_{L-1}
& x_{K-1}\\
\one
& z
& y_1
& \cdots
& y_{L-1}
& \one
\end{array}
}
\end{minipage}
\hfill
\begin{minipage}[t]{0.32\textwidth}
\centering
\textbf{(c) $T=1$ and $q=2$}

\medskip

\ScaledFunctionTable{
\begin{array}{c|ccccc}
 & y_1 & y_2 & \cdots & y_L & \one\\
\hline
x_1
& \cellcolor{red!12}x_1y_1
& \cellcolor{red!12}x_1y_2
& \cellcolor{red!12}\cdots
& \cellcolor{red!12}x_1y_L
& x_1\\
x_2
& \cellcolor{red!12}x_2y_1
& \cellcolor{red!12}x_2y_2
& \cellcolor{red!12}\cdots
& \cellcolor{red!12}x_2y_L
& x_2\\
\vdots
& \cellcolor{red!12}\vdots
& \cellcolor{red!12}\vdots
& \cellcolor{red!12}\ddots
& \cellcolor{red!12}\vdots
& \vdots\\
x_K
& \cellcolor{red!12}x_Ky_1
& \cellcolor{red!12}x_Ky_2
& \cellcolor{red!12}\cdots
& \cellcolor{red!12}x_Ky_L
& x_K\\
\one
& y_1
& y_2
& \cdots
& y_L
& \one
\end{array}
}
\end{minipage}

\endgroup

\vspace{1em}

\caption{Function tables for secure distributed matrix multiplication with $K$ row blocks of $A$ and $L$ column blocks of $B$. Row and column labels are the coefficient functions used to encode the two inputs, and each entry is their pointwise product. Shaded entries correspond to the desired block products $A_kB_\ell$; unshaded entries correspond to products involving random masks. Panel~(a) shows the general pattern with $T$ masks per input and a target of privacy against any $T$ colluding workers. Panels~(b) and~(c) show optimal constructions for privacy against any single worker: panel~(b) uses $KL+K+L$ workers over any $\Fq$ with $q\geq3$, while panel~(c) uses $KL+K+L+1$ workers over $\F_2$. Here $\one$ denotes the constant-one function. Sharing $z$ between the two encodings in panel~(b) saves one nuisance dimension. Over $\F_2$, the identity $z^2=z$ would make a desired entry coincide with a nuisance entry, so panel~(c) uses separate coordinate functions on the two sides.}
\label{fig:function_table_and_t1_constructions}
\end{figure*}

Privacy requires that the encoded matrices observed by any set of at most $T$ workers reveal no information about the inputs. More precisely, for every joint distribution of $(A,B)$ and every subset $\tau\subseteq[N]$ with $|\tau|\leq T$, we require $I(A,B;X_\tau,Y_\tau)=0$, where $X_\tau=(X_i)_{i\in\tau}$ and $Y_\tau=(Y_i)_{i\in\tau}$.

As in Makkonen and Hollanti~\cite{10415397}, we consider linear decoding. For every $k\in[K]$ and $\ell\in[L]$, there exists a function $\lambda^{(k,\ell)}:[N]\to\Fq$ such that
\[
A_kB_\ell
=
\sum_{i=1}^N\lambda^{(k,\ell)}(i)Z_i
\]
for every choice of the inputs and random masks. Throughout the paper, we refer to this property simply as decodability.

Most of the best-performing constructions known for outer-product SDMM are based on degree tables, cyclic variants, or closely related algebraic evaluation methods. In an ordinary degree table~\cite{9004505,9508383}, each data block and each random mask is assigned its own polynomial degree, and multiplying two encoded polynomials together simply adds the degrees of their terms; the table records these sums, and the number of workers the scheme needs is the number of distinct sums that appear. Cyclic constructions evaluate the same monomials at roots of unity instead of arbitrary field elements, so degrees combine cyclically rather than as ordinary sums~\cite{11195364}. Algebraic-geometry constructions replace monomials by functions from an algebraic curve, so the table records how these functions multiply rather than how degrees add~\cite{10858081}. In each case, the construction begins by choosing a structured family of coefficient functions and studying the products among them.

We introduce function tables to unify these constructions within a single framework. Every linear SDMM scheme in our model has a corresponding function table: its rows and columns are the coefficient functions used in the two encodings, and its entries are their pointwise products. Degree tables, cyclic-addition tables, and other algebraic tables are all special cases, obtained by restricting the coefficient functions to the corresponding families. Function tables are a tool for constructing general linear schemes and proving bounds.

With this framework, we characterize privacy and decodability, determine the exact minimum number of workers for $T=1$, prove general worker lower bounds, characterize exactly the fields over which a scheme in our model can exist, and give constructions with different tradeoffs between field size and worker count.

\subsection{Related Work}
\label{sec:related_work}

Early information-theoretic formulations of SDMM appeared
in~\cite{8647313,8382305}. Subsequent work considered different matrix
partitions, straggler models, and security requirements
\cite{8675905,8985291,9229375,9539194}. Other objectives include upload
and download cost~\cite{8989342,9440909} and total computation
time~\cite{9162296}. We focus on the number of worker responses needed
for decoding. In the fixed-block, one-product, full-response model
studied here, this is the number of workers used by the scheme. This
equivalence does not extend to multi-message or compressed-response
schemes.

Polynomial codes were introduced for nonsecure distributed matrix
multiplication in~\cite{yu2017polynomial} and developed further
in~\cite{8006963,8437871,8765375,8949560,yu2019lagrange,9519610,10786350}.
For SDMM under the outer-product partition, GASP and degree tables
established systematic constructions and converse bounds based on
ordinary sums of degrees~\cite{9004505,9508383}. Other direct
competitors include A3S~\cite{8675905}, root-of-unity
codes~\cite{9965858}, the single-multiplication construction
in~\cite{9523544}, and the cyclic CAT and DOG
constructions~\cite{11195364}. For $T=1$, several of these constructions
attain $KL+K+L$ over suitable fields, but prior work does not determine
the optimum over the full linear model for every field.

For $T=2$ and $K,L\geq2$, set $M=KL+K+L$. GASP uses $M+3$ workers,
whereas CATx uses $M+2$. After choosing the better orientation, A3S,
root-of-unity codes, and the single-multiplication construction
of~\cite{9523544} use $M+\min\{K,L\}+1$. The one-product specialization
of secure bivariate polynomial coding has the same count when its mask
evaluations satisfy the required rank condition~\cite{9681059}.
Algebraic-geometry constructions provide a different approach by
replacing monomials with functions on curves~\cite{10858081}. These are
the main direct comparisons for our projective-line construction.

A general framework for linear SDMM based on linear codes and their
star products was developed in~\cite{10415397}. Under MDS mask
subcodes, its converse specializes in our setting to
$N\geq KL+\max\{K,L\}+2T-1$. Function tables give an entrywise
representation of this linear model in which the desired products and
the nuisance space are explicit. Related structural results on star
products appear in~\cite{6594847,7137642,7133155}. The field-size
question is closely connected to the existence of linear MDS
codes~\cite{ball2012sets}. A product structure related to our Cartesian
construction appears in the square construction of~\cite{8647313} and
in the naive full-grid specialization of secure bivariate polynomial
coding~\cite{9681059}.

Other partitions and decoding requirements lead to different
comparisons. DFT codes, Secure MatDot, and HerA study the inner-product
partition and related field-size questions
\cite{9732990,9965839,10206764,10619357}. General grid partitions,
straggler recovery, and Byzantine robustness are considered
in~\cite{e25020266,9965858,10478018,11653890}. Precomputation can reduce
the online worker count by allowing the user to compute the mask--mask
products in advance~\cite{10619695}; this changes the decoding model,
since our decoder must recover from the worker responses alone.

Further variations include trace downloads, which reduce the amount
downloaded from each worker and may use additional workers
\cite{9606447}, private and batch matrix multiplication
\cite{8832193,8754796,9696353,9539194}, arbitrary collusion
patterns~\cite{9930803}, and Gram matrix products~\cite{10161614}.
Quantum and analog versions of SDMM have also been studied
in~\cite{nomeir2025quantum,11462243,11505934}.

\subsection{Summary of Results}
\label{sec:summary_results}

Our main results are as follows.

\begin{itemize}

\item We introduce function tables as a tool for studying linear SDMM schemes. Every linear scheme in our model has a function table, while degree tables, cyclic-addition tables, and related algebraic constructions arise as special cases.

\item For $T=1$, we determine the minimum number of workers for every $K$, $L$, and finite field $\Fq$. The optimum is $KL+K+L$ when $q\geq3$ and $KL+K+L+1$ when $q=2$.

\item For arbitrary $T$, we prove that every private and linearly decodable scheme satisfies $N\geq\max\{KL+K+L,T+\max\{K,L\}\}$. We also show that the mask coefficient matrices contain restrictions generating $[K+T,T]$ and $[L+T,T]$ linear MDS codes.

\item We characterize which fields support a scheme in our model. For fixed $K,L,T$, such a scheme exists over $\Fq$ if and only if an $[\max\{K,L\}+T,T]$ linear MDS code over $\Fq$ exists. Whenever this condition holds, the Cartesian construction gives a scheme with $N=(K+T)(L+T)$ workers. In particular, for $T=2$, a scheme exists if and only if $q\geq\max\{K,L\}+1$.

\item For $T=2$, we give a projective-line construction with $N=KL+K+L+2$ workers whenever $KL+K+L\mid(q-1)$. For $K,L\geq2$, it matches CATx in worker count while requiring an element of order $KL+K+L$, rather than $KL+K+L+2$.

\end{itemize}

\section{Main Results}
\label{sec:main_results}

We now introduce the function-table framework and state the main results proved in the later sections.

\subsection{The Function-Table Framework}
\label{sec:function_tables}

Regard the encoding coefficients in \eqref{eq:X_encoding} and \eqref{eq:Y_encoding} as functions $a_k,u_t,b_\ell,v_t:[N]\to\Fq$. For functions $f,g:[N]\to\Fq$, let $fg$ denote their pointwise product, defined by $(fg)(i)=f(i)g(i)$. When a worker multiplies its two encoded matrices, the corresponding coefficient functions multiply pointwise. We organize these products in a function table.

\begin{definition}
\label{def:function_table}
The \emph{function table} of a linear SDMM scheme has rows labeled by $a_1,\ldots,a_K,u_1,\ldots,u_T$ and columns labeled by $b_1,\ldots,b_L,v_1,\ldots,v_T$. Each entry is the pointwise product of its row and column labels.
\end{definition}

Expanding the response $Z_i=X_iY_i$ gives
\begin{align*}
Z_i
&=
\sum_{k=1}^K\sum_{\ell=1}^L
a_k(i)b_\ell(i)A_kB_\ell\\
&\quad+
\sum_{k=1}^K\sum_{t=1}^T
a_k(i)v_t(i)A_kS_t\\
&\quad+
\sum_{s=1}^T\sum_{\ell=1}^L
u_s(i)b_\ell(i)R_sB_\ell\\
&\quad+
\sum_{s=1}^T\sum_{t=1}^T
u_s(i)v_t(i)R_sS_t.
\end{align*}

The functions $a_kb_\ell$ are the coefficients of the desired products $A_kB_\ell$. Every other entry is the coefficient of a term involving at least one random mask.

Let $\Fq^{[N]}$ be the vector space over $\Fq$ of all functions from $[N]$ to $\Fq$. In this space, define $\Acal=\Span\{a_1,\ldots,a_K\}$, $\Ucal=\Span\{u_1,\ldots,u_T\}$, $\Bcal=\Span\{b_1,\ldots,b_L\}$, and  $\Vcal=\Span\{v_1,\ldots,v_T\}$. For subspaces $\mathcal X,\mathcal Y\subseteq\Fq^{[N]}$, let $\mathcal X\mathcal Y=\Span\{fg:f\in\mathcal X,\ g\in\mathcal Y\}$. The desired entries span $\Acal\Bcal$, while the nuisance entries span $\Ncal=\Acal\Vcal+\Ucal\Bcal+\Ucal\Vcal$.

\begin{theorem}
\label{thm:decodability_condition}
The scheme is linearly decodable if and only if the desired entries $a_kb_\ell$ are linearly independent modulo $\Ncal$. Equivalently, no nonzero linear combination of the desired entries belongs to $\Ncal$.
\end{theorem}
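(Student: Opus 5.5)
Decodability requires a coefficient vector $\lambda^{(k,\ell)}$ with $\sum_i\lambda^{(k,\ell)}(i)Z_i=A_kB_\ell$ identically in the inputs and masks. I plan to rewrite this as a family of linear conditions on $\lambda^{(k,\ell)}$ and then dualize them. Use the standard inner product $\langle f,\lambda\rangle=\sum_i f(i)\lambda(i)$ on $\Fq^{[N]}$. By the expansion of $Z_i$ above,
\[
\sum_i\lambda(i)Z_i=\sum_{k',\ell'}\langle a_{k'}b_{\ell'},\lambda\rangle A_{k'}B_{\ell'}+\sum\langle a_kv_t,\lambda\rangle A_kS_t+\sum\langle u_sb_\ell,\lambda\rangle R_sB_\ell+\sum\langle u_sv_t,\lambda\rangle R_sS_t .
\]
The first step is to show that this equals $A_kB_\ell$ for all inputs and masks if and only if two things hold: $\langle a_{k'}b_{\ell'},\lambda\rangle=\delta_{(k',\ell'),(k,\ell)}$, and $\lambda$ is orthogonal to every nuisance entry.

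The direction ``if'' is immediate. For ``only if'' I will specialize the inputs. Take all blocks and masks zero except $A_{k'}=E$ and $B_{\ell'}=F$, where $E$ and $F$ are matrices with $EF\neq0$; for instance $E$ and $F$ each have a single $1$ placed so that the product is a nonzero matrix unit. Only the monomial $A_{k'}B_{\ell'}$ survives, so its coefficient is forced. The same choice with a mask in place of one factor forces each nuisance coefficient to vanish. This is the one place where some care is needed: it requires the block dimensions to be positive. The step is routine and is the main technical point, since the products $A_{k'}B_{\ell'}$, $A_kS_t$, and so on are not independent as matrices for particular inputs; only as bilinear forms in independent variables.

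Hence decodability is equivalent to the following: for each $(k,\ell)$ there is $\lambda\in\Ncal^{\perp}$ whose pairings with the $KL$ desired functions give the standard basis vector $e_{(k,\ell)}$. Consider the linear map
\[
\Phi:\Ncal^\perp\to\Fq^{KL},\qquad \lambda\mapsto(\langle a_{k'}b_{\ell'},\lambda\rangle)_{k',\ell'} .
\]
Decodability is exactly surjectivity of $\Phi$. Surjectivity fails if and only if some nonzero $c\in\Fq^{KL}$ annihilates the image. That says $\langle\sum c_{k'\ell'}a_{k'}b_{\ell'},\lambda\rangle=0$ for all $\lambda\in\Ncal^\perp$, i.e. $\sum c_{k'\ell'}a_{k'}b_{\ell'}\in(\Ncal^\perp)^\perp$.

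Over a finite field the standard form is nondegenerate on $\Fq^{[N]}$, so $(\Ncal^\perp)^\perp=\Ncal$ by dimension counting, even though $\Ncal\cap\Ncal^\perp$ may be nonzero. So decodability fails exactly when a nontrivial combination of the desired entries lies in $\Ncal$, which is the stated equivalence. The alternative phrasing, linear independence modulo $\Ncal$, is the same statement: the images of the $a_kb_\ell$ in $\Fq^{[N]}/\Ncal$ are independent.
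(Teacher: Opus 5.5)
Your proposal is correct and follows essentially the same route as the paper. Both arguments use the same specialization of inputs (one data or mask block on each side, all else zero) to turn decodability into the existence, for each $(k,\ell)$, of $\lambda$ with $\langle\lambda,a_{k'}b_{\ell'}\rangle=\delta_{(k',\ell'),(k,\ell)}$ and $\lambda$ annihilating $\Ncal$, and then finish with a duality argument. The difference is only packaging: you handle both directions at once via surjectivity of $\Phi$ on $\Ncal^{\perp}$ and the identity $(\Ncal^{\perp})^{\perp}=\Ncal$, whereas the paper proves sufficiency by extending a functional defined on $\Ncal\oplus\Span\{a_kb_\ell\}$ and necessity by pairing a putative relation with the decoding vectors $\lambda^{(k,\ell)}$.
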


Thus, decodability can be read directly from the function table: after the nuisance space is removed, the $KL$ desired entries must remain linearly independent.

We next state the corresponding privacy condition. For $\tau\subseteq[N]$, define
\[
\begin{aligned}
\mathsf A_\tau&=(a_k(i))_{i\in\tau,\,k\in[K]},
&
\mathsf U_\tau&=(u_t(i))_{i\in\tau,\,t\in[T]},\\
\mathsf B_\tau&=(b_\ell(i))_{i\in\tau,\,\ell\in[L]},
&
\mathsf V_\tau&=(v_t(i))_{i\in\tau,\,t\in[T]}.
\end{aligned}
\]

\begin{theorem}
\label{thm:privacy_condition}
Consider a linear SDMM scheme whose random masks are mutually independent, uniformly distributed, and independent of $(A,B)$. For any fixed $\tau\subseteq[N]$, the condition $I(A,B;X_\tau,Y_\tau)=0$ holds for every joint distribution of $(A,B)$ if and only if $\rank[\mathsf A_\tau\ \mathsf U_\tau]
=
\rank\mathsf U_\tau$ and $\rank[\mathsf B_\tau\ \mathsf V_\tau]
=
\rank\mathsf V_\tau$.
\end{theorem}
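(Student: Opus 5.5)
The plan is to reduce the statement to two independent one-sided problems and treat each by a standard one-time-pad argument. The observation $(X_\tau,Y_\tau)$ splits as $X_\tau=(\mathsf A_\tau\otimes I)\,\mathbf A+(\mathsf U_\tau\otimes I)\,\mathbf R$ and $Y_\tau=(\mathsf B_\tau\otimes I)\,\mathbf B+(\mathsf V_\tau\otimes I)\,\mathbf S$. Here $\mathbf A=(A_1,\dots,A_K)$, $\mathbf R=(R_1,\dots,R_T)$ are stacked, and similarly for $\mathbf B,\mathbf S$. Since everything acts entrywise on the blocks, each matrix entry position can be handled separately. The vector $X_\tau$ is $\mathsf A_\tau\alpha+\mathsf U_\tau\rho$ in each coordinate, with $\alpha\in\Fq^K$ and $\rho$ uniform on $\Fq^T$.

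\emph{Sufficiency.} Assume both rank equalities. Then $\operatorname{col}(\mathsf A_\tau)\subseteq\operatorname{col}(\mathsf U_\tau)$, so there is a matrix $G$ with $\mathsf A_\tau=\mathsf U_\tau G$. Hence $X_\tau=\mathsf U_\tau(G\alpha+\rho)$ coordinatewise. Because $\rho$ is uniform and independent of $(A,B)$, the vector $G\alpha+\rho$ is uniform on $\Fq^T$ and independent of $\alpha$. Thus $X_\tau$ is distributed as $\mathsf U_\tau\rho$, uniform on $\operatorname{col}(\mathsf U_\tau)$, conditionally on any value of $(A,B)$. The same holds for $Y_\tau$ with $\mathbf S$. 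Since $\mathbf R$ and $\mathbf S$ are mutually independent and independent of $(A,B)$, conditionally on $(A,B)=(A^0,B^0)$ the pair $(X_\tau,Y_\tau)$ has a product law that does not depend on $(A^0,B^0)$. So $(X_\tau,Y_\tau)$ is independent of $(A,B)$, and $I=0$ for every input distribution.

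\emph{Necessity.} Suppose, say, $\rank[\mathsf A_\tau\ \mathsf U_\tau]>\rank\mathsf U_\tau$. Then some column $\mathsf A_\tau e$ with $e\in\Fq^K$ lies outside $\operatorname{col}(\mathsf U_\tau)$. Take the input distribution with $B=0$ and $A$ uniform over the two choices $\mathbf A=0$ and $\mathbf A=e\otimes E$, where $E$ is a fixed nonzero matrix unit. Look at the coordinate where $E$ is nonzero. There $X_\tau$ lies in $\operatorname{col}(\mathsf U_\tau)$ in the first case and in the disjoint coset $\mathsf A_\tau e+\operatorname{col}(\mathsf U_\tau)$ in the second. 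So $A$ is a deterministic function of $X_\tau$ and $I(A,B;X_\tau,Y_\tau)\ge I(A;X_\tau)=1$ bit $>0$. The $B$ side is symmetric.

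The main obstacle is bookkeeping rather than mathematics: writing the matrix-valued encodings cleanly as the coordinatewise linear maps $\mathsf A_\tau\alpha+\mathsf U_\tau\rho$. This requires tracking that row/column indices of each block act independently and that the masks are uniform on entire matrix spaces. With this, uniformity of $\rho$ holds in each entry position independently. I would state this reduction as a short preliminary observation and then apply the two arguments above.
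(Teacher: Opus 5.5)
Your proposal is correct and follows essentially the same route as the paper. Sufficiency writes $\mathsf A_\tau=\mathsf U_\tau G$ so the data are absorbed into uniformly shifted masks, and necessity uses a two-point distribution on one $A$-side block that the colluding workers can distinguish; the differences are only cosmetic. You work entrywise and separate the two cases by coset disjointness, whereas the paper works at the block level and uses a vector $\lambda$ with $\lambda^{\mathsf T}\mathsf U_\tau=0$ and $\lambda^{\mathsf T}\mathsf A_\tau\neq0$, so that $\sum_{i\in\tau}\lambda(i)X_i=cA_k$ reveals $A_k$.
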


The rank conditions say that every linear combination of the observations in $\tau$ that cancels the masks must also cancel the data.

\begin{definition}[MDS mask condition]
\label{def:mds_mask_condition}
The mask functions $u_1,\ldots,u_T$ satisfy the \emph{$T$-MDS mask condition} if $\rank\mathsf U_\tau=|\tau|$ for every $\tau\subseteq[N]$ with $|\tau|\leq T$. The condition for $v_1,\ldots,v_T$ is defined analogously.
\end{definition}

\begin{corollary}
\label{cor:mds_privacy}
If the mask functions on both sides satisfy the $T$-MDS mask condition, then the scheme is $T$-private.
\end{corollary}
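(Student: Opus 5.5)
The plan is to derive Corollary~\ref{cor:mds_privacy} directly from Theorem~\ref{thm:privacy_condition}. Recall that $T$-privacy means $I(A,B;X_\tau,Y_\tau)=0$ for every joint distribution of $(A,B)$ and every $\tau\subseteq[N]$ with $|\tau|\leq T$. By Theorem~\ref{thm:privacy_condition}, it therefore suffices to fix an arbitrary $\tau$ with $|\tau|\leq T$ and verify two rank identities:
\[
\rank[\mathsf A_\tau\ \mathsf U_\tau]=\rank\mathsf U_\tau
\qquad\text{and}\qquad
\rank[\mathsf B_\tau\ \mathsf V_\tau]=\rank\mathsf V_\tau .
\]
The two sides are symmetric, so I will argue only for the $(\mathsf A,\mathsf U)$ side and then note that the $(\mathsf B,\mathsf V)$ side is identical with $b_\ell,v_t$ in place of $a_k,u_t$.

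The key step is a sandwich of ranks. The matrix $[\mathsf A_\tau\ \mathsf U_\tau]$ has exactly $|\tau|$ rows, so its rank is at most $|\tau|$. It also contains $\mathsf U_\tau$ as a column submatrix, so its rank is at least $\rank\mathsf U_\tau$. The $T$-MDS mask condition (Definition~\ref{def:mds_mask_condition}) gives $\rank\mathsf U_\tau=|\tau|$ whenever $|\tau|\leq T$. Combining these,
\[
|\tau|=\rank\mathsf U_\tau\leq\rank[\mathsf A_\tau\ \mathsf U_\tau]\leq|\tau|,
\]
which forces equality throughout.

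Applying the same argument to the $(\mathsf B,\mathsf V)$ side establishes both hypotheses of Theorem~\ref{thm:privacy_condition} for every such $\tau$, and hence $T$-privacy. The only bookkeeping point is the empty set $\tau=\emptyset$, where all matrices have zero rows and the identities hold trivially. I do not expect any real obstacle: all the information-theoretic content is already contained in Theorem~\ref{thm:privacy_condition}, and what remains is the elementary row-count bound on rank.
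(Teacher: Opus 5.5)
Your proposal is correct and follows essentially the same route as the paper: both reduce to Theorem~\ref{thm:privacy_condition} and use $\rank\mathsf U_\tau=|\tau|$ to obtain the rank identities. The paper phrases this as the columns of $\mathsf U_\tau$ spanning all of $\Fq^\tau$, which is equivalent to your row-count sandwich $|\tau|=\rank\mathsf U_\tau\leq\rank[\mathsf A_\tau\ \mathsf U_\tau]\leq|\tau|$.
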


Several constructions from the literature arise by restricting the coefficient functions to families with simple multiplication rules. Evaluated univariate monomials give ordinary degree tables, while evaluation on a multiplicative subgroup gives cyclic-addition tables. Product evaluation sets give Cartesian tables, and evaluations of functions on algebraic curves give algebraic-geometry tables described by star products. Thus, the same privacy and decodability conditions apply to degree-table, cyclic, Cartesian, and algebraic-geometry constructions~\cite{9508383,11195364,9681059,10415397,10858081,10206764}.

\subsection[Optimal Schemes for T=1]{Optimal Schemes for $T=1$}
\label{sec:t1_statement}

Our first application of the framework resolves the case $T=1$ over every finite field.

\begin{theorem}
\label{thm:t1_optimality}
For $T=1$, the minimum number of workers among all private and linearly decodable schemes over $\Fq$ is $KL+K+L$ when $q\geq3$ and $KL+K+L+1$ when $q=2$.
\end{theorem}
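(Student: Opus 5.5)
The plan is to prove achievability with the two tables of Fig.~\ref{fig:function_table_and_t1_constructions}, and then a matching converse by dimension counting in $\Fq^{[N]}$.

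\emph{Achievability.} Both constructions can be checked with Corollary~\ref{cor:mds_privacy} and Theorem~\ref{thm:decodability_condition}.

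For panel~(b), with $q\ge 3$, set $u=v=\one$, $a=(z,x_1,\dots,x_{K-1})$ and $b=(z,y_1,\dots,y_{L-1})$.
\begin{itemize}
\item Take $N=KL+K+L$ workers, with evaluation points chosen so that the $KL+K+L$ table entries are linearly independent functions. The distinct entries are the $KL$ desired ones together with $z,x_i,y_j,\one$.
\item One concrete choice is to let the evaluation map identify $\Fq^{[N]}$ with the span of the monomials $\{z^2,zy_j,x_iz,x_iy_j,z,x_i,y_j,1\}$. This is possible because this set of monomials is linearly independent as functions on a suitable point set.
\item Independence needs $z^2\ne z$ as functions on the chosen points. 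This is exactly where $q\ge3$ enters, since we need a value of $z$ outside $\{0,1\}$.
\item Privacy holds because $u=v=\one$ never vanishes.
\end{itemize}

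For panel~(c), over $\F_2$, the same argument works with separate coordinates. The entries $x_ky_\ell,x_k,y_\ell,\one$ are $KL+K+L+1$ independent functions, for instance on the points $0$, $e_k$, $e'_\ell$ and $e_k+e'_\ell$.

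\emph{Converse.} Let $T=1$ and $S=\operatorname{supp}u\cap\operatorname{supp}v$.
\begin{itemize}
\item By Theorem~\ref{thm:privacy_condition} with $|\tau|=1$, every $a_k$ vanishes wherever $u$ does, and every $b_\ell$ vanishes wherever $v$ does.
\item Hence all of $\Acal\Bcal$, $\Acal v$ and $u\Bcal$ are supported in $S$. On $S$ write $a=u\alpha$ and $b=v\beta$.
\item By Theorem~\ref{thm:decodability_condition}, $N\ge \dim(\Acal\Bcal+\Ncal)=KL+\dim\Ncal$. It therefore suffices to show $\dim\Ncal\ge K+L$, and $\dim\Ncal\ge K+L+1$ over $\F_2$.
\end{itemize}

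The converse then proceeds in four steps.
\begin{enumerate}
\item \emph{$\dim\Acal v=K$ and $\dim u\Bcal=L$.} Suppose $av=0$. Then $ab_\ell=0$ for all $\ell$, because $b_\ell$ is supported where $v\neq0$. This is a nonzero relation among desired entries, contradicting decodability. The same argument applies to $u\Bcal$.
\item \emph{$\dim(\Acal v\cap u\Bcal)\le1$.} An element of the intersection is $av=ub$, which on $S$ means $\alpha=\beta$. Given two such pairs $(a,b)$ and $(a',b')$, we get $ab'=uv\alpha\alpha'=a'b$. Since $a\otimes b'-a'\otimes b$ is nonzero in $\Acal\otimes\Bcal$ unless the pairs are proportional, this contradicts independence of the desired entries.
\item \emph{The main obstacle: if the intersection is spanned by $a_0v=ub_0$ with $a_0\ne0$, then $uv\notin\Acal v+u\Bcal$.} Suppose $uv=av+ub$. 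On $S$ this gives $\alpha+\beta=1$, and also $\alpha_0=\beta_0$. Then
\[
a_0b+ab_0=uv\alpha_0(\beta+\alpha)=uv\alpha_0=a_0v,
\]
where the last equality uses that $a_0v$ is supported in $S$. So a combination of desired entries lies in $\Ncal$, and it remains to check that it is nonzero in $\Acal\otimes\Bcal$.
\begin{itemize}
\item If $a_0\otimes b+a\otimes b_0=0$, then, since $a_0\neq0$ and $b_0\neq0$ (as $a_0v\neq0$ by Step~1), we must have $a=ca_0$ and $b=-cb_0$.
\item Then $av+ub=ca_0v-cub_0=0\ne uv$, which is a contradiction, because $uv\neq0$ by Step~1.
\end{itemize}
These cases give $\dim\Ncal\ge K+L$ in every case: either the intersection is trivial, or it is one-dimensional and $uv$ contributes a new dimension.
\item \emph{The field $\F_2$.} Here $u=v=\one$ on $S$, so $\alpha=a$ and $\beta=b$ there. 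Equality $\dim\Ncal=K+L$ would force the intersection to be spanned by some $a_0v=ub_0$ with $a_0=b_0=:z$ on $S$. Then
\[
a_0b_0=z^2=z=a_0v\in\Ncal,
\]
again a nonzero desired relation. Therefore $\dim\Ncal\ge K+L+1$ over $\F_2$.
\end{enumerate}
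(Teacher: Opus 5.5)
Your achievability argument and Steps 1--3 of the converse are correct. For panel~(b), the ``suitable point set'' needs one sentence of justification: when $q\geq3$, all $KL+K+L$ entries are distinct reduced monomials on $\Fq^{K+L-1}$, hence independent there, and you then choose $N$ points that give a full-rank evaluation matrix. Your bookkeeping on the support $S$ is a valid alternative to the paper's route, which normalizes the masks to $u=v=\one$. Your two facts, $\dim(\Acal v\cap u\Bcal)\leq1$ and $uv\notin\Acal v+u\Bcal$ whenever the intersection is nonzero, replace the paper's single bound $\dim(\mathcal P\cap\mathcal Q)\leq2$, where $\mathcal P=\Span\{\one,a_1,\ldots,a_K\}$ and $\mathcal Q=\Span\{\one,b_1,\ldots,b_L\}$.

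The gap is in Step~4. Since $\Ncal=(\Acal v+u\Bcal)+\Span\{uv\}$, Steps 1--3 allow $\dim\Ncal=K+L$ in two ways.
\begin{enumerate}
\item $\Acal v\cap u\Bcal$ is one-dimensional.
\item $\Acal v\cap u\Bcal=\{0\}$ and $uv\in\Acal v+u\Bcal$.
\end{enumerate}
You assert that equality forces the first case. Your argument only rules out that case; the second is never addressed.

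The second case is not excluded by anything before decodability is invoked. Over $\F_2$, take $K=L=1$, $u=v=\one$, $a_1=z$ for any nonconstant $z$, and $b_1=\one+z$. The intersection is trivial and $\one=a_1+b_1$, so $\dim\Ncal=2$.

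The fix uses the binary identity again. Suppose $uv=av+ub$ with $a\in\Acal$ and $b\in\Bcal$.
\begin{itemize}
\item On $S$ this gives $\beta=1+\alpha$. Hence $ab=uv\,\alpha(1+\alpha)=0$ on $S$, because $\alpha^2=\alpha$. Also $ab=0$ off $S$.
\item So $ab=0$ lies in $\Ncal$. Its coefficient array in the desired entries is nonzero unless $a=0$ or $b=0$, so decodability excludes this.
\item If $a=0$, then $\beta=1$ on $S$. This gives $a_1b=a_1v\in\Ncal$ with $b\neq0$, again contradicting decodability. The case $b=0$ is symmetric.
\end{itemize}
Only with both facts, a trivial intersection and $uv\notin\Acal v+u\Bcal$, does $\dim\Ncal=K+L+1$ follow.

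The paper's formulation absorbs this case automatically. Because $\one$ lies in both $\mathcal P$ and $\mathcal Q$, a relation $\one=a+b$ puts $a=\one-b$ into $\mathcal P\cap\mathcal Q$. This $a$ is nonconstant, since $\one,a_1,\ldots,a_K$ and $\one,b_1,\ldots,b_L$ are independent. The $h^2=h$ argument then excludes it.
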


Over a nonbinary field, the two sides can share one nonconstant function $z$ while $z^2$ remains independent of the nuisance space. This saves one nuisance dimension. Over $\F_2$, every function satisfies $z^2=z$, so this saving is impossible and one additional nuisance dimension is necessary. The constructions are illustrated in Section~\ref{sec:t1_examples}, and the matching constructions and lower bounds are proved in Section~\ref{sec:t1_constructions}.

\subsection{General Bounds and Field Feasibility}
\label{sec:general_statements}

We next give general bounds and characterize the fields over which a scheme can exist.

\begin{theorem}
\label{thm:general_lower_bound}
Every $T$-private and linearly decodable scheme over $\Fq$ satisfies $N\geq KL+K+L$.
\end{theorem}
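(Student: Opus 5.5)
By Theorem~\ref{thm:decodability_condition}, decodability means that $\Acal\Bcal\cap\Ncal=0$ and that the $KL$ functions $a_kb_\ell$ are linearly independent. Hence $\dim\Acal\Bcal=KL$, and since everything lives in $\Fq^{[N]}$ we get
\[
N\;\geq\;\dim(\Acal\Bcal+\Ncal)\;=\;KL+\dim\Ncal .
\]
So the theorem reduces to proving $\dim\Ncal\geq K+L$. The nuisance space is $\Ncal=(\Acal+\Ucal)\Vcal+\Ucal\Bcal$. I would bound its dimension from below by combining two facts: the support consequences of privacy, and decodability used once more to rule out degeneracies.

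\textbf{Step 1 (support consequences of privacy).} I apply Theorem~\ref{thm:privacy_condition} with $\tau=\{i\}$, which is allowed since $T\geq1$. It says that if $u_t(i)=0$ for all $t$, then $a_k(i)=0$ for all $k$. Hence $\operatorname{supp}\Acal\subseteq S_U:=\operatorname{supp}\Ucal$, and symmetrically $\operatorname{supp}\Bcal\subseteq S_V:=\operatorname{supp}\Vcal$.

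\textbf{Step 2 (restrictions to the mask supports are injective).} Suppose $0\neq f\in\Acal$ vanishes on $S_V$. Then $fb_1=0$ by Step~1, which is a nontrivial linear combination of desired entries lying in $\Ncal$, contradicting decodability. So restriction $\Acal\to\Fq^{S_V}$ is injective, and likewise $\Bcal\to\Fq^{S_U}$ is injective.

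\textbf{Step 3 (a field-independent product bound).} I would prove the following lemma: for subspaces $\mathcal X,\mathcal Y$,
\[
\dim(\mathcal X\mathcal Y)\;\geq\;\dim\bigl(\mathcal X|_{\operatorname{supp}\mathcal Y}\bigr).
\]
Take a basis $y_1,\ldots,y_T$ of $\mathcal Y$ and filter by the nested supports $\operatorname{supp}y_1\subseteq\operatorname{supp}y_1\cup\operatorname{supp}y_2\subseteq\cdots$. Multiplication by $y_t$ is injective on the restriction of $\mathcal X$ to $\operatorname{supp}y_t$. Counting the new coordinates contributed at each stage yields the bound. This argument avoids any genericity, so it works over $\F_2$.

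Combining Steps~2 and~3 gives $\dim(\Acal+\Ucal)\Vcal\geq K$ and $\dim\Ucal(\Bcal+\Vcal)\geq L$.

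\textbf{Step 4 (main obstacle: adding the two bounds).} The difficulty is that the two bounds above may overlap, through $\Ucal\Vcal$ and through $\Acal\Vcal\cap\Ucal\Bcal$. To separate them, I would try to show the stronger statement
\[
\dim\bigl((\Acal+\Ucal)(\Bcal+\Vcal)\bigr)\;\geq\;KL+K+L,
\]
which suffices because the left-hand side is $\dim(\Acal\Bcal+\Ncal)$. My first attempt is an induction on $K+L$ that peels off one data function at a time. Remove $a_K$ and pass to the subscheme with $\Acal'=\Span\{a_1,\ldots,a_{K-1}\}$, which is still private and decodable. It then suffices to show that adding $a_K$ enlarges the product space by at least $L+1$. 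The $L$ new functions $a_Kb_\ell$ are independent modulo the old space by decodability. For the extra $+1$, I would argue as follows: if $a_K\Vcal$ lay entirely inside the old space, then the restriction argument of Step~2, applied on $S_V$, would produce a combination of desired entries lying in the nuisance space. This step is where I expect the real work to be. If the direct argument fails, my fallback is to do the same induction on the $\Bcal$ side and then the $\Ucal\Vcal$ base case. The base case is $K=L=0$ after removing all data, where trivially $\dim\Ucal\Vcal\geq0$.
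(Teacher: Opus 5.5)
\textbf{There is a genuine gap in Step~4.} Your reduction to $\dim\Ncal\geq K+L$ is exactly Corollary~\ref{cor:dimension_bound}, and Steps~1--3 are correct. Your product lemma does hold, because the pieces $\{xy_t : x\in\mathcal X,\ x=0\text{ on }\operatorname{supp}y_1\cup\cdots\cup\operatorname{supp}y_{t-1}\}$ have disjoint supports. Step~4, however, is not just unfinished: its per-step claim is false.

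Take the paper's $T=1$, $q\geq3$ construction, with $u_1=v_1=\one$, and label the $A$-side so that $a_K=z=b_1$. The old space $(\Acal'+\Ucal)(\Bcal+\Vcal)$ contains $\one\cdot z=z$. So $a_K\Vcal=\Span\{z\}$ lies entirely inside it, even though the scheme is private and decodable, and adding $a_K$ raises the dimension by exactly $L$. Hence no argument of the form ``if $a_K\Vcal$ lay in the old space, Step~2 would give a contradiction'' can succeed.

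Reordering does not help. In any order, your claimed increments sum to $KL+K+L$: $L'+1$ for each added $a$ and $K'+1$ for each added $b$, where $K',L'$ are the current counts. In this example, $\dim(\Acal+\Ucal)(\Bcal+\Vcal)=KL+K+L$ and $\dim\Ucal\Vcal=1$, so the actual increments sum to $KL+K+L-1$. Some step must therefore fall short in every ordering. There is a further problem: once $K'=0$ or $L'=0$, the subscheme has no desired entries. Its decodability is then vacuous and cannot drive the step; for instance, $b_\ell\in\Vcal$ is allowed there.

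\textbf{What is missing} is a bound on the overlap between the $A$-side and $B$-side nuisance contributions, which is exactly where the shared-$z$ saving lives. Adding your Step~3 bounds could never close this. In the constant-mask case the two pieces are $\mathcal P=\Span\{\one,a_1,\ldots,a_K\}$ and $\mathcal Q=\Span\{\one,b_1,\ldots,b_L\}$, of dimensions $K+1$ and $L+1$, and they can genuinely meet in a $2$-dimensional space.

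The paper handles this in two moves.
\begin{enumerate}
\item It reduces to constant masks. After deleting workers whose $\mathsf U$- or $\mathsf V$-row vanishes, it passes to an extension $\mathbb E$ with $|\mathbb E|>N$. There it picks $u\in\Ucal_{\mathbb E}$ and $v\in\Vcal_{\mathbb E}$ that are nonzero at every worker. It then replaces $a_k,b_\ell$ by $a_k/u$, $b_\ell/v$, with mask $\one$ on both sides. Multiplication by $uv$ maps the new nuisance space into $\Ncal_{\mathbb E}$, so the auxiliary table is still decodable on the same $N$ workers.
\item Lemma~\ref{lem:constant_mask_lower_bound} shows $\dim(\mathcal P\cap\mathcal Q)\leq 2$. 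If $h,g\in\mathcal P\cap\mathcal Q$ with $\one,h,g$ independent, then $(h-\alpha\one)(g-\delta\one)-(g-\gamma\one)(h-\beta\one)$ is a nonzero combination of desired entries lying in $\Span\{\one,h,g\}\subseteq\Ncal$, which contradicts decodability.
\end{enumerate}
Your support-filtration framework has no substitute for this $2\times 2$ argument.
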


The $KL$ desired entries must remain linearly independent modulo nuisance, while the nuisance space has dimension at least $K+L$. Together, they require at least $KL+K+L$ dimensions in the space of worker responses.

Privacy and decodability also force MDS structure in the mask coefficient matrices.

\begin{theorem}
\label{thm:necessary_mds_restrictions}
Every $T$-private and linearly decodable scheme satisfies $\rank\mathsf U_{[N]}=\rank\mathsf V_{[N]}=T$. Moreover, there are subsets $\tau_A,\tau_B\subseteq[N]$ with $|\tau_A|=K+T$ and $|\tau_B|=L+T$ such that $\mathsf U_{\tau_A}^{\mathsf T}$ generates a $[K+T,T]$ linear MDS code and $\mathsf V_{\tau_B}^{\mathsf T}$ generates an $[L+T,T]$ linear MDS code.
\end{theorem}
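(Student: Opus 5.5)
The plan is to derive both claims from two structural facts, each forced by decodability: $\dim\Acal=K$ and $\Acal\cap\Ucal=\{0\}$ with $\dim\Ucal=T$. The privacy rank condition of Theorem~\ref{thm:privacy_condition} then does the rest. Throughout I argue on the $A$-side; the $B$-side is symmetric, with $\Acal\Vcal\subseteq\Ncal$ playing the role of $\Ucal\Bcal\subseteq\Ncal$.

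\emph{Step 1 (decodability forces independence).} By Theorem~\ref{thm:decodability_condition}, the $KL$ entries $a_kb_\ell$ are linearly independent modulo $\Ncal$.

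First, the $a_k$ are linearly independent. A relation $\sum_k c_ka_k=0$ with $c\neq0$ gives $\sum_k c_ka_kb_1=0\in\Ncal$, a nonzero combination of desired entries, which is a contradiction.

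Second, suppose $0\neq f=\sum_k c_ka_k\in\Ucal$. Then $fb_1=\sum_k c_ka_kb_1$ is a nonzero combination of desired entries lying in $\Ucal\Bcal\subseteq\Ncal$, again a contradiction. So $\Acal\cap\Ucal=\{0\}$ and $\dim(\Acal+\Ucal)=K+\dim\Ucal$.

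\emph{Step 2 (rank of $\mathsf U_{[N]}$).} Let $r=\rank\mathsf U_{[N]}=\dim\Ucal$, and suppose $r<T$. Choose $\sigma\subseteq[N]$ with $|\sigma|=r$ such that the rows of $\mathsf U_\sigma$ form a basis of the row space of $\mathsf U_{[N]}$. For any $i\in[N]$, the set $\sigma\cup\{i\}$ has size at most $r+1\le T$.

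Privacy gives $\rank[\mathsf A_{\sigma\cup\{i\}}\ \mathsf U_{\sigma\cup\{i\}}]=r$. The rows indexed by $\sigma$ are already independent, since their $\mathsf U$-parts are. Hence row $i$ of $[\mathsf A\ \mathsf U]$ is a combination of the rows indexed by $\sigma$. It follows that $\rank[\mathsf A_{[N]}\ \mathsf U_{[N]}]=r$, that is, $\dim(\Acal+\Ucal)=r=\dim\Ucal$. This forces $\Acal\subseteq\Ucal$, contradicting Step 1 since $K\ge1$. Therefore $\rank\mathsf U_{[N]}=T$.

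\emph{Step 3 (the MDS subset).} By Steps 1 and 2, $\rank[\mathsf A_{[N]}\ \mathsf U_{[N]}]=\dim(\Acal+\Ucal)=K+T$. Pick $\tau_A$ with $|\tau_A|=K+T$ such that the square matrix $[\mathsf A_{\tau_A}\ \mathsf U_{\tau_A}]$ is invertible.

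For any $\sigma\subseteq\tau_A$ with $|\sigma|=T$, the rows of $[\mathsf A_\sigma\ \mathsf U_\sigma]$ are independent, so this matrix has rank $T$. Privacy then gives $\rank\mathsf U_\sigma=T$.

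So every $T$ columns of the $T\times(K+T)$ matrix $\mathsf U_{\tau_A}^{\mathsf T}$ are independent. Equivalently, it generates a $[K+T,T]$ code in which every information set of size $T$ works, i.e.\ an MDS code. The same argument with $\Acal\Vcal\subseteq\Ncal$ produces $\tau_B$ with $|\tau_B|=L+T$ for $\mathsf V$.

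\emph{Main obstacle.} The only delicate point is Step 2: passing from the local privacy condition on sets of size at most $T$ to the global rank statement. The basis-extension argument handles this, and it needs only $r+1\le T$, which is exactly the case being ruled out. Everything else is linear algebra once decodability has been converted into $\Acal\cap\Ucal=\{0\}$ via $\Ucal\Bcal\subseteq\Ncal$.
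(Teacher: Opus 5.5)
Your proof is correct and follows essentially the same route as the paper: data--mask separation via $\Ucal\Bcal\subseteq\Ncal$ (the paper's Lemma~\ref{lem:data_mask_separation}), then a contradiction from privacy on a set of at most $r+1\le T$ workers to get $\rank\mathsf U_{[N]}=T$, then choosing $K+T$ independent rows of $[\mathsf A\ \mathsf U]$. The only cosmetic difference is in Step~2: you show every row lies in the span of a size-$r$ basis $\sigma$, whereas the paper directly picks $r+1$ independent rows of $[\mathsf A\ \mathsf U]$ (possible since its rank is $K+r\ge r+1$) and reads off the contradiction.
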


The two MDS restrictions may involve different sets of workers, and the complete mask matrices need not be MDS. Their lengths give the following additional bound.

\begin{corollary}
\label{cor:combined_worker_lower_bound}
Every $T$-private and linearly decodable scheme satisfies $N\geq
\max\{KL+K+L,T+\max\{K,L\}\}$.
\end{corollary}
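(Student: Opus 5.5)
The corollary should follow by combining Theorem~\ref{thm:general_lower_bound} with Theorem~\ref{thm:necessary_mds_restrictions}; no new argument about the function table is needed. The bound $N\geq KL+K+L$ is exactly Theorem~\ref{thm:general_lower_bound}, so the remaining task is to show $N\geq T+\max\{K,L\}$.

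For the second bound, invoke Theorem~\ref{thm:necessary_mds_restrictions}. It gives subsets $\tau_A,\tau_B\subseteq[N]$ with $|\tau_A|=K+T$ and $|\tau_B|=L+T$; these are subsets of the worker index set $[N]$, and the only point to make explicit is that their elements are distinct workers. A subset of $[N]$ has at most $N$ elements, so $N\geq |\tau_A|=K+T$ and $N\geq|\tau_B|=L+T$. Taking the larger of the two gives $N\geq T+\max\{K,L\}$.

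Taking the maximum of this bound and Theorem~\ref{thm:general_lower_bound} yields the stated inequality. The argument is otherwise routine, since all the content lies in Theorem~\ref{thm:necessary_mds_restrictions}.

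It may help to remark that neither bound dominates the other. When $T$ is small, $KL+K+L$ is the larger one. When $T$ is large relative to $K$ and $L$, for instance $T>KL+\min\{K,L\}$, the MDS-length bound $T+\max\{K,L\}$ is larger. This is why the corollary is stated with a maximum rather than a single term.
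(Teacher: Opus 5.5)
Your proof is correct and matches the paper's: $N\geq KL+K+L$ is Theorem~\ref{thm:general_lower_bound}, and $N\geq T+\max\{K,L\}$ follows because the sets $\tau_A,\tau_B\subseteq[N]$ from Theorem~\ref{thm:necessary_mds_restrictions} have sizes $K+T$ and $L+T$. Your closing remark is also accurate: the second term is the larger one exactly when $T>KL+\min\{K,L\}$.
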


The MDS restrictions also characterize field feasibility.

\begin{theorem}
\label{thm:field_feasibility}
A $T$-private and linearly decodable scheme exists over $\Fq$ if and only if both a $[K+T,T]$ linear MDS code and an $[L+T,T]$ linear MDS code exist over $\Fq$. Equivalently, such a scheme exists if and only if a $[\max\{K,L\}+T,T]$ linear MDS code exists over $\Fq$. Whenever these conditions hold, there is a scheme with $N=(K+T)(L+T)$ workers.
\end{theorem}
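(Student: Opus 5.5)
Necessity comes directly from Theorem~\ref{thm:necessary_mds_restrictions}. If a $T$-private, linearly decodable scheme exists over $\Fq$, that theorem gives subsets $\tau_A,\tau_B$ on which $\mathsf U_{\tau_A}^{\mathsf T}$ generates a $[K+T,T]$ MDS code and $\mathsf V_{\tau_B}^{\mathsf T}$ generates an $[L+T,T]$ MDS code, both over $\Fq$. The two formulations of the condition are equivalent because puncturing preserves MDS. Deleting a coordinate of an $[n,T]$ MDS code with $n>T$ gives an $[n-1,T]$ MDS code: every $T$ columns of the generator matrix remain independent. So a $[\max\{K,L\}+T,T]$ code yields codes of both lengths $K+T$ and $L+T$, and conversely the longer of these two is itself a $[\max\{K,L\}+T,T]$ code.

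For sufficiency and the worker count, I would use a Cartesian construction. Let $G\in\Fq^{T\times(K+T)}$ and $H\in\Fq^{T\times(L+T)}$ generate MDS codes. Index the workers by pairs $(j,m)\in[K+T]\times[L+T]$, so $N=(K+T)(L+T)$. Set
\[
u_t(j,m)=G_{t,j},\qquad a_k(j,m)=\delta_{j,T+k},
\]
\[
v_t(j,m)=H_{t,m},\qquad b_\ell(j,m)=\delta_{m,T+\ell}.
\]
Thus worker $(j,m)$ receives $X=R\cdot G_{\cdot,j}+A_{j-T}$, where the data term is present only when $j>T$, and similarly for $Y$.

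\emph{Privacy.} I would verify the rank conditions of Theorem~\ref{thm:privacy_condition} for every $\tau$ with $|\tau|\le T$. Let $J$ be the set of first coordinates occurring in $\tau$, so $|J|\le T$. Rows of $[\mathsf A_\tau\ \mathsf U_\tau]$ coming from workers with the same $j$ are identical. The distinct rows correspond to $j\in J$, and their $\mathsf U$ parts are columns of $G$ indexed by $J$. Since $|J|\le T$, these columns are linearly independent by the MDS property. Hence $\rank\mathsf U_\tau=|J|=\rank[\mathsf A_\tau\ \mathsf U_\tau]$. The $B$ side is symmetric.

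\emph{Decodability.} I expect this to be the main step. I would argue directly rather than through Theorem~\ref{thm:decodability_condition}: the full table $\Fq^{[N]}$ is the tensor product $\Fq^{K+T}\otimes\Fq^{L+T}$, and every coefficient function is a pure tensor $f\otimes g$.

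\begin{itemize}
\item Take the row-space functions $a_1,\ldots,a_K,u_1,\ldots,u_T$, viewed as vectors in $\Fq^{K+T}$. They form a basis: the $a_k$ are the standard vectors $e_{T+1},\ldots,e_{T+K}$, and the $u_t$ are the rows of $G$. Projected onto the first $T$ coordinates, the $u_t$ give the submatrix $G_{\cdot,[T]}$, which is invertible by MDS.
\item Similarly, $b_1,\ldots,b_L,v_1,\ldots,v_T$ form a basis of $\Fq^{L+T}$.
\item Therefore the $(K+T)(L+T)$ products $a_kb_\ell$, $a_kv_t$, $u_sb_\ell$, $u_sv_t$ form a basis of $\Fq^{[N]}$.
\end{itemize}

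In particular the desired entries are linearly independent modulo $\Ncal$, so the scheme is decodable; decoding amounts to inverting the change of basis. The only care needed is the invertibility of $G_{\cdot,[T]}$ and $H_{\cdot,[T]}$, both of which follow from MDS.
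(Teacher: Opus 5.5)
Your proposal is correct and follows the paper's proof: necessity via Theorem~\ref{thm:necessary_mds_restrictions} together with puncturing, and sufficiency via the Cartesian construction on $[K+T]\times[L+T]$, with decodability from the tensor-product basis and privacy from repeated rows plus the MDS property. The only difference is that you complete the mask columns to a basis explicitly, using the standard vectors $e_{T+1},\ldots,e_{T+K}$ and the invertibility of $G_{\cdot,[T]}$, whereas the paper takes an arbitrary completion $\mathsf A_0$ making $[\mathsf A_0\ \mathsf U_0]$ invertible.
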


Necessity follows from Theorem~\ref{thm:necessary_mds_restrictions}. Conversely, generators of the required MDS codes give a Cartesian construction with $(K+T)(L+T)$ workers. Thus, the Cartesian construction works over every field on which a scheme can exist.

Standard bounds and constructions for linear MDS codes give the following consequences.

\begin{corollary}
\label{cor:field_size_consequences}
If a $T$-private and linearly decodable scheme exists over $\Fq$ with $T\geq2$, then $q\geq\max\{K,L\}+1$. If also $\max\{K,L\}\geq2$, then $q\geq T+1$. Conversely, a scheme exists whenever $q\geq\max\{K,L\}+T-1$.
\end{corollary}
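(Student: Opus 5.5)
The plan is to combine Theorem~\ref{thm:field_feasibility} with standard facts about linear MDS codes of dimension $T$. By Theorem~\ref{thm:field_feasibility}, a $T$-private and linearly decodable scheme exists over $\Fq$ if and only if an $[n,T]$ linear MDS code exists over $\Fq$ with $n=\max\{K,L\}+T$. Set $m=\max\{K,L\}\geq1$, so that $n-T=m$. The three claims then become statements about when $[m+T,T]$ MDS codes exist.

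For the first necessary condition, assume $T\geq2$ and take an $[m+T,T]$ MDS code. Its dual is an $[m+T,m]$ MDS code, so its redundancy is $T\geq2$. I will use the classical bound: any linear MDS code of dimension $k\geq2$ and redundancy $n-k\geq2$ has length at most $q+k-1$. This bound comes from the fact that such a code cannot contain more than $q-1$ coordinates outside an information set without producing a zero $2\times2$ minor. I would rather avoid relying on that statement in its full generality, so I will argue directly as follows.

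Write a generator matrix in systematic form $[I_T\mid P]$, where $P$ is a $T\times m$ matrix. The MDS property means every square submatrix of $P$ is nonsingular. In particular, all entries of $P$ are nonzero. Moreover, each $2\times2$ minor using rows $1,2$ and columns $j,j'$ is nonzero, so the ratios $P_{1j}/P_{2j}$ are pairwise distinct elements of $\Fq^{*}$. There are $m$ such ratios, which gives $m\leq q-1$, that is, $q\geq\max\{K,L\}+1$.

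For the second necessary condition, apply the same ratio argument to the transpose. Here $P$ has $m\geq2$ columns, so comparing columns $1$ and $2$ yields $T$ pairwise distinct ratios $P_{i1}/P_{i2}$ in $\Fq^{*}$. Hence $T\leq q-1$, that is, $q\geq T+1$.

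For sufficiency, I will use Reed--Solomon codes. If $q\geq m+T-1$, the doubly extended Reed--Solomon code has length $q+1\geq m+T$ and every dimension $T\leq q+1$. Puncturing it preserves the MDS property, so an $[m+T,T]$ MDS code exists. Concretely, I would take the rows $(1,x,\dots,x^{T-1})$ evaluated at $m+T-1$ distinct points of $\Fq$, together with the point at infinity $(0,\dots,0,1)$. Every $T\times T$ minor is then a Vandermonde determinant, or a smaller one after expanding along the infinity column, and is therefore nonzero. Theorem~\ref{thm:field_feasibility} then supplies the scheme.

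The only delicate point is the edge case $T=1$ in the converse, where the $[m+1,1]$ repetition code always exists, so that case is trivial. Beyond that, I expect the ratio-counting argument above to be the main step needing care, in particular checking that it uses only $2\times2$ minors of $P$.
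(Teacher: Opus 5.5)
Your proposal is correct and is essentially the paper's proof. The systematic-form ratio count giving $\max\{K,L\}\le q-1$ is Lemma~\ref{lem:elementary_mds_length}. Your ratio count down the first two columns of $P$ amounts to the paper's application of that lemma to the dual $[\max\{K,L\}+T,\max\{K,L\}]$ code, whose systematic generator is $[-P^{\mathsf T}\mid I]$. The converse is the doubly extended Reed--Solomon code fed into Theorem~\ref{thm:field_feasibility}. Your opening remark about the dual and about redundancy at least $2$ is unnecessary, since the direct ratio argument needs only $T\geq2$.
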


For $T=2$, the necessary field-size condition is also sufficient.

\begin{corollary}
\label{cor:t2_field_feasibility}
For $T=2$, a private and linearly decodable scheme exists over $\Fq$ if and only if $q\geq\max\{K,L\}+1$.
\end{corollary}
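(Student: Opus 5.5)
By Theorem~\ref{thm:field_feasibility}, for $T=2$ a scheme exists over $\Fq$ if and only if a $[\max\{K,L\}+2,2]$ linear MDS code exists over $\Fq$. So it suffices to show that, for $n\geq2$, an $[n,2]$ linear MDS code over $\Fq$ exists if and only if $n\leq q+1$. We then apply this with $n=\max\{K,L\}+2$, which turns the condition into $q\geq\max\{K,L\}+1$.

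For necessity, we argue as follows. Let $G$ be a $2\times n$ generator matrix of an $[n,2]$ MDS code. The code has minimum distance $n-1$, so every pair of columns of $G$ is linearly independent. In particular, no column is zero and no two columns are proportional. The columns therefore represent $n$ distinct points of the projective line $\mathbb P^1(\Fq)$, which has $q+1$ points. Hence $n\leq q+1$. For $n=\max\{K,L\}+2$ this gives $q\geq\max\{K,L\}+1$, which is the first statement of Corollary~\ref{cor:field_size_consequences}.

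For sufficiency, we assume $q\geq\max\{K,L\}+1$, so that $n=\max\{K,L\}+2\leq q+1$. We take as columns of $G$ any $n$ distinct points of $\mathbb P^1(\Fq)$. Concretely, we use the vectors $(1,\alpha)^{\mathsf T}$ for distinct $\alpha\in\Fq$, together with $(0,1)^{\mathsf T}$ if $n=q+1$; this is the doubly extended Reed--Solomon code. Any two of these columns form a nonsingular $2\times2$ matrix, since the relevant determinant is either a Vandermonde $\beta-\alpha\neq0$ or equal to $1$. Hence every $2\times2$ minor is nonzero and the code is MDS. Theorem~\ref{thm:field_feasibility} then yields a scheme, for example the Cartesian one with $(K+2)(L+2)$ workers.

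There is no real obstacle here. The only point needing care is that the feasibility theorem is phrased via the larger code length $\max\{K,L\}+2$, which already covers the $[K+2,2]$ and $[L+2,2]$ requirements, because puncturing an MDS code gives an MDS code. The problem then reduces entirely to counting points on $\mathbb P^1(\Fq)$.
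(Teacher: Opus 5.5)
Your proof is correct and follows essentially the paper's route: reduce via Theorem~\ref{thm:field_feasibility} to the existence of a $[\max\{K,L\}+2,2]$ linear MDS code, obtain necessity from an MDS length bound, and obtain sufficiency from a (doubly) extended Reed--Solomon code, as in Corollary~\ref{cor:field_size_consequences}. The only difference is that you prove the $k=2$ length bound $n\leq q+1$ directly by counting points of $\mathbb P^1(\Fq)$, instead of citing the general systematic-form bound of Lemma~\ref{lem:elementary_mds_length}; this is the same count the paper uses in proving Proposition~\ref{prop:mds_field_size_bound}.
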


The criterion is automatic in two boundary cases. When $T=1$, the required $[\max\{K,L\}+1,1]$ MDS code is a repetition code and exists over every field. When $K=L=1$, the required $[T+1,T]$ MDS code is a single-parity-check code and also exists over every field.

Theorem~\ref{thm:field_feasibility} requires MDS structure only on suitable restrictions of the mask matrices. Requiring either complete mask matrix to be MDS imposes a stronger field-size condition.

\begin{proposition}
\label{prop:mds_field_size_bound}
Let $T\geq2$. If either $u_1,\ldots,u_T$ or $v_1,\ldots,v_T$ satisfies the $T$-MDS mask condition, $q\geq N-T+1$.
\end{proposition}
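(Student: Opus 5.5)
The plan is to reduce the statement to the classical length bound for linear MDS codes, proved directly with a short projective-counting argument. By symmetry it suffices to treat $u_1,\ldots,u_T$; the case of $v_1,\ldots,v_T$ is identical. For each worker $i\in[N]$, let $\mathbf u(i)=(u_1(i),\ldots,u_T(i))\in\Fq^T$ denote the $i$th row of $\mathsf U_{[N]}$.

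First I record what the $T$-MDS mask condition of Definition~\ref{def:mds_mask_condition} says. For $\tau\subseteq[N]$ with $|\tau|\leq T$, the condition $\rank\mathsf U_\tau=|\tau|$ means that the vectors $\mathbf u(i)$, $i\in\tau$, are linearly independent. Equivalently, $\mathsf U_{[N]}^{\mathsf T}$ generates an $[N,T]$ MDS code. I also need $N\geq T$ so that sets of size $T$ exist. This follows from Corollary~\ref{cor:combined_worker_lower_bound}, since $N\geq T+\max\{K,L\}>T$ for any scheme to which the proposition applies.

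Next I use $T\geq2$ to fix any $T-2$ workers $i_1,\ldots,i_{T-2}$. Let $W=\Span\{\mathbf u(i_1),\ldots,\mathbf u(i_{T-2})\}$, which has dimension $T-2$. I then pass to the two-dimensional quotient $\Fq^T/W\cong\Fq^2$. For each of the remaining $N-T+2$ workers $j$, let $\bar{\mathbf u}(j)$ be the image of $\mathbf u(j)$ in this quotient.

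Now take any two distinct remaining workers $j,j'$. Together with $i_1,\ldots,i_{T-2}$ they form a set of exactly $T$ workers, so the $T$ vectors are linearly independent. Hence $\bar{\mathbf u}(j)$ and $\bar{\mathbf u}(j')$ are linearly independent in $\Fq^2$; in particular each is nonzero and they are not proportional. They therefore determine $N-T+2$ distinct points of the projective line over $\Fq$. Since that line has $q+1$ points, $N-T+2\leq q+1$, which gives $q\geq N-T+1$.

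There is no serious obstacle. The only care needed is choosing to quotient by $T-2$ vectors rather than $T-1$: a one-dimensional quotient only forces the images to be nonzero and gives no useful count. This choice is exactly where the hypothesis $T\geq2$ enters. For $T=1$ the condition only asks that each $\mathbf u(i)$ be nonzero, and no field-size bound follows, which is consistent with the statement.
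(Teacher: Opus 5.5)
Your proof is correct and is essentially the paper's argument. The paper also quotients by the span of $T-2$ fixed mask rows and counts the images of the remaining $N-T+2$ rows as distinct points of the projective line over $\Fq$, which has $q+1$ points. One small remark: your appeal to Corollary~\ref{cor:combined_worker_lower_bound} to get $N\geq T$ tacitly assumes privacy and decodability, which the proposition does not state. That step is unnecessary, because for $N\leq T$ the bound $q\geq N-T+1$ holds trivially, and the paper handles the degenerate case $N\leq T-2$ directly.
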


In particular, for $T=2$, MDS masks on the complete worker set require $q\geq N-1$. The Cartesian construction avoids this restriction because its mask coefficient rows repeat across the worker set.

\subsection[The Projective-Line Construction for T=2]{The Projective-Line Construction for $T=2$}
\label{sec:projective_statement}

The freedom to choose other coefficient functions leads to a new construction for $T=2$.

\begin{figure*}[t]
\centering

\def\PanelScale{0.88}

\begin{minipage}[t]{0.25\textwidth}
\vspace{0pt}
\centering

{\small\textbf{(a) $\F_3$ table}\par}

\medskip

\begingroup
\setlength{\arraycolsep}{5pt}
\renewcommand{\arraystretch}{1.2}
\scalebox{\PanelScale}{$
\begin{array}{c|ccc}
 & z & y & \one\\
\hline
z
& \cellcolor{red!12}z^2
& \cellcolor{red!12}zy
& z\\
x
& \cellcolor{red!12}xz
& \cellcolor{red!12}xy
& x\\
\one
& z
& y
& \one
\end{array}
$}
\endgroup

\vspace{1.5em}

{\small\textbf{(b) $\F_2$ table}\par}

\medskip

\begingroup
\setlength{\arraycolsep}{5pt}
\renewcommand{\arraystretch}{1.2}
\scalebox{\PanelScale}{$
\begin{array}{c|ccc}
 & y_1 & y_2 & \one\\
\hline
x_1
& \cellcolor{red!12}x_1y_1
& \cellcolor{red!12}x_1y_2
& x_1\\
x_2
& \cellcolor{red!12}x_2y_1
& \cellcolor{red!12}x_2y_2
& x_2\\
\one
& y_1
& y_2
& \one
\end{array}
$}
\endgroup

\end{minipage}
\hfill
\begin{minipage}[t]{0.35\textwidth}
\vspace{0pt}
\centering

{\small\textbf{(c) $\F_3$ scheme}\par}

\smallskip

{\scriptsize $p_i=(z_i,x_i,y_i)$\par}

\medskip

\begingroup
\setlength{\arraycolsep}{3pt}
\renewcommand{\arraystretch}{1.16}
\scalebox{\PanelScale}{$
\begin{array}{c|ccc}
\text{Worker }i & p_i & X_i & Y_i\\
\hline
1 &(0,0,0) &R_1&S_1\\
2 &(0,1,0) &A_2+R_1&S_1\\
3 &(0,0,1) &R_1&B_2+S_1\\
4 &(1,0,0) &A_1+R_1&B_1+S_1\\
5 &(2,0,0) &2A_1+R_1&2B_1+S_1\\
6 &(1,1,0) &A_1+A_2+R_1&B_1+S_1\\
7 &(1,0,1) &A_1+R_1&B_1+B_2+S_1\\
8 &(0,1,1) &A_2+R_1&B_2+S_1
\end{array}
$}
\endgroup

\end{minipage}
\hfill
\begin{minipage}[t]{0.35\textwidth}
\vspace{0pt}
\centering

{\small\textbf{(d) $\F_2$ scheme}\par}

\smallskip

{\scriptsize $p_i=(x_{1,i},x_{2,i},y_{1,i},y_{2,i})$\par}

\medskip

\begingroup
\setlength{\arraycolsep}{3pt}
\renewcommand{\arraystretch}{1.16}
\scalebox{\PanelScale}{$
\begin{array}{c|ccc}
\text{Worker }i & p_i & X_i & Y_i\\
\hline
1 &(0,0,0,0) &R_1&S_1\\
2 &(1,0,0,0) &A_1+R_1&S_1\\
3 &(0,1,0,0) &A_2+R_1&S_1\\
4 &(0,0,1,0) &R_1&B_1+S_1\\
5 &(0,0,0,1) &R_1&B_2+S_1\\
6 &(1,0,1,0) &A_1+R_1&B_1+S_1\\
7 &(1,0,0,1) &A_1+R_1&B_2+S_1\\
8 &(0,1,1,0) &A_2+R_1&B_1+S_1\\
9 &(0,1,0,1) &A_2+R_1&B_2+S_1
\end{array}
$}
\endgroup

\end{minipage}

\vspace{1em}

\caption{Optimal schemes for computing $AB$ when $A$ is divided into two row blocks and $B$ into two column blocks ($K=L=2$), while keeping both inputs private from any single worker ($T=1$). Panels~(a) and~(b) show the function tables over $\F_3$ and $\F_2$, respectively. Each entry is the pointwise product of its row and column coefficient functions. Shaded entries correspond to the four desired products $A_kB_\ell$, while unshaded entries correspond to products involving random masks. Panels~(c) and~(d) give the evaluation points and explicit encodings for eight workers over $\F_3$ and nine workers over $\F_2$, respectively; both worker counts are optimal. Here $R_1$ and $S_1$ are independent uniform random masks, and $\one$ denotes the constant-one function. Each row specifies one worker, which receives $(X_i,Y_i)$ and returns $Z_i=X_iY_i$. The chosen evaluation points make the distinct table functions linearly independent, allowing the user to recover all four desired products by linear combinations of the responses.}
\label{fig:t1_small_field_examples}
\end{figure*}

\begin{theorem}[Projective-line construction]
\label{thm:projective_line_construction}
If $KL+K+L$ divides $q-1$, then there exists a private and linearly decodable scheme with $T=2$ over $\Fq$ using $N=KL+K+L+2$ workers.
\end{theorem}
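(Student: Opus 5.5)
The plan is an explicit evaluation construction on the projective line $\mathbb P^1(\Fq)$, checked with the conditions already established. Set $M=KL+K+L$. Since $M\mid q-1$, fix $\omega\in\Fq^\times$ of order $M$. The worker set is the $M$ points $\omega^0,\ldots,\omega^{M-1}$ of the subgroup $\mu_M$ together with the two points $0=(1:0)$ and $\infty=(0:1)$, which gives $N=M+2$. All coefficient functions will be binary forms $F(x_0,x_1)$ of one common degree $D$, evaluated at fixed representatives of these points.

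The point of this evaluation set is the following. On $\mu_M$, a monomial $x_0^{D-e}x_1^{e}$ restricts to the character $\omega^{ie}$, which depends only on $e \bmod M$. The evaluation at $0$ reads off the coefficient of $x_0^D$, and the evaluation at $\infty$ reads off the coefficient of $x_1^D$. Hence a family of monomials is linearly independent on the $N$ points as soon as their exponents are distinct modulo $M$. The two extreme monomials $x_0^D$ and $x_1^D$ may collide cyclically with something else and still be separated by the two extra points. This is what gives two dimensions beyond a purely cyclic (CAT-type) table, and it is why we only need an element of order $M$ rather than $M+2$.

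Step 1 (privacy). I would take the masks as $u_1=g\,x_0$, $u_2=g\,x_1$ and $v_1=h\,x_0$, $v_2=h\,x_1$, where $g$ and $h$ are monomials of degree $D-1$. Monomials vanish at no point of $\mu_M$, and I would arrange the powers of $x_0,x_1$ in $g,h$ so that they also do not vanish at $0$ and $\infty$ (or handle those two points by a direct rank check). For any two distinct points $p,p'$, the matrix $\mathsf U_{\{p,p'\}}$ equals $\operatorname{diag}(g(p),g(p'))$ times the $2\times 2$ matrix of homogeneous coordinates of $p,p'$. That matrix is invertible because the points are distinct in $\mathbb P^1$, so the $2$-MDS mask condition holds. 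Corollary~\ref{cor:mds_privacy} then gives $2$-privacy, whatever data functions are chosen.

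Step 2 (exponent assignment). Following GASP and CAT, I would take the data forms with $x_1$-exponents $\alpha_k=k-1$ for $k\in[K]$ and $\beta_\ell=K(\ell-1)$ for $\ell\in[L]$. The $KL$ desired exponents $\alpha_k+\beta_\ell$ are then exactly $0,\ldots,KL-1$. The mask exponents $\gamma,\gamma+1$ and $\delta,\delta+1$ must be chosen so that three things hold. First, the nuisance exponents $\{\alpha_k+\delta,\alpha_k+\delta+1\}$, $\{\gamma+\beta_\ell,\gamma+1+\beta_\ell\}$ and $\{\gamma+\delta,\ldots,\gamma+\delta+2\}$ occupy at most $K+L+2$ residues. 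Second, they avoid $\{0,\ldots,KL-1\}$ modulo $M$. Third, any residue collision among them occurs only through the two extreme monomials $x_0^D,x_1^D$, which are then told apart by $0$ and $\infty$.

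Step 3 (decodability). I would verify Theorem~\ref{thm:decodability_condition} by exhibiting the monomial basis. The evaluations of the $KL$ desired monomials together with a basis of $\Ncal$ are at most $M+2$ monomials with distinct residues, apart from the separated extremes, so they are independent in $\Fq^{[N]}$. In particular, no nonzero combination of desired entries lies in $\Ncal$.

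The main obstacle is Step 2, namely the combinatorial choice of $\gamma,\delta$ and $D$. The consecutive-pair structure of the masks makes $\Acal\Vcal$ span $K+1$ consecutive residues and $\Ucal\Bcal$ span about $2L$ residues spaced by $K$. Squeezing all nuisance into $K+L+2$ slots therefore requires overlaps, with the wraparound modulo $M$ landing exactly on the two slots rescued by $0$ and $\infty$. My first attempt will be $\gamma=KL$, with $\delta$ chosen so that $\Acal\Vcal$ wraps around to reuse the residues of $\Ucal\Vcal$. I will take $D$ equal to the largest exponent appearing, so that the unique overlapping collision is $x_0^D$ versus $x_1^D$. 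If consecutive masks prove too wasteful, I would replace the factor $(x_0,x_1)$ by $(x_0^{s},x_1^{s})$ for a suitable step $s$ coprime to the relevant spacings. The MDS argument in Step 1 then needs $p\mapsto p^s$ to be injective on the $N$ points, that is $\gcd(s,q-1)=1$.
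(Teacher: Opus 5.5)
The evaluation set and separation mechanism are the right ones: the subgroup of order $M$ plus $0$ and $\infty$, with only the two extreme monomials allowed to collide. But the proposal stops exactly where the content of the theorem lies. Step~2 is never carried out, and the choices you fix in Steps~1 and~2 cannot be completed.

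\emph{Vanishing masks.} A monomial $x_0^{a}x_1^{b}$ is nonzero at $(1,0)$ only if $b=0$, and at $(0,1)$ only if $a=0$. So for $\deg g\geq1$ both masks $gx_0,gx_1$ vanish at $0$ or at $\infty$. Theorem~\ref{thm:privacy_condition} with $\tau$ equal to that single worker then forces all its data coefficients to vanish. That worker returns $Z=0$ and separates nothing. With two monomial masks of common degree $D_A$, the only pair whose rows are nonzero at both extra points is $x_0^{D_A},x_1^{D_A}$, i.e.\ your step variant with $g=\one$ and step $D_A$.

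\emph{Data--mask collision.} With that pair, your exponent $\alpha_1=0$ gives $a_1=u_1$, contradicting Lemma~\ref{lem:data_mask_separation}.

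\emph{Dimension budget.} Consecutive masks cannot fit the budget for any $\gamma,\delta$. For $K\geq2$ the $2L$ exponents $\gamma+K(\ell-1)$ and $\gamma+1+K(\ell-1)$ of $\Ucal\Bcal$ lie in an interval shorter than $M$. They are therefore $2L$ distinct characters on $\mu_M$. The $K+1$ consecutive residues of $\Acal\Vcal$ meet them in at most three places. Hence $\dim\Ncal\geq K+2L-2$, and Corollary~\ref{cor:dimension_bound} excludes $N=M+2$ once $L\geq5$.

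\emph{The proposed repair.} Choosing a step coprime to the spacings points the wrong way. Nuisance entries merge only when the mask step on one side is congruent modulo $M$ to the data spacing on the other side.

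That alignment is the missing idea, and it is what produces the modulus $KL+K+L$. In the paper's notation:
\begin{itemize}
\item the $A$-side forms have degree $d=K+1$, with data $x^ky^{d-k}$ for $k\in[K]$ and masks $y^d,x^d$;
\item the $B$-side forms have degree $s=(K+1)(L+1)$, with data $x^{\ell d}y^{s-\ell d}$ for $\ell\in[L]$ and masks $y^s,x^s$.
\end{itemize}
The $B$-side mask step $s$ is congruent to $1$, the $A$-side data spacing, modulo $m=s-1=KL+K+L$. So $\Acal\Vcal$ occupies only the labels $1,\ldots,K+1$. The $A$-side mask step $d$ equals the $B$-side data spacing. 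So $\Ucal\Bcal$ occupies only $d,2d,\ldots,Ld$ and $(L+1)d=s\equiv1$.

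The mask--mask products add $h_0$ and $h_{d+s}$. These agree on $H$ with the desired $h_m$ and $h_{d+1}$, and they are the only desired--nuisance collisions; the points $0$ and $\infty$ resolve them. The $KL$ desired and $K+L+2$ nuisance labels therefore form the basis $h_0,\ldots,h_m,h_{d+s}$ of $\Fq^{\Omega}$.

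For privacy, the $A$-side mask rows on $H$ are $(1,t^d)$ and the $B$-side rows are $(1,t)$. What is needed is $\gcd(d,m)=1$, which is automatic since $m=d(L+1)-1$. Coprimality with $q-1$, which your Step~1 requires, is not needed and can fail: for $K=L=3$ and $q=31$ we have $15\mid 30$ but $\gcd(4,30)=2$.
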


The construction evaluates homogeneous monomials on a multiplicative subgroup of order $KL+K+L$, together with the two additional projective points $0$ and $\infty$. The subgroup creates the nuisance collisions needed to reduce the worker count, while the two additional points separate the desired entries from the two nuisance entries with which they would otherwise collide. The mask functions satisfy the $2$-MDS mask condition.

Combining the construction with Theorem~\ref{thm:general_lower_bound} determines the optimum to within two workers.

\begin{corollary}
\label{cor:t2_two_worker_gap}
If $KL+K+L$ divides $q-1$, then the minimum number of workers among all private and linearly decodable schemes with $T=2$ is at least $KL+K+L$ and at most $KL+K+L+2$.
\end{corollary}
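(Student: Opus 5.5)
The statement is Corollary~\ref{cor:t2_two_worker_gap}, and I would prove it by putting together two results already stated.

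\emph{Lower bound.} Theorem~\ref{thm:general_lower_bound} holds for every $T$, every field, and every $T$-private, linearly decodable scheme, with no hypothesis on $q$. Specializing it to $T=2$ gives $N\geq KL+K+L$ for every admissible scheme over $\Fq$. In particular, the minimum is at least $KL+K+L$. The divisibility assumption is not needed for this half.

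\emph{Upper bound.} Assume $KL+K+L$ divides $q-1$. Theorem~\ref{thm:projective_line_construction} then gives an explicit $2$-private, linearly decodable scheme over $\Fq$ with exactly $KL+K+L+2$ workers. So the set of admissible schemes is nonempty, and its minimum worker count is at most $KL+K+L+2$.

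These two inequalities give the two-sided bound, so the corollary is immediate once both theorems are in hand. There is no real obstacle here. The only point to check is that both results refer to the same model: the same notion of privacy against $T=2$ colluding workers and the same linear decodability, over the same field $\Fq$. Both are stated in the framework of Section~\ref{sec:function_tables}, so this check is routine. All the substantive difficulty lies in the proofs of Theorems~\ref{thm:general_lower_bound} and~\ref{thm:projective_line_construction}, which this corollary takes as given.
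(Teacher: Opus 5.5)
Your proposal is correct and matches the paper's proof. The paper takes the lower bound from Theorem~\ref{thm:general_lower_bound} specialized to $T=2$ and the upper bound from the projective-line construction of Theorem~\ref{thm:projective_line_construction} under the divisibility hypothesis, exactly as you do.
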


The construction also attains the field-size bound for schemes with MDS masks.

\begin{corollary}[Tightness of the MDS-mask bound]
\label{cor:projective_field_optimality}
Suppose that $KL+K+L+1$ is a prime power. Then there exists a private and linearly decodable scheme with $T=2$ over $\F_{KL+K+L+1}$ using $N=KL+K+L+2$ workers. It satisfies $q=N-1$ and therefore meets the bound in Proposition~\ref{prop:mds_field_size_bound} with equality.
\end{corollary}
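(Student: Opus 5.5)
The corollary follows by specializing Theorem~\ref{thm:projective_line_construction} to the field $\F_{KL+K+L+1}$ and then checking the equality case of Proposition~\ref{prop:mds_field_size_bound}.

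\textbf{Step 1: apply the construction.} Set $M=KL+K+L$ and $q=M+1$, which is a prime power by hypothesis. Then $q-1=M$, so $M$ divides $q-1$ trivially. The multiplicative group $\Fq^\times$ is cyclic of order $M$, and so it contains the required subgroup of order $M$, namely $\Fq^\times$ itself. Theorem~\ref{thm:projective_line_construction} therefore gives a private and linearly decodable scheme with $T=2$ over $\F_{M+1}$ using $N=M+2$ workers. Since $q=M+1=N-1$, the first claim holds.

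\textbf{Step 2: match the bound.} Proposition~\ref{prop:mds_field_size_bound} applies to any scheme whose masks satisfy the $T$-MDS mask condition. For such a scheme with $T=2$ it requires $q\geq N-T+1=N-1$. Our parameters satisfy $q=N-1$, so the bound holds with equality, provided its hypothesis holds for this scheme.

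\textbf{Step 3: verify the hypothesis.} The statement following Theorem~\ref{thm:projective_line_construction} says that the mask functions of the projective-line construction satisfy the $2$-MDS mask condition. This is the step that needs care, since that assertion is part of the construction's description and not of the theorem statement itself. I would check it directly from the construction as follows.

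\begin{itemize}
\item The masks are homogeneous monomials of the appropriate degree evaluated at the $N=q+1$ points of the projective line over $\Fq$: the $M$ elements of $\Fq^\times$ together with $0$ and $\infty$.
\item Each $2\times 2$ minor of the mask evaluation matrix is, up to nonzero scalars, a Vandermonde-type determinant. This uses that the two mask monomials have exponents differing by a unit modulo $M$.
\item Distinct projective points therefore give rank $2$, and each single point gives a nonzero row.
\end{itemize}

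Together these show that every set of at most two workers has full-rank mask matrix, which is the $2$-MDS mask condition. The same argument applies on the $v$ side. Hence Proposition~\ref{prop:mds_field_size_bound} is met with equality.
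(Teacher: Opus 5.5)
Your proposal is correct and follows the paper's proof. You specialize Theorem~\ref{thm:projective_line_construction} to $q=KL+K+L+1$, so that $H=\Fq^\times$ and $N=q+1$, and then compare with the bound $q\geq N-1$ of Proposition~\ref{prop:mds_field_size_bound} for $2$-MDS masks. Your Step~3 re-derives the $2$-MDS property, which the paper already proves in Proposition~\ref{prop:projective_privacy} using $\gcd(d,m)=1$ and $t^s=t$ on $H$, so you can simply cite that result.
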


For $K,L\geq2$, CATx also uses $KL+K+L+2$ workers but requires a primitive $(KL+K+L+2)$-th root of unity, whereas the projective-line construction requires an element of order $KL+K+L$~\cite{11195364}. The projective-line construction therefore achieves the same worker count over fields on which CATx cannot be realized.

\section{Examples}
\label{sec:examples}

We illustrate the framework with optimal $T=1$ schemes over $\F_3$ and $\F_2$ and with two constructions for $K=L=3$ and $T=2$. In each example, we specify the coefficient functions and worker encodings and then verify decodability and privacy.

\begin{figure*}[t]
\centering

\def\ComparisonTableScale{0.90}

\begin{minipage}[t]{0.32\textwidth}
\vspace{0pt}
\centering

{\small\textbf{(a) Cartesian}\par}

\smallskip

{\scriptsize Tensor products; $N=25$ over $\F_4$\par}

\medskip

\begingroup
\setlength{\arraycolsep}{2.5pt}
\renewcommand{\arraystretch}{1.18}
\scalebox{\ComparisonTableScale}{$
\begin{array}{c|ccc|cc}
\otimes
& e_1
& e_2
& e_3
& \cellcolor{blue!15}r
& \cellcolor{blue!15}s\\
\hline
e_1
& \cellcolor{red!12}e_1\otimes e_1
& \cellcolor{red!12}e_1\otimes e_2
& \cellcolor{red!12}e_1\otimes e_3
& e_1\otimes r
& e_1\otimes s\\
e_2
& \cellcolor{red!12}e_2\otimes e_1
& \cellcolor{red!12}e_2\otimes e_2
& \cellcolor{red!12}e_2\otimes e_3
& e_2\otimes r
& e_2\otimes s\\
e_3
& \cellcolor{red!12}e_3\otimes e_1
& \cellcolor{red!12}e_3\otimes e_2
& \cellcolor{red!12}e_3\otimes e_3
& e_3\otimes r
& e_3\otimes s\\
\hline
\cellcolor{green!15}r
& r\otimes e_1
& r\otimes e_2
& r\otimes e_3
& r\otimes r
& r\otimes s\\
\cellcolor{green!15}s
& s\otimes e_1
& s\otimes e_2
& s\otimes e_3
& s\otimes r
& s\otimes s
\end{array}
$}
\endgroup

\end{minipage}
\hfill
\begin{minipage}[t]{0.215\textwidth}
\vspace{0pt}
\centering

{\small\textbf{(b) $\mathsf{GASP}_1$}\par}

\smallskip

{\scriptsize Ordinary addition; $N=18$\par}

\medskip

\begingroup
\setlength{\arraycolsep}{3.2pt}
\renewcommand{\arraystretch}{1.18}
\scalebox{\ComparisonTableScale}{$
\begin{array}{c|ccc|cc}
+
& 0 & 3 & 6
& \cellcolor{blue!15}9
& \cellcolor{blue!15}10\\
\hline
0
& \cellcolor{red!12}0
& \cellcolor{red!12}3
& \cellcolor{red!12}6
& 9 & 10\\
1
& \cellcolor{red!12}1
& \cellcolor{red!12}4
& \cellcolor{red!12}7
& 10 & 11\\
2
& \cellcolor{red!12}2
& \cellcolor{red!12}5
& \cellcolor{red!12}8
& 11 & 12\\
\hline
\cellcolor{green!15}9
& 9 & 12 & 15 & 18 & 19\\
\cellcolor{green!15}12
& 12 & 15 & 18 & 21 & 22
\end{array}
$}
\endgroup

\end{minipage}
\hfill
\begin{minipage}[t]{0.215\textwidth}
\vspace{0pt}
\centering

{\small\textbf{(c) $\mathsf{CAT}_x$}\par}

\smallskip

{\scriptsize Modulo $17$; $N=17$\par}

\medskip

\begingroup
\setlength{\arraycolsep}{3.2pt}
\renewcommand{\arraystretch}{1.18}
\scalebox{\ComparisonTableScale}{$
\begin{array}{c|ccc|cc}
\oplus_{17}
& 0 & 1 & 2
& \cellcolor{blue!15}16
& \cellcolor{blue!15}3\\
\hline
0
& \cellcolor{red!12}0
& \cellcolor{red!12}1
& \cellcolor{red!12}2
& 16 & 3\\
4
& \cellcolor{red!12}4
& \cellcolor{red!12}5
& \cellcolor{red!12}6
& 3 & 7\\
8
& \cellcolor{red!12}8
& \cellcolor{red!12}9
& \cellcolor{red!12}10
& 7 & 11\\
\hline
\cellcolor{green!15}12
& 12 & 13 & 14 & 11 & 15\\
\cellcolor{green!15}13
& 13 & 14 & 15 & 12 & 16
\end{array}
$}
\endgroup

\end{minipage}
\hfill
\begin{minipage}[t]{0.215\textwidth}
\vspace{0pt}
\centering

{\small\textbf{(d) Projective line}\par}

\smallskip

{\scriptsize Addition under $\boxplus$; $N=17$\par}

\medskip

\begingroup
\setlength{\arraycolsep}{3.2pt}
\renewcommand{\arraystretch}{1.18}
\scalebox{\ComparisonTableScale}{$
\begin{array}{c|ccc|cc}
\boxplus
& 4 & 8 & 12
& \cellcolor{blue!15}0
& \cellcolor{blue!15}16\\
\hline
1
& \cellcolor{red!12}5
& \cellcolor{red!12}9
& \cellcolor{red!12}13
& 1 & 2\\
2
& \cellcolor{red!12}6
& \cellcolor{red!12}10
& \cellcolor{red!12}14
& 2 & 3\\
3
& \cellcolor{red!12}7
& \cellcolor{red!12}11
& \cellcolor{red!12}15
& 3 & 4\\
\hline
\cellcolor{green!15}0
& 4 & 8 & 12
& \cellcolor{yellow!18}0
& 1\\
\cellcolor{green!15}4
& 8 & 12 & 1
& 4
& \cellcolor{yellow!18}\infty
\end{array}
$}
\endgroup

\end{minipage}

\vspace{1em}

\caption{Comparison of four schemes for multiplying $A$ and $B$, with $K=L=3$ and $T=2$. Panel~(a) shows the Cartesian construction, which uses tensor products and $25$ workers over every finite field $\F_q$ with $q\geq4$, including $\F_4$, the smallest possible field in our model. Panel~(b) shows $\mathsf{GASP}_1$, an addition table of integer exponents with $18$ workers. It works, for example, over $\F_{27}$ and $\F_{29}$, and can be realized over every finite field with $q\geq53$. Panel~(c) shows $\mathsf{CAT}_x$, which uses addition modulo $17$ and $17$ workers. Its root-of-unity realization requires $17\mid(q-1)$, making $\F_{103}$ the smallest suitable field. Panel~(d) shows the projective-line construction, which reduces exponent sums modulo $15$ to labels $1,\ldots,15$, except that the sums $0$ and $20$ receive the separate labels $0$ and $\infty$ shown in yellow. This construction uses $17$ workers whenever $15\mid(q-1)$, with $\F_{16}$ the smallest suitable field. The workers evaluate at the fifteen elements of a multiplicative subgroup together with the two additional points $0$ and $\infty$. These additional points distinguish the yellow mask--mask entries from desired entries that agree with them on the subgroup.}
\label{fig:t2_table_comparison}
\end{figure*}

\subsection{Optimal $T=1$ Schemes over $\F_3$ and $\F_2$}
\label{sec:t1_examples}

Take $K=L=2$ and $T=1$. Thus, $A$ is partitioned into $A_1,A_2$, while $B$ is partitioned into $B_1,B_2$. The four products to be recovered are $A_1B_1,A_1B_2,A_2B_1$, and $A_2B_2$.

In the constructions below, we choose row and column functions on a finite set $\Omega$ and then select one evaluation point $p_i$ for each worker $i\in[N]$. A function $f:\Omega\to\Fq$ gives the coefficient function $i\mapsto f(p_i)$ on $[N]$. When the domain is clear, we use the same symbol for the original function and its evaluations. We use this procedure to construct an optimal scheme over $\F_3$ with eight workers and an optimal scheme over $\F_2$ with nine workers.

For the construction over $\F_3$, consider points $(\alpha,\beta,\gamma)\in\F_3^3$ and define $z(\alpha,\beta,\gamma)=\alpha$, $x(\alpha,\beta,\gamma)=\beta$, and $y(\alpha,\beta,\gamma)=\gamma$. We choose the row functions $z,x,\one$ and the column functions $z,y,\one$. If $p_i=(z_i,x_i,y_i)$ is assigned to worker $i$, set $a_1(i)=b_1(i)=z(p_i)$, $a_2(i)=x(p_i)$, $b_2(i)=y(p_i)$, and $u_1(i)=v_1(i)=1$. The encodings are
\[
X_i=z_iA_1+x_iA_2+R_1,
\qquad
Y_i=z_iB_1+y_iB_2+S_1.
\]

For the binary construction, consider points $(\alpha_1,\alpha_2,\beta_1,\beta_2)\in\F_2^4$ and use the four coordinate functions $x_1,x_2,y_1,y_2$. We choose the row functions $x_1,x_2,\one$ and the column functions $y_1,y_2,\one$. If $p_i=(x_{1,i},x_{2,i},y_{1,i},y_{2,i})$ is assigned to worker $i$, then
\[
X_i=x_{1,i}A_1+x_{2,i}A_2+R_1,
\qquad
Y_i=y_{1,i}B_1+y_{2,i}B_2+S_1.
\]

Figure~\ref{fig:t1_small_field_examples} shows the two function tables, the chosen evaluation points, and the resulting encodings. In both constructions, worker $i$ returns $Z_i=X_iY_i$.

For the construction over $\F_3$, the function table contains the eight functions $\one,z,x,y,z^2,zy,xz,xy$. The points in Figure~\ref{fig:t1_small_field_examples}(c) make these functions linearly independent. In a linear relation among them, the first three points force the coefficients of $\one,x,y$ to vanish, the fourth and fifth points force the coefficients of $z,z^2$ to vanish, and the final three points force the coefficients of $xz,zy,xy$ to vanish. The desired entries $z^2,zy,xz,xy$ are therefore linearly independent modulo $\Span\{\one,z,x,y\}$, so the scheme is linearly decodable.

For the binary construction, the function table contains the nine functions $\one,x_1,x_2,y_1,y_2,x_1y_1,x_1y_2,x_2y_1,x_2y_2$. The first five points in Figure~\ref{fig:t1_small_field_examples}(d) separate the constant and coordinate functions, while each of the final four points separates one product $x_ky_\ell$. The desired entries are therefore linearly independent modulo $\Span\{\one,x_1,x_2,y_1,y_2\}$, and the scheme is linearly decodable.

In both constructions, the mask functions are constant and nonzero. Each worker therefore receives independently masked encodings of $A$ and $B$, giving $T=1$ privacy.

The two examples attain the values in Theorem~\ref{thm:t1_optimality}: eight workers over $\F_3$ and nine workers over $\F_2$. They also show concretely why the nonbinary construction can save one worker by sharing the function $z$, while the identity $z^2=z$ prevents the same saving over $\F_2$.

\subsection{The Cartesian Construction over $\F_4$}
\label{sec:cartesian_example}

We next take $K=L=3$ and $T=2$. Corollary~\ref{cor:t2_field_feasibility} requires $q\geq4$, so $\F_4$ is the smallest field over which a scheme with these parameters can exist. We construct a scheme over $\F_4$ using $25$ workers.

Figure~\ref{fig:t2_table_comparison} shows the Cartesian and projective-line function tables, together with $\mathsf{GASP}_1$ and $\mathsf{CAT}_x$ for context~\cite{9004505,11195364}.

Write $\F_4=\{0,1,\omega,\omega^2\}$, where $\omega^2+\omega+1=0$. Define five functions from $\{1,\ldots,5\}$ to $\F_4$ through their values:
\[
\begin{array}{c|ccccc}
i & 1 & 2 & 3 & 4 & 5\\
\hline
e_1(i) & 1 & 0 & 0 & 0 & 0\\
e_2(i) & 0 & 1 & 0 & 0 & 0\\
e_3(i) & 0 & 0 & 1 & 0 & 0\\
r(i)   & 1 & 1 & 1 & 1 & 0\\
s(i)   & 0 & 1 & \omega & \omega^2 & 1
\end{array}.
\]
We use $e_1,e_2,e_3$ as the data functions and $r,s$ as the mask functions on both sides of the table. These five functions form a basis of all functions from $[5]$ to $\F_4$.

The workers are indexed by pairs $(i,j)\in[5]\times[5]$. Row functions are evaluated at $i$, and column functions are evaluated at $j$. For functions $f,g:[5]\to\F_4$, define $(f\otimes g)(i,j)=f(i)g(j)$. Thus, the entry in row $f$ and column $g$ is $f\otimes g$. The row and column labels in Figure~\ref{fig:t2_table_comparison}(a) are both $e_1,e_2,e_3,r,s$, and the nine desired entries are $e_k\otimes e_\ell$ for $k,\ell\in[3]$.

Evaluating the row functions at $i$ and the column functions at $j$ gives
\[
X_i=e_1(i)A_1+e_2(i)A_2+e_3(i)A_3+r(i)R_1+s(i)R_2
\]
and
\[
Y_j=e_1(j)B_1+e_2(j)B_2+e_3(j)B_3+r(j)S_1+s(j)S_2.
\]
The five possible encodings on each side are
\[
\begin{aligned}
X_1&=A_1+R_1,
&
Y_1&=B_1+S_1,
\\
X_2&=A_2+R_1+R_2,
&
Y_2&=B_2+S_1+S_2,
\\
X_3&=A_3+R_1+\omega R_2,
&
Y_3&=B_3+S_1+\omega S_2,
\\
X_4&=R_1+\omega^2R_2,
&
Y_4&=S_1+\omega^2S_2,
\\
X_5&=R_2,
&
Y_5&=S_2.
\end{aligned}
\]
Worker $(i,j)$ receives $(X_i,Y_j)$ and returns $Z_{ij}=X_iY_j$. Since every pair $(i,j)\in[5]\times[5]$ is used, the construction has $25$ workers.

Since $e_1,e_2,e_3,r,s$ form a basis of the functions on $[5]$, their $25$ tensor products form a basis of the functions on $[5]\times[5]$. Hence all $25$ entries of the Cartesian function table are linearly independent. In particular, the nine desired entries are linearly independent modulo the nuisance entries, so the scheme is linearly decodable.

The five mask-coefficient rows are
\[
(r(i),s(i))
\in
\left\{
(1,0),\,
(1,1),\,
(1,\omega),\,
(1,\omega^2),\,
(0,1)
\right\},
\]
and any two distinct rows are linearly independent. Consider two workers $(i,j)$ and $(i',j')$. If $i\neq i'$, their $A$-side mask rows are independent. If $i=i'$, they receive the same $A$-side encoding, so the second worker gives no additional observation of $A$. The same argument applies to the $B$-side using $j$ and $j'$. Theorem~\ref{thm:privacy_condition} therefore gives privacy against any two workers.

Across the $25$ workers, each $A$-side mask row is repeated five times, once for every value of $j$, and each $B$-side mask row is repeated five times, once for every value of $i$. The complete mask matrices are therefore not MDS. This repetition allows the construction to use $25$ workers over $\F_4$; imposing the $2$-MDS mask condition across all workers would require $q\geq24$.

A related rectangular-grid construction also uses $25$ workers but requires five distinct nonzero field elements on each side, whereas the Cartesian construction operates over $\F_4$, the smallest field on which any scheme with these parameters can exist~\cite{9681059}.

\subsection{The Projective-Line Construction over $\F_{16}$}
\label{sec:projective_example}

We continue with $K=L=3$ and $T=2$. The Cartesian construction operates over the smallest possible field but uses $25$ workers. We now choose a different function table that gives a $17$-worker scheme over $\F_{16}$.

For $0\leq a\leq4$, let $f_a(x,y)=x^ay^{4-a}$, and for $0\leq b\leq16$, let $g_b(x,y)=x^by^{16-b}$. Their products are homogeneous monomials of degree $20$. Writing $h_e(x,y)=x^ey^{20-e}$ for $0\leq e\leq20$, we have $f_ag_b=h_{a+b}$.

On the $A$-side, we use $f_1,f_2,f_3$ as the data functions and $f_0,f_4$ as the mask functions. On the $B$-side, we use $g_4,g_8,g_{12}$ as the data functions and $g_0,g_{16}$ as the mask functions. Thus, before evaluation, the exponent of each table entry is the sum of its row and column labels.

We evaluate the functions at the $17$ points of $\mathbb P^1(\F_{16})$: the points $(t,1)$ for $t\in\F_{16}^{\times}$ and the two endpoints $(0,1)$ and $(1,0)$.

At the fifteen points $(t,1)$, we have $h_e(t,1)=t^e$. Since $t^{15}=1$, exponents that differ by $15$ give the same values at these points. As functions on the evaluation set, $h_{16}=h_1$, $h_{17}=h_2$, $h_{18}=h_3$, and $h_{19}=h_4$. These identities also hold at the endpoints because all the functions involved vanish there.

The exponents $0$ and $20$ must be kept separate. The function $h_0$ is nonzero at $(0,1)$, while $h_{20}$ is nonzero at $(1,0)$. For the row and column labels above, define
\[
r\boxplus c
=
\begin{cases}
0, & r+c=0,\\
\infty, & r+c=20,\\
\langle r+c\rangle_{15}, & \text{otherwise},
\end{cases}
\]
where $\langle e\rangle_{15}$ is the unique element of $[15]$ congruent to $e$ modulo $15$. The label $j\in[15]$ represents $h_j$, the label $0$ represents $h_0$, and the label $\infty$ represents $h_{20}$.

This gives the projective addition table in Figure~\ref{fig:t2_table_comparison}(d). Its row labels are $1,2,3,0,4$, corresponding to $f_1,f_2,f_3,f_0,f_4$, and its column labels are $4,8,12,0,16$, corresponding to $g_4,g_8,g_{12},g_0,g_{16}$.

At a point $(x,y)\in\mathbb P^1(\F_{16})$, define
\begin{align*}
X(x,y)
&=f_1(x,y)A_1+f_2(x,y)A_2+f_3(x,y)A_3\\
&\quad+f_0(x,y)R_1+f_4(x,y)R_2,\\[0.4em]
Y(x,y)
&=g_4(x,y)B_1+g_8(x,y)B_2+g_{12}(x,y)B_3\\
&\quad+g_0(x,y)S_1+g_{16}(x,y)S_2.
\end{align*}
The worker assigned to $(x,y)$ receives $(X(x,y),Y(x,y))$ and returns $Z(x,y)=X(x,y)Y(x,y)$.

At a point $(t,1)$ with $t\in\F_{16}^{\times}$, the encodings are
\begin{align*}
X(t,1)
&=
tA_1+t^2A_2+t^3A_3+R_1+t^4R_2,\\
Y(t,1)
&=
t^4B_1+t^8B_2+t^{12}B_3+S_1+tS_2,
\end{align*}
where we used $t^{16}=t$. At the endpoints, the encoded pairs are $(X(0,1),Y(0,1))=(R_1,S_1)$ and $(X(1,0),Y(1,0))=(R_2,S_2)$. This specifies all $17$ worker assignments.

The functions $h_0,h_1,\ldots,h_{15},h_{20}$ are linearly independent on the evaluation set. Indeed, in a linear relation among them, evaluation at $(0,1)$ forces the coefficient of $h_0$ to vanish, and evaluation at $(1,0)$ forces the coefficient of $h_{20}$ to vanish. The remaining relation has the form $\sum_{e=1}^{15}c_et^e=0$ for every $t\in\F_{16}^{\times}$. Dividing by $t$ gives a polynomial of degree at most $14$ that vanishes at all fifteen nonzero elements of $\F_{16}$, so every coefficient is zero.

The nine desired entries in Figure~\ref{fig:t2_table_comparison}(d) have labels $5,6,7,9,10,11,13,14,15$. The nuisance entries are spanned by the functions with labels $0,\infty,1,2,3,4,8,12$. Together, the desired and nuisance labels are exactly the seventeen independent labels $0,1,\ldots,15,\infty$. The desired functions are therefore linearly independent modulo the nuisance space, and the scheme is linearly decodable.

At a point $(t,1)$, the $A$-side mask coefficients are $(f_0(t,1),f_4(t,1))=(1,t^4)$. Since $\gcd(4,15)=1$, the map $t\mapsto t^4$ permutes $\F_{16}^{\times}$, so these fifteen mask rows are distinct. The endpoints give the additional rows $(1,0)$ and $(0,1)$. Any two of the resulting seventeen rows are linearly independent.

Similarly, the $B$-side mask coefficients at $(t,1)$ are $(g_0(t,1),g_{16}(t,1))=(1,t)$, and the endpoints again give $(1,0)$ and $(0,1)$. Hence the $B$-side mask matrix also has every pair of rows linearly independent. Both sides satisfy the $2$-MDS mask condition, so the scheme is private against any two workers.

Thus, the projective-line construction gives a private and linearly decodable scheme with $17$ workers over $\F_{16}$. The $\mathsf{CAT}_x$ construction also uses $17$ workers for these parameters but requires a primitive seventeenth root of unity, so its smallest admissible field is $\F_{103}$~\cite{11195364}.

\section{Decodability and Privacy}
\label{sec:function_table_proofs}

We now prove the decodability and privacy conditions stated in Section~\ref{sec:main_results} and record the dimension bound used below.

\subsection{Decodability}

For $\lambda,f:[N]\to\Fq$, write $\langle\lambda,f\rangle=\sum_{i=1}^N\lambda(i)f(i)$. Expanding a linear combination of the worker responses gives
\begin{align}
\sum_{i=1}^N\lambda(i)Z_i
&=
\sum_{k=1}^K\sum_{\ell=1}^L
\langle\lambda,a_kb_\ell\rangle A_kB_\ell
\notag\\
&\quad+
\sum_{k=1}^K\sum_{t=1}^T
\langle\lambda,a_kv_t\rangle A_kS_t
\notag\\
&\quad+
\sum_{s=1}^T\sum_{\ell=1}^L
\langle\lambda,u_sb_\ell\rangle R_sB_\ell
\notag\\
&\quad+
\sum_{s=1}^T\sum_{t=1}^T
\langle\lambda,u_sv_t\rangle R_sS_t.
\label{eq:lambda_combination_general}
\end{align}
Thus, a linear combination cancels all nuisance terms precisely when $\langle\lambda,n\rangle=0$ for every $n\in\Ncal$.

\begin{proof}[Proof of Theorem~\ref{thm:decodability_condition}]
Suppose first that the desired entries are linearly independent modulo $\Ncal$. Then
\[
\Ncal
\oplus
\Span\{a_{k'}b_{\ell'}:k'\in[K],\ \ell'\in[L]\}
\]
is a direct sum. Fix $(k,\ell)\in[K]\times[L]$ and define a linear functional on this space that vanishes on $\Ncal$ and returns the coefficient of $a_kb_\ell$ in the desired component.

Extend this functional to all of $\Fq^{[N]}$. Every linear functional on $\Fq^{[N]}$ has the form $f\mapsto\langle\lambda,f\rangle$ for some $\lambda:[N]\to\Fq$. Hence there is a vector $\lambda^{(k,\ell)}$ such that
\[
\langle\lambda^{(k,\ell)},a_{k'}b_{\ell'}\rangle
=
\begin{cases}
1, & (k',\ell')=(k,\ell),\\
0, & (k',\ell')\neq(k,\ell),
\end{cases}
\]
and $\langle\lambda^{(k,\ell)},n\rangle=0$ for every $n\in\Ncal$. Substituting this vector into \eqref{eq:lambda_combination_general} gives
\[
A_kB_\ell
=
\sum_{i=1}^N\lambda^{(k,\ell)}(i)Z_i.
\]
This holds for every $(k,\ell)$, so the scheme is linearly decodable.

Conversely, suppose that the scheme is linearly decodable. For each $(k,\ell)$, let $\lambda^{(k,\ell)}$ satisfy
\[
A_kB_\ell
=
\sum_{i=1}^N\lambda^{(k,\ell)}(i)Z_i
\]
for every choice of the data blocks and masks.

Set all masks to zero and all data blocks to zero except $A_{k'}$ and $B_{\ell'}$. Choosing these two blocks so that their product is nonzero gives
\[
\langle\lambda^{(k,\ell)},a_{k'}b_{\ell'}\rangle
=
\begin{cases}
1, & (k',\ell')=(k,\ell),\\
0, & (k',\ell')\neq(k,\ell).
\end{cases}
\]

Next, set all variables to zero except one pair $A_{k'}$ and $S_t$. The desired product $A_kB_\ell$ is then zero, so choosing the two nonzero matrices with a nonzero product gives $\langle\lambda^{(k,\ell)},a_{k'}v_t\rangle=0$. Applying the same argument to one pair $R_s,B_{\ell'}$ and then to one pair $R_s,S_t$ gives
\[
\langle\lambda^{(k,\ell)},u_sb_{\ell'}\rangle=0
\qquad\text{and}\qquad
\langle\lambda^{(k,\ell)},u_sv_t\rangle=0.
\]
Since the nuisance entries span $\Ncal$, the vector $\lambda^{(k,\ell)}$ annihilates every function in $\Ncal$.

Now suppose that
\[
\sum_{k=1}^K\sum_{\ell=1}^Lc_{k\ell}a_kb_\ell
\in\Ncal.
\]
Taking the inner product with $\lambda^{(k,\ell)}$ gives $c_{k\ell}=0$. Since this holds for every $(k,\ell)$, all coefficients are zero. The desired entries are therefore linearly independent modulo $\Ncal$.
\end{proof}

\begin{corollary}
\label{cor:dimension_bound}
Every linearly decodable function table satisfies
\[
N\geq KL+\dim\Ncal.
\]
\end{corollary}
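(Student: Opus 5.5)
The bound will follow from a dimension count in the ambient space $\Fq^{[N]}$, using Theorem~\ref{thm:decodability_condition}. Set $\mathcal D=\Span\{a_kb_\ell:k\in[K],\ \ell\in[L]\}$ for the span of the desired entries. The aim is to exhibit a subspace of $\Fq^{[N]}$ of dimension $KL+\dim\Ncal$. Since $\Fq^{[N]}$ has dimension exactly $N$, this will give the inequality.

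First I invoke Theorem~\ref{thm:decodability_condition}. Linear decodability says that the $KL$ desired entries are linearly independent modulo $\Ncal$. Two consequences follow.

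1. The functions $a_kb_\ell$ are themselves linearly independent, since any relation among them would lie in $\{0\}\subseteq\Ncal$. Hence $\dim\mathcal D=KL$.
2. We have $\mathcal D\cap\Ncal=\{0\}$. Indeed, an element of the intersection is a linear combination of the desired entries that lies in $\Ncal$, so all its coefficients vanish.

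Therefore the sum $\mathcal D+\Ncal$ is direct, and
\[
\dim(\mathcal D\oplus\Ncal)=KL+\dim\Ncal .
\]

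Finally, $\mathcal D\oplus\Ncal\subseteq\Fq^{[N]}$, and this space has the $N$ indicator functions of the workers as a basis. This gives $KL+\dim\Ncal\leq N$. Equivalently, the quotient $\Fq^{[N]}/\Ncal$ has dimension $N-\dim\Ncal$ and must contain $KL$ independent classes.

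There is no real obstacle in this argument. The only point needing care is making explicit that independence modulo $\Ncal$ yields both $\dim\mathcal D=KL$ and the trivial intersection $\mathcal D\cap\Ncal=\{0\}$, so that the dimensions add.
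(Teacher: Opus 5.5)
Your proof is correct and is essentially the paper's argument. The paper phrases it through the quotient: Theorem~\ref{thm:decodability_condition} gives $KL$ independent classes in $\Fq^{[N]}/\Ncal$, a space of dimension $N-\dim\Ncal$, and this is equivalent to your direct-sum count $\dim(\mathcal D\oplus\Ncal)=KL+\dim\Ncal\leq N$, which you also note at the end.
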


\begin{proof}
Theorem~\ref{thm:decodability_condition} gives $KL$ linearly independent classes in $\Fq^{[N]}/\Ncal$. Since this quotient has dimension $N-\dim\Ncal$, we have $KL\leq N-\dim\Ncal$, which is equivalent to the stated bound.
\end{proof}

\subsection{Privacy}

Fix a set of workers $\tau$. Privacy holds exactly when, on each side, the data coefficient columns restricted to $\tau$ lie in the span of the corresponding mask coefficient columns.

\begin{proof}[Proof of Theorem~\ref{thm:privacy_condition}]
Suppose first that
\[
\rank[\mathsf A_\tau\ \mathsf U_\tau]
=
\rank\mathsf U_\tau
\qquad\text{and}\qquad
\rank[\mathsf B_\tau\ \mathsf V_\tau]
=
\rank\mathsf V_\tau.
\]
The first equality means that every column of $\mathsf A_\tau$ lies in the column span of $\mathsf U_\tau$. Thus, for every $k\in[K]$, there are scalars $c_{1k},\ldots,c_{Tk}$ such that
\[
a_k(i)=\sum_{t=1}^Tc_{tk}u_t(i)
\qquad\text{for every }i\in\tau.
\]
Consequently, for every $i\in\tau$,
\begin{align*}
X_i
&=
\sum_{k=1}^Ka_k(i)A_k
+
\sum_{t=1}^Tu_t(i)R_t\\
&=
\sum_{t=1}^Tu_t(i)
\left(
R_t+\sum_{k=1}^Kc_{tk}A_k
\right).
\end{align*}

Fix any values of $A$ and $B$. Since $R_1,\ldots,R_T$ are independent and uniform, the shifted matrices $R_t+\sum_{k=1}^Kc_{tk}A_k$, for $t\in[T]$, have the same joint distribution as $R_1,\ldots,R_T$. It follows that the conditional distribution of $X_\tau$ does not depend on $A$.

The second rank equality gives scalars $d_{t\ell}$ such that
\[
b_\ell(i)=\sum_{t=1}^Td_{t\ell}v_t(i)
\qquad\text{for every }i\in\tau.
\]
The same argument shows that the conditional distribution of $Y_\tau$ does not depend on $B$. Since the two mask families are independent, the joint conditional distribution of $(X_\tau,Y_\tau)$ is the same for every value of $(A,B)$. Therefore $I(A,B;X_\tau,Y_\tau)=0$ for every joint distribution of $(A,B)$.

Conversely, suppose that $I(A,B;X_\tau,Y_\tau)=0$ for every joint distribution of $(A,B)$. We show that the first rank equality must hold.

Suppose otherwise. Then some column of $\mathsf A_\tau$ does not belong to the column span of $\mathsf U_\tau$. Hence there is a vector $\lambda\in\Fq^\tau$ such that
\[
\lambda^{\mathsf T}\mathsf U_\tau=0
\qquad\text{but}\qquad
\lambda^{\mathsf T}\mathsf A_\tau\neq0.
\]
Choose $k\in[K]$ such that
\[
c=\sum_{i\in\tau}\lambda(i)a_k(i)\neq0.
\]

Fix $B$, let all blocks of $A$ other than $A_k$ be zero, and let $A_k$ be equally likely to be either the zero matrix or a fixed nonzero matrix $E$. The masks cancel in the linear combination
\[
\sum_{i\in\tau}\lambda(i)X_i=cA_k.
\]
This value is zero when $A_k=0$ and equals the nonzero matrix $cE$ when $A_k=E$. The observations $X_\tau$ therefore reveal which value of $A_k$ was chosen, contradicting privacy. We conclude that every column of $\mathsf A_\tau$ belongs to the column span of $\mathsf U_\tau$, and hence $\rank[\mathsf A_\tau\ \mathsf U_\tau]
=
\rank\mathsf U_\tau$. The same argument on the $B$-side gives
$\rank[\mathsf B_\tau\ \mathsf V_\tau]
=
\rank\mathsf V_\tau$.
\end{proof}

\begin{proof}[Proof of Corollary~\ref{cor:mds_privacy}]
Let $\tau\subseteq[N]$ with $|\tau|\leq T$. The $T$-MDS mask condition gives $\rank\mathsf U_\tau=|\tau|$, so the columns of $\mathsf U_\tau$ span all of $\Fq^\tau$. Every column of $\mathsf A_\tau$ therefore belongs to this span, and $\rank[\mathsf A_\tau\ \mathsf U_\tau]
=
\rank\mathsf U_\tau$.
The same argument gives $\rank[\mathsf B_\tau\ \mathsf V_\tau]
=
\rank\mathsf V_\tau$.
Theorem~\ref{thm:privacy_condition} now shows that the observations of every set of at most $T$ workers are independent of $(A,B)$. Hence the scheme is $T$-private.
\end{proof}

Thus, the MDS mask condition is sufficient but not necessary: dependent mask rows are allowed whenever every linear relation among them also annihilates the data coefficients.

\section{Optimality for $T=1$}
\label{sec:t1_constructions}

We now prove Theorem~\ref{thm:t1_optimality}. For achievability, we choose coordinate functions on a sufficiently large set and then select exactly the number of points needed for the workers. For the converse, we normalize the two mask functions to $\one$ and show that decodability forces the nuisance space to have dimension at least $K+L$. Over $\F_2$, one additional nuisance dimension is unavoidable.

\subsection{Achievability}

We begin with two elementary facts about functions over finite fields.

\begin{lemma}[Reduced monomials]
\label{lem:reduced_monomials}
Let $\Omega=\Fq^d$, with coordinate functions $x_1,\ldots,x_d$. The monomials $x_1^{e_1}\cdots x_d^{e_d}$ with $0\leq e_j\leq q-1$ form a basis of the functions from $\Omega$ to $\Fq$.
\end{lemma}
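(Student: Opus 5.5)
The plan is a dimension count followed by a linear independence argument. The space of functions $\Omega\to\Fq$ has dimension $|\Omega|=q^d$. The reduced monomials $x_1^{e_1}\cdots x_d^{e_d}$ with $0\leq e_j\leq q-1$ are indexed by $\{0,\ldots,q-1\}^d$, so there are exactly $q^d$ of them. It therefore suffices to prove one of two things: that they span, or that they are linearly independent.

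I would prove spanning by Lagrange-type interpolation. For a point $p=(p_1,\ldots,p_d)\in\Fq^d$, consider the indicator function
\[
\delta_p(x)=\prod_{j=1}^d\bigl(1-(x_j-p_j)^{q-1}\bigr).
\]
This works because $c^{q-1}=1$ for $c\neq0$ and $0^{q-1}=0$, so each factor equals $1$ if $x_j=p_j$ and $0$ otherwise. Expanding the product gives a polynomial in which each variable $x_j$ appears with degree at most $q-1$. Hence $\delta_p$ is a linear combination of reduced monomials.

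Every function $f:\Omega\to\Fq$ can be written as $f=\sum_{p}f(p)\delta_p$. So the reduced monomials span the whole function space. Since their number equals its dimension $q^d$, they form a basis.

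The only point needing care is the degree bound on the indicator product: after binomial expansion, no exponent exceeds $q-1$. This holds because each factor involves only its own variable $x_j$, with degree at most $q-1$, and distinct factors involve distinct variables.

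If one prefers independence instead, an alternative route is induction on $d$. A nonzero univariate polynomial of degree at most $q-1$ cannot vanish on all $q$ points of $\Fq$. One writes a vanishing reduced polynomial as $\sum_e c_e(x_1,\ldots,x_{d-1})x_d^e$ and applies the univariate fact for each fixed value of $(x_1,\ldots,x_{d-1})$. This shows every $c_e$ vanishes as a function, and then all its coefficients vanish by the induction hypothesis. I would use the counting-plus-spanning argument as the main line, since it is shortest.
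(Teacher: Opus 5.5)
Your proof is correct, but it runs the dimension count in the opposite direction from the paper. The paper proves linear independence: it cites Lemma~1 of Alon's Combinatorial Nullstellensatz paper, which says that a polynomial of degree at most $q-1$ in each variable vanishing on all of $\Fq^d$ is the zero polynomial. It then concludes from having $q^d$ independent functions in a $q^d$-dimensional space.

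You instead prove spanning. You write each point indicator explicitly as $\delta_p=\prod_j(1-(x_j-p_j)^{q-1})$, using $c^{q-1}=1$ for $c\neq0$, and then count. Your route is self-contained and gives an explicit expansion of any function in the reduced monomials. It does rely on the Fermat identity, which is specific to the full field $\Fq$. The cited independence lemma applies verbatim to any grid $S_1\times\cdots\times S_d$ with degree bounds $|S_j|-1$. Your argument extends there too, by replacing each factor with a univariate Lagrange basis polynomial.

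Your fallback induction on $d$ is essentially the proof of the lemma the paper cites, so that variant coincides with the paper's route.
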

\begin{proof}
If a polynomial $p$ of degree at most $q-1$ in each variable vanishes at every point of $\Fq^d$, then $p=0$ by \cite[Lemma 1]{ALON_1999}. Thus, the $q^d$ listed monomials are linearly independent. As the space of functions from $\Omega$ to $\Fq$ has dimension $|\Omega|=q^d$, the monomials form a basis.
\end{proof}

\begin{lemma}[Selecting worker points]
\label{lem:selecting_worker_points}
Let $f_1,\ldots,f_m$ be linearly independent functions on a finite set $\Omega$. Then there are points $\omega_1,\ldots,\omega_m\in\Omega$ such that the restrictions of these functions to the selected points remain linearly independent.
\end{lemma}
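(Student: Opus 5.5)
We prove this by linear algebra on the $|\Omega|\times m$ evaluation matrix $M$ whose $(\omega,j)$ entry is $f_j(\omega)$. The columns of $M$ are the functions $f_1,\ldots,f_m$ written in the standard basis of $\Fq^{\Omega}$. Their linear independence says exactly that $\rank M=m$.

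Since row rank equals column rank, $M$ has $m$ linearly independent rows. Let them be indexed by $\omega_1,\ldots,\omega_m$; these points are necessarily distinct, since a repeated row would make the set dependent. The $m\times m$ submatrix $M'$ formed by these rows is then invertible.

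The restriction of $f_j$ to $\{\omega_1,\ldots,\omega_m\}$ is precisely the $j$-th column of $M'$. Because $M'$ is invertible, these restricted columns are linearly independent. This gives the conclusion.

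For an alternative, self-contained argument, one can choose the points greedily. Maintain points $\omega_1,\ldots,\omega_r$ such that the evaluation vectors $(f_1(\omega_i),\ldots,f_m(\omega_i))$ are independent in $\Fq^m$. If $r<m$, take a nonzero $c\in\Fq^m$ orthogonal to all of them. The function $\sum_j c_jf_j$ is nonzero by independence of the $f_j$, so it is nonzero at some point $\omega_{r+1}$. That point's evaluation vector has nonzero pairing with $c$, so it lies outside the current span, and we add it.

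After $m$ steps the $m\times m$ matrix $M'$ is invertible, and we conclude as above. The only point needing care is identifying restriction with taking the corresponding rows of $M$; no real obstacle arises.
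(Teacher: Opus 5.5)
Your proposal is correct and follows the paper's own proof: form the $|\Omega|\times m$ evaluation matrix, note that linear independence gives column rank $m$ and hence $m$ linearly independent rows, and take the corresponding points. Your greedy alternative is also valid, though the paper does not need it.
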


\begin{proof}
Form the matrix whose rows are indexed by the points of $\Omega$, whose columns are indexed by $f_1,\ldots,f_m$, and whose $(\omega,j)$ entry is $f_j(\omega)$. Since the functions are linearly independent, the matrix has column rank $m$. It therefore contains $m$ linearly independent rows. Choosing the corresponding points proves the result.
\end{proof}

We first consider nonbinary fields. The construction uses one nonconstant function on both sides of the table. This reduces the nuisance dimension by one, while its square remains a distinct function.

\begin{proof}[Achievability for $q\geq3$ in Theorem~\ref{thm:t1_optimality}]
Let $\Omega=\Fq^{K+L-1}$, with coordinate functions $z,x_1,\ldots,x_{K-1},y_1,\ldots,y_{L-1}$. If $K=1$ or $L=1$, the corresponding list is empty.

On $\Omega$, set $u_1=v_1=\one$, $a_1=b_1=z$, $a_k=x_{k-1}$ for $2\leq k\leq K$, and $b_\ell=y_{\ell-1}$ for $2\leq\ell\leq L$.

The nuisance functions are $\one,z,x_1,\ldots,x_{K-1},y_1,\ldots,y_{L-1}$. There are $K+L$ of them. The desired functions are $z^2$, the functions $x_iz$ for $i\in[K-1]$, the functions $zy_j$ for $j\in[L-1]$, and the functions $x_iy_j$ for $i\in[K-1]$ and $j\in[L-1]$. There are $1+(K-1)+(L-1)+(K-1)(L-1)=KL$ desired functions.

Since $q\geq3$, all $KL+K+L$ functions in these two lists are distinct reduced monomials. Lemma~\ref{lem:reduced_monomials} shows that they are linearly independent on $\Omega$. Lemma~\ref{lem:selecting_worker_points} therefore gives $N=KL+K+L$ points on which they remain linearly independent. Assign one selected point to each worker and evaluate the chosen row and column functions there.

The desired entries are therefore linearly independent modulo $\Ncal$, so the scheme is linearly decodable.

The mask coefficient is one at every worker on both sides. Hence the two mask functions satisfy the $1$-MDS mask condition, and the scheme is private against any single worker.
\end{proof}

Over $\F_2$, this construction is impossible because every function $z:[N]\to\F_2$ satisfies $z^2=z$. We instead use separate coordinate functions on the two sides.

\begin{proof}[Achievability over $\F_2$ in Theorem~\ref{thm:t1_optimality}]
Let $\Omega=\F_2^{K+L}$, with coordinate functions $x_1,\ldots,x_K,y_1,\ldots,y_L$.

On $\Omega$, set $u_1=v_1=\one$, $a_k=x_k$ for every $k\in[K]$, and $b_\ell=y_\ell$ for every $\ell\in[L]$.

The nuisance functions are $\one,x_1,\ldots,x_K,y_1,\ldots,y_L$, and the desired functions are $x_ky_\ell$ for $k\in[K]$ and $\ell\in[L]$.

All $KL+K+L+1$ functions are distinct squarefree monomials. Lemma~\ref{lem:reduced_monomials} shows that they are linearly independent on $\Omega$. Lemma~\ref{lem:selecting_worker_points} therefore gives $N=KL+K+L+1$ points on which they remain linearly independent. Assign one selected point to each worker and evaluate the chosen row and column functions there.

The desired entries are therefore linearly independent modulo $\Ncal$, so the scheme is linearly decodable.

Both masks have coefficient one at every worker, so the scheme is private against any single worker.
\end{proof}

\subsection{Converse}

We next show that the preceding constructions use the minimum possible number of workers. The first step is to normalize the mask functions.

\begin{lemma}[Normalizing the masks for $T=1$]
\label{lem:t1_normalization}
Every private and linearly decodable function table with $T=1$ can be replaced, without increasing the number of workers, by one with $u_1=v_1=\one$.
\end{lemma}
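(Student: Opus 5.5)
Let a private and linearly decodable table with $T=1$ be given. The plan is to first remove every worker at which the mask coefficient vanishes on either side, and then rescale each remaining worker's encoding so that both mask coefficients become one. Neither operation increases the number of workers. Both preserve privacy and linear decodability, and together they produce $u_1=v_1=\one$.

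\emph{Step 1: privacy forces data to vanish where the mask vanishes.} Take $\tau=\{i\}$ in Theorem~\ref{thm:privacy_condition}. Suppose $u_1(i)=0$. Then $\rank\mathsf U_{\{i\}}=0$, so the rank condition forces $a_k(i)=0$ for every $k$. The analogous statement holds with $v_1$ and the $b_\ell$. Consequently, if $u_1(i)=0$, every table entry in a row labeled $a_k$ or $u_1$ vanishes at $i$, so every entry of the table vanishes at $i$. The same conclusion holds if $v_1(i)=0$. Let $W=\{i: u_1(i)\neq0,\ v_1(i)\neq0\}$.

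\emph{Step 2: deleting the workers outside $W$.} Every desired and nuisance function vanishes identically off $W$. So a linear relation among restrictions to $W$ is the same as a linear relation on $[N]$. Hence the desired entries remain linearly independent modulo $\Ncal$ after restricting to $W$, and Theorem~\ref{thm:decodability_condition} gives decodability of the restricted scheme. Privacy of the restricted scheme is inherited, since its subsets of workers are subsets of the original workers. We have $|W|\leq N$. The set $W$ is nonempty, since otherwise every $a_kb_\ell=0$, contradicting decodability when $KL\geq1$.

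\emph{Step 3: pointwise rescaling.} For $i\in W$, replace $a_k(i),u_1(i)$ by $a_k(i)/u_1(i),\,1$, and replace $b_\ell(i),v_1(i)$ by $b_\ell(i)/v_1(i),\,1$. Every table entry at worker $i$ is then multiplied by the nonzero scalar $c_i=(u_1(i)v_1(i))^{-1}$. The map $D:f\mapsto(c_if(i))_i$ is an invertible linear map on $\Fq^{W}$. It sends each old entry to the corresponding new entry, and therefore sends the old $\Ncal$ onto the new nuisance space. So linear independence of the desired entries modulo nuisance is preserved, and decodability follows again from Theorem~\ref{thm:decodability_condition}. Privacy follows from Corollary~\ref{cor:mds_privacy}, since a nowhere-zero mask function satisfies the $1$-MDS mask condition.

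The only step needing care is Step 1. There we must use the single-worker privacy condition to guarantee that workers with a zero mask coefficient carry no information. Without that, we could neither discard them nor rescale at them. The rest is bookkeeping that the invertible diagonal map preserves the quotient structure.
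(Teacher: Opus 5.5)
Your proof is correct and follows the paper's argument: single-worker privacy forces the data coefficients to vanish wherever a mask coefficient does, those workers are deleted, and the remaining workers are rescaled pointwise by $u_1(i)$ and $v_1(i)$. The only difference is that you check decodability and privacy of the modified table explicitly, via Theorem~\ref{thm:decodability_condition} with the invertible diagonal map and via Corollary~\ref{cor:mds_privacy}, whereas the paper simply notes that these are invertible rescalings of the workers' observations and responses.
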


\begin{proof}
Fix a worker $i$. If $u_1(i)=0$, then the privacy condition for $\tau=\{i\}$ gives $\rank[\mathsf A_\tau\ \mathsf U_\tau]=\rank\mathsf U_\tau=0$. Thus $a_k(i)=0$ for every $k\in[K]$. It follows that $X_i=0$ and hence $Z_i=0$, so the worker can be removed without affecting privacy or decodability. Similarly, if $v_1(i)=0$, privacy gives $b_\ell(i)=0$ for every $\ell\in[L]$, so $Y_i=0$ and the worker can again be removed.

We may therefore assume that $u_1(i)$ and $v_1(i)$ are nonzero at every worker. Define
\[
\widetilde X_i
=
\frac{X_i}{u_1(i)}
=
R_1+\sum_{k=1}^K\frac{a_k(i)}{u_1(i)}A_k
\]
and
\[
\widetilde Y_i
=
\frac{Y_i}{v_1(i)}
=
S_1+\sum_{\ell=1}^L\frac{b_\ell(i)}{v_1(i)}B_\ell.
\]
The corresponding response is $\widetilde Z_i=Z_i/(u_1(i)v_1(i))$. These are fixed invertible rescalings of the workers' observations and responses, so they preserve privacy and linear decodability. The resulting mask functions are $u_1=v_1=\one$.
\end{proof}

Once the masks are constant, the nuisance space is generated by $\one$ together with the data functions from the two sides. The next lemma bounds how much the two sides can overlap.

\begin{lemma}[The constant-mask case]
\label{lem:constant_mask_lower_bound}
Let $u_1=v_1=\one$. If the function table is linearly decodable, then $\dim\Ncal\geq K+L$. If $q=2$, then $\dim\Ncal\geq K+L+1$.
\end{lemma}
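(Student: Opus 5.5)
The plan is to reduce everything to linear-algebraic statements about $\Acal$ and $\Bcal$, and to derive each contradiction from the following consequence of Theorem~\ref{thm:decodability_condition}.

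\emph{Key principle.} Call an element $\sum_{k,\ell}c_{k\ell}\,a_k\otimes b_\ell$ of the formal tensor space a formal tensor, and let its evaluation be the function $\sum c_{k\ell}a_kb_\ell$. If a nonzero formal tensor evaluates to a function in $\Ncal$, decodability fails.

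Since $u_1=v_1=\one$, we have $\Ncal=\Acal+\Bcal+\Span\{\one\}$. The argument then proceeds in the following steps.

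\emph{Step 1 (full dimension).} First I show $\dim\Acal=K$ and $\dim\Bcal=L$. Suppose $\sum_k c_ka_k=0$ with not all $c_k$ zero. Then $\sum_k c_ka_kb_1=0\in\Ncal$ is a nontrivial combination of desired entries, which contradicts Theorem~\ref{thm:decodability_condition}. The same argument works on the $B$-side. Hence $\dim(\Acal+\Bcal)=K+L-\dim(\Acal\cap\Bcal)$.

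\emph{Step 2 ($\dim(\Acal\cap\Bcal)\le1$).} Suppose $f,g\in\Acal\cap\Bcal$ are linearly independent. Write $f=\sum\alpha_ka_k=\sum\beta_\ell b_\ell$ and $g=\sum\gamma_ka_k=\sum\delta_\ell b_\ell$. The function $fg$ then has two expansions, whose difference is $\sum(\alpha_k\delta_\ell-\gamma_k\beta_\ell)a_kb_\ell=0$. The coefficient matrix $\alpha\delta^{\mathsf T}-\gamma\beta^{\mathsf T}$ is nonzero: $\alpha,\gamma$ are independent by Step 1, and $\beta,\delta\neq0$. This contradicts decodability.

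\emph{Step 3 (general $q$).} If $\dim(\Acal\cap\Bcal)=0$, or if $\one\notin\Acal+\Bcal$, then $\dim\Ncal\ge K+L$ immediately. So it remains to treat $\Acal\cap\Bcal=\Span\{f\}$ together with $\one=a+b$, where $a\in\Acal$ and $b\in\Bcal$. Here $f=f\cdot\one\in\Ncal$, and $f=fa+fb$ is the evaluation of the formal tensor $a\otimes f+f\otimes b$. This is legitimate because $f\in\Bcal$ when paired with $a$, and $f\in\Acal$ when paired with $b$.
\begin{itemize}
\item If this tensor is nonzero, the key principle gives a contradiction.
\item If it is zero, then $a$ and $b$ are both multiples of $f$, so $\one\in\Span\{f\}\subseteq\Acal\cap\Bcal$. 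The nonzero tensor $\one\otimes\one$ then evaluates to $\one\in\Ncal$, again a contradiction.
\end{itemize}
Hence $\dim\Ncal\ge K+L$.

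\emph{Step 4 ($q=2$).} Now I must exclude every case in which $\dim\Ncal=K+L$. The tool is the identity $h^2=h$, valid for every $h:[N]\to\F_2$.
\begin{itemize}
\item If $\Acal\cap\Bcal=\Span\{f\}$ with $f\neq0$, then $f\otimes f$ is nonzero and evaluates to $f^2=f\in\Ncal$, a contradiction.
\item Otherwise $\Acal\cap\Bcal=0$, and we must rule out $\one\in\Acal+\Bcal$. Write $\one=a+b$.
\begin{itemize}
\item If $b=0$, then $\one\in\Acal$, and $\one\otimes b_1$ evaluates to $b_1\in\Ncal$, a contradiction.
\item The case $a=0$ is symmetric.
\item If $a,b\neq0$, then $a=a\cdot\one=a^2+ab=a+ab$, so the nonzero tensor $a\otimes b$ evaluates to $0\in\Ncal$, a contradiction.
\end{itemize}
\end{itemize}
Therefore $\Ncal=(\Acal\oplus\Bcal)\oplus\Span\{\one\}$ has dimension $K+L+1$.

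I expect the main obstacle to be Step 3. The delicate point is producing a nonzero formal tensor from the relation $\one=a+b$, and in particular handling the degenerate subcase in which $\one$ itself lies in $\Acal\cap\Bcal$.
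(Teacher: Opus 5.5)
Your proof is correct, but it is organized differently from the paper's.

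The paper never separates the constant from the data. It works with $\mathcal P=\Span\{\one,a_1,\ldots,a_K\}$ and $\mathcal Q=\Span\{\one,b_1,\ldots,b_L\}$ and shows $\dim\mathcal P=K+1$ and $\dim\mathcal Q=L+1$; this contains your Step~1 together with $\one\notin\Acal\cup\Bcal$. It then proves the single bound $\dim(\mathcal P\cap\mathcal Q)\le2$ by the same $2\times2$-minor trick as your Step~2. The only difference is that the resulting combination of desired entries equals a combination of $\one,h,g$ lying in $\Ncal$, rather than exactly $0$. Inclusion--exclusion then gives $\dim\Ncal\ge K+L$ in one line. Over $\F_2$ a single case suffices: a nonconstant $h\in\mathcal P\cap\mathcal Q$ yields $(h-\alpha\one)(h-\beta\one)\in\Span\{\one,h\}\subseteq\Ncal$.

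Your Step~3 configuration is absorbed automatically in the paper's version. There $\one$, $a=\one-b$ and $f$ are three functions in $\mathcal P\cap\mathcal Q$, and they are independent because $\one\notin\Acal\cup\Bcal$ and $f\neq0$.

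What your route buys is the cleaner intermediate fact $\dim(\Acal\cap\Bcal)\le1$. It needs only that the $a_kb_\ell$ are linearly independent outright, not modulo $\Ncal$. The price is the case split on whether $\one\in\Acal+\Bcal$, and three subcases over $\F_2$ instead of one.

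A minor simplification: in the degenerate subcase of Step~3, write $a=\lambda f$ and $b=\mu f$. The tensor becomes $(\lambda+\mu)\,f\otimes f$, which vanishes only if $\one=a+b=0$. So the detour through $\one\otimes\one$ is unnecessary.
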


\begin{proof}
Here $\Ncal=\Span\{\one,a_1,\ldots,a_K,b_1,\ldots,b_L\}$. Let $\mathcal P=\Span\{\one,a_1,\ldots,a_K\}$ and $\mathcal Q=\Span\{\one,b_1,\ldots,b_L\}$, so $\Ncal=\mathcal P+\mathcal Q$.

We first show that $\one,a_1,\ldots,a_K$ are linearly independent. Suppose that $c_0\one+\sum_{k=1}^Kc_ka_k=0$. If some $c_k$ is nonzero, then, for any fixed $\ell\in[L]$,
\[
\sum_{k=1}^Kc_ka_kb_\ell=-c_0b_\ell\in\Ncal.
\]
This is a nonzero linear combination of desired entries that belongs to $\Ncal$, contradicting decodability. Hence every $c_k$ is zero, and then $c_0=0$. Therefore $\dim\mathcal P=K+1$. The same argument gives $\dim\mathcal Q=L+1$.

We next show that $\dim(\mathcal P\cap\mathcal Q)\leq2$. Suppose otherwise. Since $\one\in\mathcal P\cap\mathcal Q$, there are functions $h,g\in\mathcal P\cap\mathcal Q$ such that $\one,h,g$ are linearly independent. Write
\begin{align*}
h
&=
\alpha\one+\sum_{k=1}^Kc_ka_k
=
\beta\one+\sum_{\ell=1}^Ld_\ell b_\ell,\\
g
&=
\gamma\one+\sum_{k=1}^Ke_ka_k
=
\delta\one+\sum_{\ell=1}^Lf_\ell b_\ell.
\end{align*}
The vectors $c=(c_k)_k$ and $e=(e_k)_k$ are linearly independent, since otherwise $\one,h,g$ would be linearly dependent. Similarly, $d=(d_\ell)_\ell$ and $f=(f_\ell)_\ell$ are linearly independent.

Now
\begin{align*}
&\left(\sum_{k=1}^Kc_ka_k\right)
 \left(\sum_{\ell=1}^Lf_\ell b_\ell\right)
-
\left(\sum_{k=1}^Ke_ka_k\right)
 \left(\sum_{\ell=1}^Ld_\ell b_\ell\right)\\
&\qquad
=
(h-\alpha\one)(g-\delta\one)
-
(g-\gamma\one)(h-\beta\one)\\
&\qquad
=
(\gamma-\delta)h
+
(\beta-\alpha)g
+
(\alpha\delta-\gamma\beta)\one
\in\Ncal.
\end{align*}
The coefficient of $a_kb_\ell$ on the left-hand side is $c_kf_\ell-e_kd_\ell$. These coefficients cannot all be zero. Otherwise, $f_\ell c=d_\ell e$ for every $\ell$. Since $c$ and $e$ are linearly independent, this would force $f_\ell=d_\ell=0$ for every $\ell$, contradicting the linear independence of $d$ and $f$. Thus the left-hand side is a nonzero linear combination of desired entries that belongs to $\Ncal$, again contradicting decodability. Therefore $\dim(\mathcal P\cap\mathcal Q)\leq2$.

It follows that
\[
\dim\Ncal
=
\dim\mathcal P+\dim\mathcal Q-\dim(\mathcal P\cap\mathcal Q)
\geq K+L.
\]

Suppose now that $q=2$. We show that $\mathcal P\cap\mathcal Q=\Span\{\one\}$. Otherwise, choose a nonconstant function $h\in\mathcal P\cap\mathcal Q$ and write
\[
h
=
\alpha\one+\sum_{k=1}^Kc_ka_k
=
\beta\one+\sum_{\ell=1}^Ld_\ell b_\ell.
\]
Since $h$ is nonconstant, both $(c_k)_k$ and $(d_\ell)_\ell$ are nonzero. Hence there are indices $k$ and $\ell$ for which $c_kd_\ell\neq0$, so
\[
\left(\sum_{k=1}^Kc_ka_k\right)
\left(\sum_{\ell=1}^Ld_\ell b_\ell\right)
\]
is a nonzero linear combination of desired entries. On the other hand,
\begin{align*}
\left(\sum_{k=1}^Kc_ka_k\right)
\left(\sum_{\ell=1}^Ld_\ell b_\ell\right)
&=
(h-\alpha\one)(h-\beta\one)\\
&=
h^2-(\alpha+\beta)h+\alpha\beta\one.
\end{align*}
Every function from $[N]$ to $\F_2$ satisfies $h^2=h$, so the right-hand side belongs to $\Span\{\one,h\}\subseteq\Ncal$. This contradicts decodability. Thus $\mathcal P\cap\mathcal Q=\Span\{\one\}$, and $\dim\Ncal=K+L+1$.
\end{proof}

\begin{proof}[Converse in Theorem~\ref{thm:t1_optimality}]
By Lemma~\ref{lem:t1_normalization}, we may assume that $u_1=v_1=\one$ without increasing the number of workers. Lemma~\ref{lem:constant_mask_lower_bound} gives $\dim\Ncal\geq K+L$. Corollary~\ref{cor:dimension_bound} then gives $N\geq KL+K+L$.

When $q=2$, Lemma~\ref{lem:constant_mask_lower_bound} gives the stronger inequality $\dim\Ncal\geq K+L+1$, and hence $N\geq KL+K+L+1$. These bounds match the two constructions and complete the proof.
\end{proof}

\section{General Lower Bound}
\label{sec:general_lower_bounds}

We prove Theorem~\ref{thm:general_lower_bound} by reducing an arbitrary private function table to the constant-mask case. Over a sufficiently large extension field, each mask space contains a function that is nonzero at every worker. Dividing by these functions produces a decodable table with one constant mask on each side.

\begin{lemma}[Choosing a nonvanishing function]
\label{lem:nonvanishing_function}
Let $\mathbb E$ be a finite field, and let $\mathcal W\subseteq\mathbb E^{[N]}$ be a vector space. Suppose that, for every $i\in[N]$, some function in $\mathcal W$ is nonzero at $i$. If $|\mathbb E|>N$, then $\mathcal W$ contains a function that is nonzero at every worker.
\end{lemma}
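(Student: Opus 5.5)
The plan is a counting (union-bound) argument over the elements of $\mathcal W$, using a basis to parametrize the space. Let $w_1,\ldots,w_d$ be a basis of $\mathcal W$. A general element is $w_c=\sum_{j=1}^d c_jw_j$ with $c\in\mathbb E^d$. For each worker $i\in[N]$, consider the set
\[
H_i=\{c\in\mathbb E^d: w_c(i)=0\}.
\]
This set is the kernel of the linear functional $c\mapsto\sum_j c_jw_j(i)$. By hypothesis some element of $\mathcal W$ is nonzero at $i$, so some $w_j(i)\neq0$, and the functional is nonzero. Hence $H_i$ is a hyperplane in $\mathbb E^d$, with $|H_i|=|\mathbb E|^{d-1}$. (If $N\geq1$ the hypothesis forces $d\geq1$; the case $N=0$ is trivial.)

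A function nonzero at every worker corresponds exactly to a vector $c$ outside $\bigcup_i H_i$. By the union bound,
\[
\Bigl|\bigcup_{i=1}^N H_i\Bigr|\leq N|\mathbb E|^{d-1}<|\mathbb E|\cdot|\mathbb E|^{d-1}=|\mathbb E|^d,
\]
where the strict inequality uses $|\mathbb E|>N$. So the union does not cover $\mathbb E^d$, and any $c$ outside it gives the desired $w_c\in\mathcal W$, nonzero at every worker.

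There is no real obstacle. The one point needing care is that each $H_i$ is a proper hyperplane rather than all of $\mathbb E^d$; this is exactly what the per-worker hypothesis guarantees. A sharper count would use that all $H_i$ contain $0$, giving the bound $N(|\mathbb E|^{d-1}-1)+1$, but the crude bound already suffices.
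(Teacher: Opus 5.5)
Your proof is correct and is essentially the paper's argument: each $H_i$ is a proper hyperplane of size $|\mathbb E|^{d-1}$, and the union bound $N|\mathbb E|^{d-1}<|\mathbb E|^d$ leaves some function of $\mathcal W$ outside all of them. The only cosmetic difference is that you parametrize $\mathcal W$ by coordinate vectors in $\mathbb E^d$, whereas the paper defines the hyperplanes directly inside $\mathcal W$.
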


\begin{proof}
Let $d=\dim_{\mathbb E}\mathcal W$. For each $i\in[N]$, let $H_i=\{w\in\mathcal W:w(i)=0\}$. By assumption, evaluation at $i$ is not identically zero on $\mathcal W$, so $H_i$ is a proper hyperplane and contains $|\mathbb E|^{d-1}$ functions. Therefore
\[
\left|\bigcup_{i=1}^NH_i\right|
\leq
N|\mathbb E|^{d-1}
<
|\mathbb E|^d
=
|\mathcal W|.
\]
Thus, some $w\in\mathcal W$ belongs to none of the sets $H_i$, and hence $w(i)\neq0$ for every $i\in[N]$.
\end{proof}

\begin{proof}[Proof of Theorem~\ref{thm:general_lower_bound}]
We first remove workers whose responses always vanish. Suppose that $u_t(i)=0$ for every $t\in[T]$. Applying Theorem~\ref{thm:privacy_condition} to $\tau=\{i\}$ gives $\rank[\mathsf A_\tau\ \mathsf U_\tau]=\rank\mathsf U_\tau=0$, so $a_k(i)=0$ for every $k\in[K]$. Thus $X_i=Z_i=0$, and worker $i$ can be removed without affecting privacy or decodability. The same argument applies if $v_t(i)=0$ for every $t\in[T]$.

After removing all such workers, relabel the remaining workers and continue to write $N$ for their number. Proving the bound for the smaller scheme also proves it for the original scheme. For every remaining worker $i$, some $u_t(i)$ and some $v_t(i)$ are nonzero.

Choose a finite extension field $\mathbb E$ of $\Fq$ with $|\mathbb E|>N$, and regard all coefficient functions as functions from $[N]$ to $\mathbb E$. Let $\Acal_{\mathbb E}$, $\Ucal_{\mathbb E}$, $\Bcal_{\mathbb E}$, and $\Vcal_{\mathbb E}$ denote their spans over $\mathbb E$, and let $\Ncal_{\mathbb E}$ be the $\mathbb E$-span of the original nuisance entries. Then
\[
\Ncal_{\mathbb E}
=
\Acal_{\mathbb E}\Vcal_{\mathbb E}
+
\Ucal_{\mathbb E}\Bcal_{\mathbb E}
+
\Ucal_{\mathbb E}\Vcal_{\mathbb E}.
\]

Let $\mathsf G$ be a matrix whose columns span $\Ncal$, and let $\mathsf D$ be the matrix whose columns are the $KL$ desired entries. Decodability gives $\rank[\mathsf G\ \mathsf D]=\rank\mathsf G+KL$. Extending the scalars from $\Fq$ to $\mathbb E$ does not change these ranks, so the desired entries remain linearly independent modulo $\Ncal_{\mathbb E}$.

The spaces $\Ucal_{\mathbb E}$ and $\Vcal_{\mathbb E}$ satisfy the hypothesis of Lemma~\ref{lem:nonvanishing_function}. Hence there are functions $u\in\Ucal_{\mathbb E}$ and $v\in\Vcal_{\mathbb E}$ such that $u(i)\neq0$ and $v(i)\neq0$ for every $i\in[N]$.

Define $\widetilde a_k=a_k/u$ for $k\in[K]$ and $\widetilde b_\ell=b_\ell/v$ for $\ell\in[L]$, where division is pointwise. Consider the auxiliary function table over $\mathbb E$ with data functions $\widetilde a_1,\ldots,\widetilde a_K$ and $\widetilde b_1,\ldots,\widetilde b_L$, and with $\one$ as the single mask function on both sides. Its nuisance space is
\[
\widetilde{\Ncal}
=
\Span_{\mathbb E}
\{\one,\widetilde a_1,\ldots,\widetilde a_K,
\widetilde b_1,\ldots,\widetilde b_L\}.
\]

Multiplication by the nonvanishing function $uv$ is an invertible linear map on $\mathbb E^{[N]}$. Moreover,
\begin{align*}
(uv)\widetilde{\Ncal}
&=
\Span_{\mathbb E}
\{uv,a_1v,\ldots,a_Kv,ub_1,\ldots,ub_L\}\\
&\subseteq
\Ucal_{\mathbb E}\Vcal_{\mathbb E}
+
\Acal_{\mathbb E}\Vcal_{\mathbb E}
+
\Ucal_{\mathbb E}\Bcal_{\mathbb E}\\
&=
\Ncal_{\mathbb E}.
\end{align*}

Suppose that a linear combination of the desired entries of the auxiliary table belongs to $\widetilde{\Ncal}$. Multiplying by $uv$ gives
\[
\sum_{k=1}^K\sum_{\ell=1}^L
c_{k\ell}a_kb_\ell
\in
\Ncal_{\mathbb E}.
\]
Since the original desired entries remain linearly independent modulo $\Ncal_{\mathbb E}$, we have $c_{k\ell}=0$ for every $k$ and $\ell$. The auxiliary table is therefore linearly decodable.

Applying Lemma~\ref{lem:constant_mask_lower_bound} over $\mathbb E$ gives $\dim_{\mathbb E}\widetilde{\Ncal}\geq K+L$. Corollary~\ref{cor:dimension_bound} then gives $N\geq KL+\dim_{\mathbb E}\widetilde{\Ncal}\geq KL+K+L$, proving the theorem.
\end{proof}

\section{Field Feasibility}
\label{sec:field_feasibility_proofs}

This section determines exactly which finite fields can support a private and linearly decodable scheme. We first show that every such scheme contains sets of workers on which the mask coefficients are MDS. We then use the required MDS codes to construct a scheme.

\subsection{Necessary MDS Restrictions}

Decodability first gives a separation between the data and mask spaces. On each side, no nonzero linear combination of the data functions can belong to the mask space.

\begin{lemma}[Data--mask separation]
\label{lem:data_mask_separation}
In every linearly decodable function table, the quotient classes $a_1+\Ucal,\ldots,a_K+\Ucal$ are linearly independent in $\Fq^{[N]}/\Ucal$. Similarly, the quotient classes $b_1+\Vcal,\ldots,b_L+\Vcal$ are linearly independent in $\Fq^{[N]}/\Vcal$.
\end{lemma}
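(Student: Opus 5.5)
We argue by contradiction using Theorem~\ref{thm:decodability_condition}, treating the $A$-side; the $B$-side is symmetric with the roles of rows and columns exchanged.

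Suppose the classes $a_1+\Ucal,\ldots,a_K+\Ucal$ are linearly dependent. Then there are scalars $c_1,\ldots,c_K$, not all zero, and scalars $\mu_1,\ldots,\mu_T$ with
\[
\sum_{k=1}^K c_ka_k=\sum_{t=1}^T\mu_tu_t\in\Ucal .
\]
Fix any $\ell\in[L]$ and multiply both sides pointwise by $b_\ell$. The left-hand side becomes $\sum_{k}c_ka_kb_\ell$. This is a linear combination of desired entries, and its coefficient vector $(c_k)$ placed in column $\ell$ is nonzero. The right-hand side becomes $\sum_t\mu_tu_tb_\ell$, which lies in $\Ucal\Bcal\subseteq\Ncal$.

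So a nonzero combination of desired entries lies in $\Ncal$. By Theorem~\ref{thm:decodability_condition} this contradicts linear decodability. Hence the classes $a_k+\Ucal$ are linearly independent in $\Fq^{[N]}/\Ucal$.

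For the $B$-side, suppose $\sum_\ell d_\ell b_\ell\in\Vcal$ with $(d_\ell)$ nonzero. Fix any $k$ and multiply by $a_k$. This gives $\sum_\ell d_\ell a_kb_\ell\in\Acal\Vcal\subseteq\Ncal$, which again contradicts Theorem~\ref{thm:decodability_condition}.

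There is no real obstacle here. The one point to check is that the desired combination is nonzero as a formal combination of the entries $a_kb_\ell$, since that is exactly what Theorem~\ref{thm:decodability_condition} refers to. This holds because its coefficients $c_k$ (resp.\ $d_\ell$) are not all zero. It does not matter whether the function itself vanishes: decodability forbids any formally nonzero combination from lying in $\Ncal$, including one equal to the zero function.
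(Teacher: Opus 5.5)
Your proposal is correct and follows essentially the same route as the paper: multiply a relation $\sum_k c_ka_k\in\Ucal$ by some $b_\ell$ to land in $\Ucal\Bcal\subseteq\Ncal$ (and symmetrically use $\Acal\Vcal\subseteq\Ncal$ on the $B$-side), then invoke Theorem~\ref{thm:decodability_condition}. Your remark that the combination is nonzero as a formal combination of the entries $a_kb_\ell$, which is what independence modulo $\Ncal$ requires, is also right.
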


\begin{proof}
Suppose that $\sum_{k=1}^Kc_ka_k\in\Ucal$. Fix any $\ell\in[L]$. Multiplying by $b_\ell$ gives $\sum_{k=1}^Kc_ka_kb_\ell\in\Ucal\Bcal\subseteq\Ncal$. The left-hand side is a linear combination of desired entries. By Theorem~\ref{thm:decodability_condition}, the desired entries are linearly independent modulo $\Ncal$, so $c_k=0$ for every $k\in[K]$.

The proof on the $B$-side is the same. If $\sum_{\ell=1}^Ld_\ell b_\ell\in\Vcal$, multiply by any $a_k$ and apply Theorem~\ref{thm:decodability_condition}.
\end{proof}

In particular, Lemma~\ref{lem:data_mask_separation} gives $\dim\Acal=K$, $\Acal\cap\Ucal=\{0\}$, $\dim\Bcal=L$, and $\Bcal\cap\Vcal=\{0\}$.

\begin{proof}[Proof of Theorem~\ref{thm:necessary_mds_restrictions}]
Let $\mathsf A=\mathsf A_{[N]}$ and $\mathsf U=\mathsf U_{[N]}$, and set $r=\rank\mathsf U$. By Lemma~\ref{lem:data_mask_separation}, the $K$ columns of $\mathsf A$ are linearly independent modulo the column span of $\mathsf U$. Therefore $\rank[\mathsf A\ \mathsf U]=K+r$.

We first prove that $r=T$. Suppose instead that $r<T$. Since $K\geq1$, the matrix $[\mathsf A\ \mathsf U]$ has rank at least $r+1$. We may therefore choose a set $\tau\subseteq[N]$ of $r+1$ rows such that $\rank[\mathsf A_\tau\ \mathsf U_\tau]=r+1$. Since $|\tau|=r+1\leq T$, privacy and Theorem~\ref{thm:privacy_condition} give $\rank[\mathsf A_\tau\ \mathsf U_\tau]=\rank\mathsf U_\tau\leq r$, a contradiction. Hence $\rank\mathsf U=T$.

It follows that $\rank[\mathsf A\ \mathsf U]=K+T$. Choose a set $\tau_A\subseteq[N]$ of $K+T$ linearly independent rows of this matrix. If $\tau\subseteq\tau_A$ and $|\tau|\leq T$, then the rows of $[\mathsf A_\tau\ \mathsf U_\tau]$ are linearly independent, so $\rank[\mathsf A_\tau\ \mathsf U_\tau]=|\tau|$. Privacy gives $\rank\mathsf U_\tau=|\tau|$.

Thus every set of at most $T$ rows of $\mathsf U_{\tau_A}$ is linearly independent. Equivalently, $\mathsf U_{\tau_A}^{\mathsf T}$ generates an $[K+T,T]$ linear MDS code.

The same argument on the $B$-side gives $\rank\mathsf V_{[N]}=T$ and a set $\tau_B\subseteq[N]$ of size $L+T$ such that $\mathsf V_{\tau_B}^{\mathsf T}$ generates an $[L+T,T]$ linear MDS code.
\end{proof}

The two MDS restrictions may involve different sets of workers, but their sizes still give an additional worker lower bound.

\begin{proof}[Proof of Corollary~\ref{cor:combined_worker_lower_bound}]
Theorem~\ref{thm:necessary_mds_restrictions} gives $N\geq K+T$ and $N\geq L+T$, and hence $N\geq T+\max\{K,L\}$. Combining this with Theorem~\ref{thm:general_lower_bound} proves the result.
\end{proof}

\subsection{The Cartesian Construction}

The MDS restrictions above are also sufficient. As in Section~\ref{sec:cartesian_example}, the Cartesian construction pairs each of $K+T$ encodings on the $A$-side with each of $L+T$ encodings on the $B$-side.

\begin{proof}[Proof of Theorem~\ref{thm:field_feasibility}]
Suppose first that a private and linearly decodable scheme exists. Theorem~\ref{thm:necessary_mds_restrictions} gives an $[K+T,T]$ linear MDS code and an $[L+T,T]$ linear MDS code. One of these is an $[\max\{K,L\}+T,T]$ linear MDS code, proving necessity.

Conversely, suppose that an $[\max\{K,L\}+T,T]$ linear MDS code exists. Deleting coordinates gives an $[K+T,T]$ linear MDS code and an $[L+T,T]$ linear MDS code. Let $\mathsf U_0\in\Fq^{(K+T)\times T}$ and $\mathsf V_0\in\Fq^{(L+T)\times T}$ be the transposes of generator matrices of these two codes. Thus every set of at most $T$ rows of either matrix is linearly independent.

Complete the columns of $\mathsf U_0$ to a basis of $\Fq^{K+T}$. That is, choose $\mathsf A_0\in\Fq^{(K+T)\times K}$ so that $P=[\mathsf A_0\ \mathsf U_0]$ is invertible. Similarly, choose $\mathsf B_0\in\Fq^{(L+T)\times L}$ so that $Q=[\mathsf B_0\ \mathsf V_0]$ is invertible.

Index the workers by $[K+T]\times[L+T]$. At worker $(i,j)$, set $a_k(i,j)=\mathsf A_0(i,k)$ and $u_t(i,j)=\mathsf U_0(i,t)$ on the $A$-side, and set $b_\ell(i,j)=\mathsf B_0(j,\ell)$ and $v_t(i,j)=\mathsf V_0(j,t)$ on the $B$-side. Thus the $A$-side coefficients depend only on $i$, while the $B$-side coefficients depend only on $j$. The construction has $(K+T)(L+T)$ workers.

We first verify decodability. The columns of $P$ form a basis of the functions on $[K+T]$, and the columns of $Q$ form a basis of the functions on $[L+T]$. For a column $f$ of $P$ and a column $g$ of $Q$, the corresponding entry of the function table is the function $f\otimes g$ defined by $(f\otimes g)(i,j)=f(i)g(j)$. As in the Cartesian example, the $(K+T)(L+T)$ functions $f\otimes g$ form a basis of the functions on $[K+T]\times[L+T]$.

Hence all entries of the Cartesian function table are linearly independent. In particular, the desired entries are linearly independent modulo the nuisance space, so Theorem~\ref{thm:decodability_condition} gives linear decodability. The nuisance space has dimension $(K+T)(L+T)-KL=T(K+L)+T^2$.

We next verify privacy. Let $\tau\subseteq[K+T]\times[L+T]$ be any set of at most $T$ workers, and let $I=\{i:(i,j)\in\tau\text{ for some }j\}$ be its set of distinct first coordinates. The $A$-side coefficients depend only on $i$, so workers with the same first coordinate give identical rows of $[\mathsf A_\tau\ \mathsf U_\tau]$. Its distinct rows are exactly the rows of $P$ indexed by $I$. Since $P$ is invertible, these rows are linearly independent, and therefore $\rank[\mathsf A_\tau\ \mathsf U_\tau]=|I|$.

Since $|I|\leq|\tau|\leq T$, the MDS property of $\mathsf U_0$ implies that its rows indexed by $I$ are also linearly independent. Repetitions do not change their span, so $\rank\mathsf U_\tau=|I|$. Thus $\rank[\mathsf A_\tau\ \mathsf U_\tau]=\rank\mathsf U_\tau$.

Similarly, let $J=\{j:(i,j)\in\tau\text{ for some }i\}$ be the set of distinct second coordinates. The $B$-side coefficients depend only on $j$. Applying the same argument to $Q$ and $\mathsf V_0$ gives $\rank[\mathsf B_\tau\ \mathsf V_\tau]=\rank\mathsf V_\tau=|J|$.

Theorem~\ref{thm:privacy_condition} therefore gives $T$-privacy. Repeated rows cause no problem because whenever a mask row repeats, the corresponding data row repeats with it.
\end{proof}

\subsection{Field-Size Consequences}

The field-size results follow from a standard bound on the length of a linear MDS code.

\begin{lemma}[Elementary MDS length bound]
\label{lem:elementary_mds_length}
If an $[n,k]$ linear MDS code over $\Fq$ exists with $2\leq k<n$, then $n-k\leq q-1$.
\end{lemma}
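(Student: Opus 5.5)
The plan is to use the standard systematic-form argument, which gives the bound $n-k\leq q-1$ directly. Let $C$ be an $[n,k]$ linear MDS code over $\Fq$ with $2\leq k<n$. Every set of $k$ coordinates of an MDS code is an information set. So we may take a generator matrix in systematic form $G=[I_k\ \mathsf M]$, where $\mathsf M\in\Fq^{k\times(n-k)}$.

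The first step is to show that every square submatrix of $\mathsf M$ is nonsingular. This is the classical characterization. A $j\times j$ submatrix of $\mathsf M$, with rows $R$ and columns $S$, can be combined with the identity columns indexed by $[k]\setminus R$. The resulting $k$ columns of $G$ have determinant equal, up to sign, to the determinant of the submatrix. Since any $k$ columns of $G$ are independent by the MDS property, that determinant is nonzero. In particular, every entry of $\mathsf M$ is nonzero, and every $2\times2$ minor is nonzero.

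The second step uses the hypothesis $k\geq2$. Take the first two rows of $\mathsf M$ and rescale each column of $\mathsf M$ by a nonzero scalar so that the first row is all ones. Column scaling does not affect whether a minor vanishes. Write the second row as $(m_1,\ldots,m_{n-k})$; these entries are nonzero. For columns $s\neq s'$, the $2\times2$ minor formed by the first two rows is $m_{s'}-m_s$, and it must be nonzero. Hence the $m_s$ are $n-k$ distinct elements of $\Fq^{\times}$, which gives $n-k\leq q-1$.

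The only delicate point is the minor/column correspondence in the first step: we need the determinant of the chosen $k$ columns to reduce to the $j\times j$ minor. This follows by expanding along the identity columns. For the $1\times1$ and $2\times2$ cases we need here, the check is immediate, so I do not expect a real obstacle. The condition $k<n$ only ensures that $\mathsf M$ is nonempty; if $n=k$ the bound holds trivially anyway.
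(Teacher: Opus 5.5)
Your proof is correct and follows essentially the same route as the paper: both put $G$ in systematic form $[I_k\ P]$, use the identity columns to show that every entry of $P$ is nonzero, and conclude that the ratios of the first two entries of the columns of $P$ (your normalized second-row entries $m_s$) are $n-k$ distinct elements of $\Fq^{\times}$. The only difference is presentation. You phrase the distinct-ratio step as the nonvanishing of a $2\times2$ minor after scaling the columns, whereas the paper directly cancels the remaining coordinates using $e_3,\ldots,e_k$.
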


\begin{proof}
After permuting columns and applying row operations, write a generator matrix in systematic form as $G=[I_k\ P]$. Every entry of $P$ is nonzero. Indeed, if an entry in row $i$ were zero, the corresponding column of $P$ together with the $k-1$ columns of $I_k$ other than the $i$th column would be linearly dependent.

For each column of $P$, consider the ratio of its first entry to its second entry. These ratios must be distinct. Otherwise, two columns of $P$ would have proportional first two entries. A nonzero linear combination of these two columns would then vanish in the first two coordinates. The remaining coordinates could be canceled using the identity columns $e_3,\ldots,e_k$, giving a dependence among $k$ columns of $G$ and contradicting the MDS property.

There are only $q-1$ possible nonzero ratios. Hence $P$ has at most $q-1$ columns, and therefore $n-k\leq q-1$.
\end{proof}

\begin{proof}[Proof of Corollary~\ref{cor:field_size_consequences}]
If a scheme exists, Theorem~\ref{thm:field_feasibility} gives an $[\max\{K,L\}+T,T]$ linear MDS code. When $T\geq2$, Lemma~\ref{lem:elementary_mds_length} gives $\max\{K,L\}\leq q-1$, and hence $q\geq\max\{K,L\}+1$.

If also $\max\{K,L\}\geq2$, the dual code is an $[\max\{K,L\}+T,\max\{K,L\}]$ linear MDS code. Applying Lemma~\ref{lem:elementary_mds_length} to the dual gives $T\leq q-1$, and hence $q\geq T+1$.

Conversely, suppose that $q\geq\max\{K,L\}+T-1$. Then $\max\{K,L\}+T\leq q+1$, so an extended Reed--Solomon code gives an $[\max\{K,L\}+T,T]$ linear MDS code over $\Fq$. Theorem~\ref{thm:field_feasibility} therefore gives a private and linearly decodable scheme.
\end{proof}

\begin{proof}[Proof of Corollary~\ref{cor:t2_field_feasibility}]
For $T=2$, the necessary condition $q\geq\max\{K,L\}+1$ and the sufficient condition $q\geq\max\{K,L\}+T-1$ in Corollary~\ref{cor:field_size_consequences} coincide.
\end{proof}

We finish by proving the stronger field-size restriction that arises when the mask coefficients are MDS on the complete worker set.

\begin{proof}[Proof of Proposition~\ref{prop:mds_field_size_bound}]
Suppose first that $u_1,\ldots,u_T$ satisfy the $T$-MDS mask condition, and let $\mathbf u_i=(u_1(i),\ldots,u_T(i))\in\Fq^T$ be the mask-coefficient row of worker $i$.

If $N\leq T-2$, then $N\leq q+T-1$ is immediate. We may therefore assume that $N\geq T-1$ and fix any $T-2$ rows. Let $\mathcal W$ be their span. When $T=2$, no rows are fixed and we take $\mathcal W=\{0\}$. The fixed rows are linearly independent, so $\dim\mathcal W=T-2$, and the quotient space $\Fq^T/\mathcal W$ has dimension two.

The image of every remaining row in this quotient is nonzero. Moreover, the images of any two remaining rows lie in different one-dimensional subspaces. Otherwise, those two rows together with the fixed $T-2$ rows would be linearly dependent, contradicting the $T$-MDS mask condition.

A two-dimensional vector space over $\Fq$ has exactly $q+1$ one-dimensional subspaces. There are therefore at most $q+1$ remaining rows. Hence $N-(T-2)\leq q+1$, or equivalently $N\leq q+T-1$.

The proof is identical if $v_1,\ldots,v_T$ satisfy the $T$-MDS mask condition.
\end{proof}

\section[The Projective-Line Construction for T=2]
{The Projective-Line Construction for $T=2$}
\label{sec:constructions_beyond_t1}

\begin{figure*}[t]
\centering

\def\ProjectiveTableWidth{0.65\linewidth}
\def\ProjectiveSeparationWidth{0.65\linewidth}
\def\ProjectiveEncodingWidth{0.65\textwidth}

\begin{minipage}[t]{0.44\textwidth}
\vspace{0pt}
\centering

{\small\textbf{(a) Projective addition table}\par}

\medskip

\begingroup
\setlength{\arraycolsep}{4pt}
\renewcommand{\arraystretch}{1.18}
\small

\resizebox{\ProjectiveTableWidth}{!}{$
\begin{array}{c|ccc|cc}
\boxplus
& d
& \cdots
& Ld
& \cellcolor{blue!15}0
& \cellcolor{blue!15}s\\
\hline
1
& \cellcolor{red!12}d+1
& \cellcolor{red!12}\cdots
& \cellcolor{red!12}Ld+1
& 1
& 2\\
\vdots
& \cellcolor{red!12}\vdots
& \cellcolor{red!12}\ddots
& \cellcolor{red!12}\vdots
& \vdots
& \vdots\\
K
& \cellcolor{red!12}d+K
& \cellcolor{red!12}\cdots
& \cellcolor{red!12}m
& K
& d\\
\hline
\cellcolor{green!15}0
& d
& \cdots
& Ld
& \cellcolor{yellow!18}0
& 1\\
\cellcolor{green!15}d
& 2d
& \cdots
& 1
& d
& \cellcolor{yellow!18}\infty
\end{array}
$}
\endgroup

\end{minipage}
\hfill
\begin{minipage}[t]{0.52\textwidth}
\vspace{0pt}
\centering

{\small\textbf{(b) The role of $0$ and $\infty$}\par}

\medskip

\begingroup
\setlength{\arraycolsep}{6pt}
\renewcommand{\arraystretch}{1.35}
\small

\resizebox{\ProjectiveSeparationWidth}{!}{$
\begin{array}{c|c|cc}
\text{Function}
& (t,1),\ t\in H
& 0
& \infty\\
\hline
\cellcolor{red!12}h_m
& 1
& 0
& 0\\
\cellcolor{yellow!18}h_0
& 1
& \cellcolor{yellow!18}1
& 0\\
\hline
\cellcolor{red!12}h_{d+1}
& t^{d+1}
& 0
& 0\\
\cellcolor{yellow!18}h_D
& t^{d+1}
& 0
& \cellcolor{yellow!18}1
\end{array}
$}
\endgroup

\end{minipage}

\bigskip

{\small\textbf{(c) Evaluation points and worker encodings}\par}

\medskip

\begingroup
\setlength{\arraycolsep}{7pt}
\renewcommand{\arraystretch}{1.5}
\small

\resizebox{\ProjectiveEncodingWidth}{!}{$
\begin{array}{c|c|c|c}
P
& X_P
& Y_P
& Z_P\\
\hline
(t,1),\ t\in H
& \displaystyle\sum_{k=1}^{K}t^kA_k+R_1+t^dR_2
& \displaystyle\sum_{\ell=1}^{L}t^{\ell d}B_\ell+S_1+tS_2
& X_PY_P\\[1ex]
\hline
0=(0,1)
& R_1
& S_1
& \cellcolor{yellow!18}R_1S_1\\
\infty=(1,0)
& R_2
& S_2
& \cellcolor{yellow!18}R_2S_2
\end{array}
$}
\endgroup

\vspace{0.8em}

\caption{The projective-line construction for $T=2$, with $d=K+1$, $s=d(L+1)$, $m=s-1=KL+K+L$, and $D=d+s$. Panel~(a) reduces exponent sums modulo $m$ to labels $1,\ldots,m$, except that sums $0$ and $D$ receive the separate labels $0$ and $\infty$. Red entries are desired, and green and blue labels mark mask functions. Panel~(b) shows why the two yellow mask--mask entries must remain separate. Here $h_e(x,y)=x^ey^{D-e}$, and $H\subseteq\F_q^\times$ is the subgroup of order $m$, which exists when $m\mid(q-1)$. The functions $h_0$ and $h_m$ agree on $H$ but differ at $0=(0,1)$; similarly, $h_D$ and $h_{d+1}$ agree on $H$ but differ at $\infty=(1,0)$. Panel~(c) gives the worker encodings. There is one worker for each $t\in H$ and one at each additional point, giving $m+2=KL+K+L+2$ workers. The two additional workers return $R_1S_1$ and $R_2S_2$, the mask products associated with the yellow entries.}
\label{fig:projective_function_table}
\end{figure*}

We now generalize the projective-line construction in Section~\ref{sec:projective_example} to arbitrary $K$ and $L$. The construction uses the same homogeneous monomials, identities on a multiplicative subgroup, and two additional points $0$ and $\infty$.

Set $d=K+1$, $s=d(L+1)=(K+1)(L+1)$, $m=s-1=KL+K+L$, and $D=d+s$. For $0\leq e\leq d$, define $f_e(x,y)=x^ey^{d-e}$. For $0\leq e\leq s$, define $g_e(x,y)=x^ey^{s-e}$. Finally, for $0\leq e\leq D$, define $h_e(x,y)=x^ey^{D-e}$. Pointwise multiplication satisfies $f_eg_{e'}=h_{e+e'}$.

We use $a_k=f_k$ and $b_\ell=g_{\ell d}$ as the data functions, and $u_1=f_0$, $u_2=f_d$, $v_1=g_0$, and $v_2=g_s$ as the mask functions. Thus, the row labels are $1,\ldots,K,0,d$, and the column labels are $d,2d,\ldots,Ld,0,s$.

Assume that $m\mid(q-1)$, and let $H\subseteq\Fq^\times$ be the subgroup of order $m$. Index the workers by
\[
\Omega
=
\{(t,1):t\in H\}
\cup
\{(0,1),(1,0)\}
\subseteq\Fq^2.
\]
These vectors are fixed representatives of points of $\mathbb P^1(\Fq)$. We write $0=(0,1)$ and $\infty=(1,0)$. The construction therefore uses $|\Omega|=m+2=KL+K+L+2$ workers.

At a subgroup worker $(t,1)$, we have $h_e(t,1)=t^e$. Since $t^m=1$, exponents that are congruent modulo $m$ give the same values on the subgroup. Every $h_e$ with $1\leq e\leq D-1$ also vanishes at both $0$ and $\infty$. Consequently, as functions on $\Omega$, $h_e=h_{\langle e\rangle_m}$ for every $1\leq e\leq D-1$, where $\langle e\rangle_m$ is the unique element of $[m]$ congruent to $e$ modulo $m$.

The exponent sums $0$ and $D$ must be kept separate, as shown in Figure~\ref{fig:projective_function_table}(b). On the subgroup, $h_0$ agrees with $h_m$, since both have value one. At the worker $0$, however, $h_0(0)=1$ and $h_m(0)=0$. Similarly, since $D=m+d+1$, the functions $h_D$ and $h_{d+1}$ agree on the subgroup, but $h_D(\infty)=1$ and $h_{d+1}(\infty)=0$.

For the row and column labels above, define
\[
r\boxplus c
=
\begin{cases}
0, & r+c=0,\\
\infty, & r+c=D,\\
\langle r+c\rangle_m, & \text{otherwise}.
\end{cases}
\]
An ordinary label $j\in[m]$ represents $h_j$, while the labels $0$ and $\infty$ represent $h_0$ and $h_D$, respectively. Multiplication of the chosen row and column functions therefore follows the rule $\boxplus$. The resulting function table is shown in Figure~\ref{fig:projective_function_table}.

For each $\ell\in[L]$, the desired entries in column $\ell d$ have labels $\ell d+1,\ldots,\ell d+K$. Taken over all $\ell\in[L]$, the desired labels are every element of $[m]$ except $1,\ldots,d$ and $2d,3d,\ldots,Ld$. The nuisance entries use exactly the labels $0,\infty,1,\ldots,d,2d,3d,\ldots,Ld$, where the final list is empty when $L=1$.

The two additional labels prevent the only collisions between desired and nuisance entries. Under ordinary addition modulo $m$, the mask--mask entry at row $0$ and column $0$ would receive the label $m$, which is the last desired label. Similarly, the entry at row $d$ and column $s$ would receive the label $\langle d+s\rangle_m=\langle D\rangle_m=d+1$, which is the first desired label. The rule $\boxplus$ assigns the new labels $0$ and $\infty$ to these two nuisance entries instead.

Worker $P\in\Omega$ receives
\begin{align*}
X_P
&=
\sum_{k=1}^K f_k(P)A_k
+
f_0(P)R_1
+
f_d(P)R_2,\\
Y_P
&=
\sum_{\ell=1}^L g_{\ell d}(P)B_\ell
+
g_0(P)S_1
+
g_s(P)S_2,
\end{align*}
and returns $Z_P=X_PY_P$. Figure~\ref{fig:projective_function_table}(c) gives these encodings at the subgroup points and at $0$ and $\infty$, using $t^s=t$ for $t\in H$.

\begin{proposition}[Decodability]
\label{prop:projective_decodability}
The projective-line scheme is linearly decodable.
\end{proposition}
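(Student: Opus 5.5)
The plan is to apply Theorem~\ref{thm:decodability_condition}. It suffices to show that the $m+2$ functions $h_0,h_1,\ldots,h_m,h_D$ are linearly independent on $\Omega$, and that the desired and nuisance entries are, as functions on $\Omega$, distinct members of this family with disjoint label sets.

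\textbf{Independence of the family.} Suppose
\[
c_0h_0+\sum_{e=1}^m c_eh_e+c_Dh_D=0
\]
on $\Omega$. Evaluating at $0=(0,1)$ kills every term except $h_0$, since $h_e(0,1)=0$ for $e\geq1$; hence $c_0=0$. Evaluating at $\infty=(1,0)$ kills every term except $h_D$, since $h_e(1,0)=0$ for $e<D$; hence $c_D=0$. The remaining relation $\sum_{e=1}^m c_et^e=0$ holds for all $t\in H$. Dividing by $t$ gives a polynomial of degree at most $m-1$ vanishing on the $m$ distinct elements of $H$, so all $c_e=0$. Thus $\Span\{h_0,h_1,\ldots,h_m,h_D\}$ has dimension $m+2=|\Omega|$.

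\textbf{Identifying every table entry.} Each entry is $f_rg_c=h_{r+c}$, where $r\in\{0,1,\ldots,K,d\}$ and $c\in\{0,d,\ldots,Ld,s\}$, so $0\leq r+c\leq D$. Three cases arise:
\begin{itemize}
\item If $r+c=0$, the entry is $h_0$.
\item If $r+c=D$, which forces $r=d$ and $c=s$, the entry is $h_D$.
\item Otherwise $1\leq r+c\leq D-1$, and the identity $h_e=h_{\langle e\rangle_m}$ on $\Omega$ (already established in the text) identifies the entry with $h_{r\boxplus c}$.
\end{itemize}

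\textbf{Checking the label sets.} For the desired entries, $r=k\in[K]$ and $c=\ell d$ with $\ell\in[L]$. Then $k+\ell d\leq K+Ld=m$, so no reduction is needed, and the labels $\ell d+k$ are pairwise distinct because $1\leq k\leq K<d$. The nuisance labels are computed case by case:
\begin{itemize}
\item Row $0$ gives $\ell d$ for $\ell\in[L]$, together with $0$ and $s\equiv1$.
\item Row $k\in[K]$ gives $k$ from column $0$, and $k+s\equiv k+1$ from column $s$.
\item Row $d$ gives $(\ell+1)d$ for $\ell\in[L]$, which reduces to $1$ when $\ell=L$; it also gives $d$ from column $0$, and $\infty$ from column $s$.
\end{itemize}
All nuisance labels therefore lie in $\{0,\infty,1,\ldots,d,2d,\ldots,Ld\}$. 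These are disjoint from the desired labels $\{\ell d+k\}$, since every such desired label lies strictly between consecutive multiples of $d$ and is at least $d+1$.

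\textbf{Conclusion.} The nuisance space $\Ncal$ is spanned by a subset of the independent family, and the $KL$ desired entries are distinct members of the same family outside that subset. Hence the desired entries are linearly independent modulo $\Ncal$, and Theorem~\ref{thm:decodability_condition} gives linear decodability.

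The main obstacle is the careful bookkeeping of the reductions modulo $m$, in particular the edge cases $d+s$ and $Ld+d=s$. It is also necessary to confirm that the vanishing of $h_e$ at both $0$ and $\infty$ for $1\leq e\leq D-1$ makes the identification with $h_{\langle e\rangle_m}$ exact on all of $\Omega$. Once those checks are done, disjointness is immediate.
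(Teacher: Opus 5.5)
Your proof is correct and follows essentially the same route as the paper. Both arguments show that $h_0,h_1,\ldots,h_m,h_D$ are independent on $\Omega$ by evaluating at $0$ and $\infty$ and then applying the degree-$(m-1)$ polynomial argument on $H$, and both conclude from the disjointness of the desired and nuisance labels; the paper does that label bookkeeping in the text just before the proposition, while you redo it explicitly and correctly.
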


\begin{proof}
We first show that $h_0,h_1,\ldots,h_m,h_D$ form a basis of $\Fq^\Omega$. Consider a linear relation among these functions. Evaluating at $0$ shows that the coefficient of $h_0$ is zero, since $h_0$ is the only function in the list that is nonzero there. Evaluating at $\infty$ then shows that the coefficient of $h_D$ is zero.

The remaining relation has the form $\sum_{e=1}^m c_et^e=0$ for every $t\in H$. Dividing by $t$ gives a polynomial of degree at most $m-1$ that vanishes at the $m$ distinct elements of $H$. Hence every coefficient is zero, and $h_1,\ldots,h_m$ are linearly independent.

The list contains $m+2=|\Omega|$ functions, so it forms a basis of the full function space. The desired and nuisance labels identified above are disjoint parts of this basis. The desired entries are therefore linearly independent modulo $\Ncal$, and Theorem~\ref{thm:decodability_condition} gives linear decodability.
\end{proof}

\begin{proposition}[Privacy]
\label{prop:projective_privacy}
The projective-line scheme is $2$-private.
\end{proposition}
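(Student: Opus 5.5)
The plan is to verify the $2$-MDS mask condition of Definition~\ref{def:mds_mask_condition} on both sides and then invoke Corollary~\ref{cor:mds_privacy}, as in the $\F_{16}$ example of Section~\ref{sec:projective_example}. For $T=2$, the condition means two things: every mask-coefficient row is nonzero, which covers $|\tau|=1$, and any two rows belonging to distinct workers are linearly independent, which covers $|\tau|=2$. I will list the rows explicitly at the three kinds of points of $\Omega$.

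On the $A$-side the mask functions are $u_1=f_0=y^d$ and $u_2=f_d=x^d$. Their rows are as follows:
\begin{itemize}
\item at a subgroup worker $(t,1)$ with $t\in H$, the row is $(1,t^d)$;
\item at $0=(0,1)$, the row is $(1,0)$;
\item at $\infty=(1,0)$, the row is $(0,1)$.
\end{itemize}
All rows are nonzero. The rows $(1,0)$ and $(0,1)$ are independent of each other. Each is also independent of every $(1,t^d)$, because $t^d\neq0$ for $t\in H\subseteq\Fq^\times$.

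It remains to show that the rows $(1,t^d)$ are pairwise independent. Two such rows both have first coordinate $1$, so they are dependent only if they are equal. It therefore suffices that $t\mapsto t^d$ is injective on the cyclic group $H$ of order $m$, that is, $\gcd(d,m)=1$. This is the key arithmetic step. It holds because $m=(K+1)(L+1)-1=d(L+1)-1\equiv-1\pmod d$, so any common divisor of $d$ and $m$ divides $1$.

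On the $B$-side the mask functions are $v_1=g_0=y^s$ and $v_2=g_s=x^s$. At $(t,1)$ the row is $(1,t^s)=(1,t)$, since $s=m+1$ and $t^m=1$. At $0$ and $\infty$ the rows are again $(1,0)$ and $(0,1)$. The subgroup rows $(1,t)$ are distinct for distinct $t$, so the same argument gives pairwise independence with no arithmetic condition needed.

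Both sides therefore satisfy the $2$-MDS mask condition, and Corollary~\ref{cor:mds_privacy} yields $2$-privacy. The only nonroutine point is the coprimality $\gcd(K+1,m)=1$, which the congruence $m\equiv-1\pmod d$ settles. This also shows that the construction satisfies the hypothesis of Proposition~\ref{prop:mds_field_size_bound}.
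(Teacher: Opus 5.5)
Your proposal is correct and follows essentially the same route as the paper. The paper also lists the mask rows $(1,t^d)$, $(1,0)$, $(0,1)$ on the $A$-side and $(1,t)$, $(1,0)$, $(0,1)$ on the $B$-side, uses $\gcd(d,m)=1$ (from $m=d(L+1)-1$) so that $t\mapsto t^d$ permutes $H$, and then concludes via Corollary~\ref{cor:mds_privacy}.
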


\begin{proof}
At a subgroup worker $(t,1)$, the $A$-side mask-coefficient row is $(1,t^d)$. Since $m=d(L+1)-1$, we have $\gcd(d,m)=1$. The map $t\mapsto t^d$ therefore permutes $H$, so these rows are distinct as $t$ ranges over $H$. Any two of them are consequently linearly independent.

The workers $0$ and $\infty$ give the mask rows $(1,0)$ and $(0,1)$, respectively. Since $t^d$ is nonzero for every $t\in H$, each endpoint row is linearly independent from every subgroup row. Hence any two $A$-side mask rows are linearly independent.

On the $B$-side, the mask-coefficient row at $(t,1)$ is $(1,t^s)=(1,t)$ because $s=m+1$ and $t^m=1$. These rows are distinct as $t$ ranges over $H$, and the workers $0$ and $\infty$ again give $(1,0)$ and $(0,1)$. Hence any two $B$-side mask rows are also linearly independent.

Both mask families satisfy the $2$-MDS mask condition. Corollary~\ref{cor:mds_privacy} therefore gives $2$-privacy.
\end{proof}

\begin{proof}[Proof of Theorem~\ref{thm:projective_line_construction}]
The assumption $KL+K+L\mid(q-1)$ guarantees that the subgroup required by the construction exists. The projective-line scheme uses $KL+K+L+2$ workers. Proposition~\ref{prop:projective_decodability} gives linear decodability, while Proposition~\ref{prop:projective_privacy} gives $2$-privacy.
\end{proof}

\begin{proof}[Proof of Corollary~\ref{cor:t2_two_worker_gap}]
Theorem~\ref{thm:general_lower_bound} gives the lower bound $KL+K+L$, while Theorem~\ref{thm:projective_line_construction} gives the upper bound $KL+K+L+2$ under the stated divisibility condition. Combining the two bounds proves the result.
\end{proof}

\begin{proof}[Proof of Corollary~\ref{cor:projective_field_optimality}]
Suppose that $KL+K+L+1$ is a prime power, and set $q=KL+K+L+1$. Then $KL+K+L=q-1$, so Theorem~\ref{thm:projective_line_construction} gives a scheme with $N=q+1$ workers.

The construction uses $2$-MDS masks, while Proposition~\ref{prop:mds_field_size_bound} gives $N\leq q+1$ for every $T=2$ scheme with MDS masks. The projective-line construction therefore meets this bound with equality.
\end{proof}

\bibliographystyle{IEEEtran}
\bibliography{ref}

\begin{thebibliography}{10}
\providecommand{\url}[1]{#1}
\csname url@samestyle\endcsname
\providecommand{\newblock}{\relax}
\providecommand{\bibinfo}[2]{#2}
\providecommand{\BIBentrySTDinterwordspacing}{\spaceskip=0pt\relax}
\providecommand{\BIBentryALTinterwordstretchfactor}{4}
\providecommand{\BIBentryALTinterwordspacing}{\spaceskip=\fontdimen2\font plus
\BIBentryALTinterwordstretchfactor\fontdimen3\font minus
  \fontdimen4\font\relax}
\providecommand{\BIBforeignlanguage}[2]{{%
\expandafter\ifx\csname l@#1\endcsname\relax
\typeout{** WARNING: IEEEtran.bst: No hyphenation pattern has been}%
\typeout{** loaded for the language `#1'. Using the pattern for}%
\typeout{** the default language instead.}%
\else
\language=\csname l@#1\endcsname
\fi
#2}}
\providecommand{\BIBdecl}{\relax}
\BIBdecl

\bibitem{10415397}
O.~Makkonen and C.~Hollanti, ``General framework for linear secure distributed
  matrix multiplication with byzantine servers,'' \emph{IEEE Transactions on
  Information Theory}, vol.~70, no.~6, pp. 3864--3877, 2024.

\bibitem{9004505}
R.~G.~L. D’Oliveira, S.~El~Rouayheb, and D.~Karpuk, ``Gasp codes for secure
  distributed matrix multiplication,'' \emph{IEEE Transactions on Information
  Theory}, vol.~66, no.~7, pp. 4038--4050, 2020.

\bibitem{9508383}
R.~G.~L. D’Oliveira, S.~El~Rouayheb, D.~Heinlein, and D.~Karpuk, ``Degree
  tables for secure distributed matrix multiplication,'' \emph{IEEE Journal on
  Selected Areas in Information Theory}, vol.~2, no.~3, pp. 907--918, 2021.

\bibitem{11195364}
C.~Hofmeister, R.~Bitar, and A.~Wachter-Zeh, ``Cat and dog: Improved codes for
  private distributed matrix multiplication,'' in \emph{2025 IEEE International
  Symposium on Information Theory (ISIT)}, 2025, pp. 1--6.

\bibitem{10858081}
O.~Makkonen, E.~Saçıkara, and C.~Hollanti, ``Algebraic geometry codes for
  secure distributed matrix multiplication,'' \emph{IEEE Transactions on
  Information Theory}, vol.~71, no.~4, pp. 2373--2382, 2025.

\bibitem{8647313}
W.-T. Chang and R.~Tandon, ``On the capacity of secure distributed matrix
  multiplication,'' in \emph{2018 IEEE Global Communications Conference
  (GLOBECOM)}, 2018, pp. 1--6.

\bibitem{8382305}
H.~Yang and J.~Lee, ``Secure distributed computing with straggling servers
  using polynomial codes,'' \emph{IEEE Transactions on Information Forensics
  and Security}, vol.~14, no.~1, pp. 141--150, 2019.

\bibitem{8675905}
J.~Kakar, S.~Ebadifar, and A.~Sezgin, ``On the capacity and
  straggler-robustness of distributed secure matrix multiplication,''
  \emph{IEEE Access}, vol.~7, pp. 45\,783--45\,799, 2019.

\bibitem{8985291}
M.~Aliasgari, O.~Simeone, and J.~Kliewer, ``Private and secure distributed
  matrix multiplication with flexible communication load,'' \emph{IEEE
  Transactions on Information Forensics and Security}, vol.~15, pp. 2722--2734,
  2020.

\bibitem{9229375}
Q.~Yu and A.~S. Avestimehr, ``Coded computing for resilient, secure, and
  privacy-preserving distributed matrix multiplication,'' \emph{IEEE
  Transactions on Communications}, vol.~69, no.~1, pp. 59--72, 2021.

\bibitem{9539194}
Z.~Jia and S.~A. Jafar, ``On the capacity of secure distributed batch matrix
  multiplication,'' \emph{IEEE Transactions on Information Theory}, vol.~67,
  no.~11, pp. 7420--7437, 2021.

\bibitem{8989342}
W.-T. Chang and R.~Tandon, ``On the upload versus download cost for secure and
  private matrix multiplication,'' in \emph{2019 IEEE Information Theory
  Workshop (ITW)}, 2019, pp. 1--5.

\bibitem{9440909}
J.~Kakar, A.~Khristoforov, S.~Ebadifar, and A.~Sezgin, ``Codes trading upload
  for download cost in secure distributed matrix multiplication,'' \emph{IEEE
  Transactions on Communications}, vol.~69, no.~8, pp. 5409--5424, 2021.

\bibitem{9162296}
R.~G.~L. D’Oliveira, S.~E. Rouayheb, D.~Heinlein, and D.~Karpuk, ``Notes on
  communication and computation in secure distributed matrix multiplication,''
  in \emph{2020 IEEE Conference on Communications and Network Security (CNS)},
  2020, pp. 1--6.

\bibitem{yu2017polynomial}
Q.~Yu, M.~Maddah-Ali, and S.~Avestimehr, ``Polynomial codes: an optimal design
  for high-dimensional coded matrix multiplication,'' \emph{Advances in Neural
  Information Processing Systems}, vol.~30, 2017.

\bibitem{8006963}
K.~Lee, C.~Suh, and K.~Ramchandran, ``High-dimensional coded matrix
  multiplication,'' in \emph{2017 IEEE International Symposium on Information
  Theory (ISIT)}, 2017, pp. 2418--2422.

\bibitem{8437871}
S.~Kiani, N.~Ferdinand, and S.~C. Draper, ``Exploitation of stragglers in coded
  computation,'' in \emph{2018 IEEE International Symposium on Information
  Theory (ISIT)}, 2018, pp. 1988--1992.

\bibitem{8765375}
S.~Dutta, M.~Fahim, F.~Haddadpour, H.~Jeong, V.~Cadambe, and P.~Grover, ``On
  the optimal recovery threshold of coded matrix multiplication,'' \emph{IEEE
  Transactions on Information Theory}, vol.~66, no.~1, pp. 278--301, 2020.

\bibitem{8949560}
Q.~Yu, M.~A. Maddah-Ali, and A.~S. Avestimehr, ``Straggler mitigation in
  distributed matrix multiplication: Fundamental limits and optimal coding,''
  \emph{IEEE Transactions on Information Theory}, vol.~66, no.~3, pp.
  1920--1933, 2020.

\bibitem{yu2019lagrange}
Q.~Yu, S.~Li, N.~Raviv, S.~M.~M. Kalan, M.~Soltanolkotabi, and S.~A.
  Avestimehr, ``Lagrange coded computing: Optimal design for resiliency,
  security, and privacy,'' in \emph{The 22nd International Conference on
  Artificial Intelligence and Statistics}.\hskip 1em plus 0.5em minus
  0.4em\relax PMLR, 2019, pp. 1215--1225.

\bibitem{9519610}
B.~Hasırcıoğlu, J.~Gómez-Vilardebó, and D.~Gündüz, ``Bivariate
  polynomial coding for efficient distributed matrix multiplication,''
  \emph{IEEE Journal on Selected Areas in Information Theory}, vol.~2, no.~3,
  pp. 814--829, 2021.

\bibitem{10786350}
A.~Fidalgo-Díaz and U.~Martínez-Peñas, ``Distributed matrix multiplication
  with straggler tolerance using algebraic function fields,'' \emph{IEEE
  Transactions on Information Theory}, vol.~71, no.~2, pp. 996--1006, 2025.

\bibitem{9965858}
R.~A. Machado and F.~Manganiello, ``Root of unity for secure distributed matrix
  multiplication: Grid partition case,'' in \emph{2022 IEEE Information Theory
  Workshop (ITW)}, 2022, pp. 155--159.

\bibitem{9523544}
J.~Zhu, Q.~Yan, and X.~Tang, ``Improved constructions for secure multi-party
  batch matrix multiplication,'' \emph{IEEE Transactions on Communications},
  vol.~69, no.~11, pp. 7673--7690, 2021.

\bibitem{9681059}
B.~Hasırcıoǧlu, J.~Gómez-Vilardebó, and D.~Gündüz, ``Bivariate
  polynomial codes for secure distributed matrix multiplication,'' \emph{IEEE
  Journal on Selected Areas in Communications}, vol.~40, no.~3, pp. 955--967,
  2022.

\bibitem{6594847}
H.~Randriambololona, ``An upper bound of singleton type for componentwise
  products of linear codes,'' \emph{IEEE Transactions on Information Theory},
  vol.~59, no.~12, pp. 7936--7939, 2013.

\bibitem{7137642}
D.~Mirandola and G.~Zémor, ``Critical pairs for the product singleton bound,''
  \emph{IEEE Transactions on Information Theory}, vol.~61, no.~9, pp.
  4928--4937, 2015.

\bibitem{7133155}
I.~Cascudo, ``Powers of codes and applications to cryptography,'' in \emph{2015
  IEEE Information Theory Workshop (ITW)}, 2015, pp. 1--5.

\bibitem{ball2012sets}
S.~Ball, ``On sets of vectors of a finite vector space in which every subset of
  basis size is a basis.'' \emph{Journal of the European Mathematical Society
  (EMS Publishing)}, vol.~14, no.~3, p. 733, 2012.

\bibitem{9732990}
N.~Mital, C.~Ling, and D.~Gündüz, ``Secure distributed matrix computation
  with discrete fourier transform,'' \emph{IEEE Transactions on Information
  Theory}, vol.~68, no.~7, pp. 4666--4680, 2022.

\bibitem{9965839}
H.~H. López, G.~L. Matthews, and D.~Valvo, ``Secure matdot codes: a secure,
  distributed matrix multiplication scheme,'' in \emph{2022 IEEE Information
  Theory Workshop (ITW)}, 2022, pp. 149--154.

\bibitem{10206764}
R.~A. Machado, G.~L. Matthews, and W.~Santos, ``Hera scheme: Secure distributed
  matrix multiplication via hermitian codes,'' in \emph{2023 IEEE International
  Symposium on Information Theory (ISIT)}, 2023, pp. 1729--1734.

\bibitem{10619357}
O.~Makkonen, ``Flexible field sizes in secure distributed matrix multiplication
  via efficient interference cancellation,'' in \emph{2024 IEEE International
  Symposium on Information Theory (ISIT)}, 2024, pp. 2562--2567.

\bibitem{e25020266}
\BIBentryALTinterwordspacing
E.~Byrne, O.~W. Gnilke, and J.~Kliewer, ``Straggler- and adversary-tolerant
  secure distributed matrix multiplication using polynomial codes,''
  \emph{Entropy}, vol.~25, no.~2, 2023. [Online]. Available:
  \url{https://www.mdpi.com/1099-4300/25/2/266}
\BIBentrySTDinterwordspacing

\bibitem{10478018}
D.~Karpuk and R.~Tajeddine, ``Modular polynomial codes for secure and robust
  distributed matrix multiplication,'' \emph{IEEE Transactions on Information
  Theory}, vol.~70, no.~6, pp. 4396--4413, 2024.

\bibitem{11653890}
C.~Hofmeister, R.~Tajeddine, A.~Wachter-Zeh, and R.~Bitar, ``On the extension
  of private distributed matrix multiplication schemes to the grid partition,''
  in \emph{2026 IEEE International Symposium on Information Theory (ISIT)},
  2026, pp. 1--6.

\bibitem{10619695}
R.~Cartor, R.~G.~L. D'Oliveira, S.~El~Rouayheb, D.~Heinlein, D.~Karpuk, and
  A.~Sprintson, ``Secure distributed matrix multiplication with
  precomputation,'' in \emph{2024 IEEE International Symposium on Information
  Theory (ISIT)}, 2024, pp. 2568--2573.

\bibitem{9606447}
R.~A. Machado, R.~G.~L. D’Oliveira, S.~E. Rouayheb, and D.~Heinlein, ``Field
  trace polynomial codes for secure distributed matrix multiplication,'' in
  \emph{2021 XVII International Symposium "Problems of Redundancy in
  Information and Control Systems" (REDUNDANCY)}, 2021, pp. 188--193.

\bibitem{8832193}
M.~Kim, H.~Yang, and J.~Lee, ``Private coded matrix multiplication,''
  \emph{IEEE Transactions on Information Forensics and Security}, vol.~15, pp.
  1434--1443, 2020.

\bibitem{8754796}
N.~Raviv and D.~A. Karpuk, ``Private polynomial computation from lagrange
  encoding,'' \emph{IEEE Transactions on Information Forensics and Security},
  vol.~15, pp. 553--563, 2020.

\bibitem{9696353}
J.~Li and C.~Hollanti, ``Private and secure distributed matrix multiplication
  schemes for replicated or mds-coded servers,'' \emph{IEEE Transactions on
  Information Forensics and Security}, vol.~17, pp. 659--669, 2022.

\bibitem{9930803}
Y.~Yao, N.~Liu, W.~Kang, and C.~Li, ``Secure distributed matrix multiplication
  under arbitrary collusion pattern,'' \emph{IEEE Transactions on Information
  Forensics and Security}, vol.~18, pp. 85--100, 2023.

\bibitem{10161614}
O.~Makkonen and C.~Hollanti, ``Secure distributed gram matrix multiplication,''
  in \emph{2023 IEEE Information Theory Workshop (ITW)}, 2023, pp. 192--197.

\bibitem{nomeir2025quantum}
M.~Nomeir, A.~Aytekin, L.~Hu, and S.~Ulukus, ``Quantum private distributed
  matrix multiplication with degree tables,'' \emph{arXiv preprint
  arXiv:2511.23406}, 2025.

\bibitem{11462243}
------, ``Quantum gasp codes for private distributed matrix multiplication,''
  in \emph{ICASSP 2026 - 2026 IEEE International Conference on Acoustics,
  Speech and Signal Processing (ICASSP)}, 2026, pp. 21\,336--21\,340.

\bibitem{11505934}
O.~Makkonen and C.~Hollanti, ``Analog secure distributed matrix
  multiplication,'' \emph{IEEE Transactions on Information Theory}, vol.~72,
  no.~7, pp. 4751--4765, 2026.

\bibitem{ALON_1999}
N.~Alon, ``Combinatorial nullstellensatz,'' \emph{Combinatorics, Probability
  and Computing}, vol.~8, no. 1–2, p. 7–29, 1999.

\end{thebibliography}
\end{document}